\newcommand{\Rmnum}[1]{\expandafter\@slowromancap\romannumeral #1@}
\newtheorem{problem}{Problem}
\newtheorem{definition}{Definition}
\newtheorem{theoremx}{Theorem}
\newtheorem{lemmax}{Lemma}
\newtheorem{remark}{Remark}
\newtheorem{assumption}{Assumption}
\newtheorem{corollary}{Corollary}
\newtheorem*{notation}{Notation}
\def\BibTeX{{\rm B\kern-.05em{\sc i\kern-.025em b}\kern-.08em
		T\kern-.1667em\lower.7ex\hbox{E}\kern-.125emX}}
\begin{document}
	\title{Fast Distributed Algorithm for Aggregative Games in Malicious Environment}
	\author{Kai-Yuan Guo, Yan-Wu Wang, \IEEEmembership{Senior Member, IEEE}, Xiao-Kang Liu, \IEEEmembership{Member, IEEE}, Zhi-Wei Liu, \IEEEmembership{Senior Member, IEEE}
		\thanks{This work is supported by the National Natural Science Foundation of China under Grants 62233006 and 62573202, the Key Research and Development Project of Hubei Province under Grant 2025BAB013, and the Fundamental Research Funds for the Central Universities under grant 2024BRA014.}
		\thanks{Kai-Yuan Guo, Yan-Wu Wang, Xiao-Kang Liu, Zhi-Wei Liu are with the School of Artificial Intelligence and Automation, Huazhong University of Science and Technology, Wuhan 430074, China, and also with the Key Laboratory of Image Processing and Intelligent Control, Huazhong University of Science and Technology, Ministry of Education, Wuhan 430074, China (e-mail: wangyw@hust.edu.cn).}}
	
	\maketitle
	
	\begin{abstract}
		This paper addresses the distributed Nash Equilibrium seeking problem for aggregative games, where legitimate players' decisions are affected by potential malicious players. To describe players' behavior, we introduce a novel heterogeneous trustworthiness probabilistic framework by employing stochastic trust observations. To mitigate the waste of communication and gradient computation, we utilize a compressible unbalanced network information matrix and a multi-round communication mechanism to develop a fast Nash equilibrium seeking algorithm for aggregative games with unbalanced directed networks. By integrating the multi-round communication mechanism and a trustworthiness broadcast mechanism, we embed our fast convergence algorithm into the heterogeneous trustworthiness probabilistic framework, yielding a resilient fast Nash equilibrium seeking algorithm. Theoretical analysis confirms the convergence of the algorithm. Comparative simulations verify the accuracy of our fast convergence algorithm, and validation simulations verify the resilience of the algorithm.
	\end{abstract}
	
	\begin{IEEEkeywords}
		Aggregative games, Nash equilibrium, linear convergence, resilience, stochastic trust observation.
	\end{IEEEkeywords}
	
	\section{INTRODUCTION}
	Aggregative games have recently received increasing attention due to their extensive applications, such as smart grid \cite{saad2012game}, market economy \cite{okuguchi2012theory}, and cyber-physical systems \cite{huang2021cyber}. In these applications, players participate in a competition to reduce their costs influenced by their own strategies as well as the sum of all players’ strategies (aggregate strategy). However, an individual player may only communicate with neighbors through a communication network and be oblivious to the aggregate strategy. \textcolor{blue}{This necessitates the determination of the Nash equilibrium (NE) using the information gathered from neighbors to describe a stable state where no legitimate player can reduce its cost by unilaterally deviating from its strategy.}
	
	Many studies are dedicated to solving the NE seeking problem in aggregative games with balanced and unbalanced networks. To address balanced networks, \cite{huang2022linearly} employs a multi-round communication mechanism to design a distributed algorithm for balanced undirected networks, ensuring linear convergence to the NE. \cite{bianchi2022fast} proposes a proximal-based algorithm that achieves linear convergence without requiring multiple rounds of communication. \cite{zhu2023distributed} applies the multi-round communication mechanism to achieve a linear convergence rate in time-varying and balanced directed networks. \cite{carnevale2024tracking} proposes a tracking-based algorithm for constrained aggregative games with balanced directed networks, also achieving a linear convergence rate. However, many communication networks in practice are unbalanced due to the heterogeneous of communication capabilities of agents \cite{nedic2014distributed}. To address unbalanced networks, \cite{02Fang2022Directed} proposes NE seeking algorithms for aggregative games by utilizing row- and column-stochastic weight matrices. \cite{zhu2022asynchronous} presents a fully distributed algorithm for asynchronous aggregative games with unbalanced networks based on column-stochastic weight matrices. However, neither of \cite{02Fang2022Directed} and \cite{zhu2022asynchronous} can achieve linear convergence.

    \textcolor{blue}{Apart from the network topology, malicious behaviors including attacks and deceptions may also complicate the NE seeking. In case of attacks, \cite{feng2020attack} designs a distributed algorithm for NE seeking in directed networks under Denial-of-Service attacks in continuous time, while \cite{cai2022resilient} introduces an observer-based strategy using extended state observers to filter out false-data-injection attacks on sensor and actuator signals. In case of deceptions, \cite{li2019efficient} shows that an informed player can leverage its “nested information” advantage to influence the behavior of an uninformed player. \cite{tang2024deception} considers a “deceiver” with privileged information by injecting other players’ exploration signals into its own action policy, lead to a “deceptive Nash equilibrium” that favors the deceiver. \cite{gadjov2023algorithm} proposes a resilient NE-seeking algorithm that mitigates deceptions by estimating all players’ strategies and using an observation graph alongside a communication graph. However, \cite{li2019efficient} and \cite{gadjov2023algorithm} rely on assumptions about the game structure, which limits their applicability to aggregative games. How to identify malicious players becomes a key for the NE seeking problem \cite{zhang2020game}.}

	Recently, a novel framework using physical information to characterize trust and classify malicious players has been introduced to consensus \cite{yemini2021characterizing} and optimization \cite{yemini2025resilient}\cite{dayi2024fast} to enhance resilience. In this framework, inter-agent trust observations is derived from the system's physical information to extract neighbors' trustworthiness. The observation is expressed as a trust value indicating the likelihood with which an agent can trust data received from neighbors. \textcolor{blue}{This framework does not rely on repetitive game structures or prior knowledge of malicious environments, and is therefore suitable for seeking the NE of aggregative games in malicious environment.} However, this framework cannot be directly integrated with existing algorithms for aggregative games. Firstly, \cite{yemini2025resilient} presents results based on the homogeneity assumption, with the findings being contingent upon the number of players. Secondly, classifying malicious players requires a certain number of communication rounds, but existing NE seeking algorithms with unbalanced network \cite{02Fang2022Directed}\cite{zhu2022asynchronous} include only one communication round and one gradient computation per iteration, which may result in inefficient use of gradient computations. Thirdly, unlike existing optimization \cite{yemini2025resilient}\cite{dayi2024fast} and consensus \cite{yemini2021characterizing} protocols, aggregative games have requirements for the initial values of players \cite{ye2021differentially}\cite{wang2024differentially}. However, excluding malicious players may result in initial values of legitimate players that do not meet the convergence conditions. In summary, directly combining the trustworthiness probabilistic framework with existing algorithms would result in large algorithmic errors, computation waste and even the inability to achieve NE.


	Inspired by these groundbreaking studies and the gap between the trustworthiness probabilistic framework and aggregative games, this paper aims to develop a resilient fast NE seeking algorithm for aggregative games with unbalanced network. The trustworthiness probabilistic framework is extended to heterogeneity and the maximum classification time is obtained, which is only related to the pair of neighbors with the weakest classification ability. Then, by using a constant/diminishing step-size and a compressible unbalanced network information matrix, a fast NE seeking algorithm with linear/sub-linear convergence rate is proposed, and the convergence error at each iteration is obtained. Finally, through the trustworthiness broadcast mechanism that solves the problem of initial value mismatch, together with the multi-round communication mechanism that reduces the number of gradient computations, a resilient fast NE seeking algorithm is proposed by combining the heterogeneous trustworthiness probabilistic framework and the fast NE seeking algorithm. Theoretical convergence result of the resilient fast NE seeking algorithm is also obtained. Comparative simulations show the accuracy and the convergence rate of the proposed fast convergence algorithm, and confirm the convergence of the resilient fast algorithm. The main contributions of this paper are summarized as follows.
	\begin{enumerate}[(1)]
			\item The probability of the maximum classification time within the trustworthiness probabilistic framework is established under heterogeneous scenarios. Different from \cite{yemini2021characterizing}\cite{yemini2025resilient}, which indicate that the maximum classification time depends on both the number of players and the uniform classification ability, it is revealed for heterogeneous players that the maximum correct classification time is primarily influenced by the pair of neighbors with the weakest classification ability and is independent of the number of players.
			\item A new fast NE seeking algorithm is proposed for aggregative games with unbalanced network. Different from \cite{huang2022linearly, bianchi2022fast, zhu2023distributed, carnevale2024tracking} that focus on balanced networks, and \cite{02Fang2022Directed}\cite{zhu2022asynchronous} that do not specify a convergence rate, the proposed fast convergence algorithm converges to the neighborhood of NE at a rate of 
			$O(a^k)$ with a constant step-size, and converges to the NE rigorously at a rate of $O(\frac{1}{(k+1)^{2r-1}})$ with a diminishing step-size by employing a multi-round communication mechanism.
			\item A resilient fast NE seeking algorithm is proposed by integrating the fast NE seeking algorithm into the heterogeneous trustworthiness probabilistic framework. To address the initial value mismatch problem, a trustworthiness broadcast mechanism is designed. Additionally, the algorithm's convergence is theoretically analyzed using a compressible unbalanced network information matrix.
	\end{enumerate}
	
	\begin{notation}
		Denote the set of real numbers as $\mathbb{R}$ and positive integers as $\mathbb{N}_+$. Denote $1_N \in \mathbb{R}^N$ as a column vector with all elements being $1$, and $e_i \in \mathbb{R}^N$ as a column vector with $i$-th element being $1$ while the others being $0$. $I_N$ is the identity matrix of dimension N. $[x]_i$ is the $i$-th element of vector $x \in \mathbb{R}^N$, and $[A]_{ij}$ is the entry on $i$-th row and $j$-th column of matrix $A$. The superscript $T$ denotes the transpose of vectors and matrices. $A^{-1}$ is the inverse of matrix $A$. $diag(A)$ is a diagonal matrix whose diagonal entries are the same as those of $A$. $\| \cdot \|$ is the Euclidean norm while the 1-norm is $\| \cdot \|_{1}$. $P_{\Omega_i}[\cdot]$ is the Euclidean projection. The partial derivative of $f(x,y)$ with respect to $x$ is denoted as $\nabla_{x} f(x,y)$. $o(\alpha)$ is a series of monomial terms at least quadratic in $\alpha$. $\mathbb{P}\{A\}$ is the possibility of the event $A$ occurring.
	\end{notation}
	
	\section{PRELIMINARIES and PROBLEM FORMULATION}
	\subsection{Aggregative Games with malicious players}
	An aggregative game contains legitimate players $\mathcal{L}=\{1,2,..., N\}$. The strategy sets of legitimate players are $\Omega_i\subseteq\mathbb{R}^{p}$, where $p$ is the dimension of each player's strategy set. Denote $x_i$ as the strategy of player $i$, and $x = [x_1^T, x_2^T, ..., x_{N}^T]^T$ as a stacked vector of legitimate players' strategies, $x\in \Omega\triangleq \prod_{i=1}^{N}\Omega_i$. The cost function of player $i$ is written as $f_i(x_i, \sigma(x))$, where $\sigma(x)=\sum\limits_{i=1}^{N}\phi_i(x_i)$ is the aggregate of legitimate players' strategies and $\phi_i: \Omega_i 
	\to \mathbb{R}$ is only known to player $i$. 
	
	Legitimate player $i\in \mathcal{L}$ attempts to identify malicious players and to minimize its cost function
	\begin{align}\label{Problem}
		\mathop{\min}_{x_i\in\Omega_i} f_i(x_i,\sigma(x)),
	\end{align}
	
	Since each player independently minimizes its own cost, the interaction among players gives rise to a non-cooperative game. Its solution can be described as a NE defined as follows.

 
	\begin{definition}[\cite{04bacsar1998dynamic}]\label{DefinitionNashEquilibrium}
		A strategy profile $x^*=(x_i^*, x_{-i}^*)\in \Omega$ is a Nash equilibrium for a game if for $i\in\mathcal{L}, f_i(x_i^*, x_{-i}^*) \leq f_i(x_i, x_{-i}^*)$, where $x_i\in\Omega_{i}$ and $x_{-i}=[x_1^T,x_2^T,...,x_{i-1}^T,x_{i+1}^T,...,x_{N}^T]^T$.
	\end{definition}
	
    A Nash equilibrium of an aggregative game can then be defined as $x^*=(x_1^*,...,x_N^*)$ where for all $i\in\mathcal{L}, f_i\big(x_i^*, \sigma(x^*)\big) \leq f_i\big(x_i, \sum_{j\neq i}^{N}\phi_{j}(x_j^*)+\phi_i(x_i)\big)$. \textcolor{blue}{Players seek the NE as it naturally captures the self-interested nature of players in the game.} The following assumption is introduced to ensure the existence of NE.
	
	\begin{assumption}[\cite{koshal2016distributed}]\label{AssumptionGameSets}
		For every player $i\in\mathcal{L}$, its strategy set $\Omega_i$ is nonempty, convex, and compact. The functions $f_i(x_i,\sigma(x))$ and $\phi_i(x_i)$ are continuously differentiable about $(x_i,\sigma(x))$ and $x_i$, respectively. Besides, $f_i(x_i,\sigma(x))$ is convex in $x_i$ on $\Omega_i$.
	\end{assumption}
	
	Define $G_i(x_i, \sigma(x)) \triangleq \nabla_{x_i}f_i(x_i, \sigma(x))$, this notation allows us to conduct the pseudo-gradient mapping $G(x) \triangleq [G_1(x_1, \sigma(x))^T, ..., G_N(x_N, \sigma(x))^T]^T$.
	
	\begin{assumption}[\cite{koshal2016distributed}]\label{AssumptionGameGradient}
		The pseudo-gradient mapping $G: \Omega \to \mathbb{R}^{Np}$ is $\mu$-strongly monotone on $\Omega$, i.e., there exists a constant $\mu>0$ such that
		$(G(x)-G(x'))^T(x-x')\geq \mu \|x-x' \|^2, \quad\forall x, x'\in\Omega.$
	\end{assumption}
 
\textcolor{blue}{
 \begin{assumption}[\cite{02Fang2022Directed}\cite{koshal2016distributed}]\label{AssumptionLips}
		There exist positive constants $L_2$, and $L_3$, such that 
		$$\|G(x,\sigma) - G(x,\sigma') \| \leq L_2 \| \sigma - \sigma' \|, \quad\forall \sigma, \sigma' \in \mathbb{R},$$
		$$\|\phi_i(x_i) - \phi_i(x_i') \| \leq L_3 \| x_i - x_i' \|, \quad\forall x_i, x_i' \in \Omega_i, \forall i \in \mathcal{N}.$$
	\end{assumption}
}

	\textcolor{blue}{Assumptions \ref{AssumptionGameSets}-\ref{AssumptionLips} are widely used in studies of aggregative games \cite{huang2022linearly}\cite{carnevale2024tracking}. Assumption \ref{AssumptionGameGradient} claims the uniqueness of the NE $x^*$, which satisfies $x^*=P_{\Omega}[x^*-\alpha G(x^*)], \forall \alpha>0$ \cite{facchinei2003finite}.} In aggregative games, non-cooperation arises from individual rationality. \textcolor{blue}{Hence, once others adopt Nash equilibrium strategies, player $i$ must do the same to avoid a higher cost, which encourages communication among players.} Suppose the legitimate players can communicate iteratively through a fixed unbalanced directed communication network that has presence of malicious players denoted as $\mathcal{M}=\{N+1, N+2, ..., N'\}$. The network is modeled by a digraph $\mathcal{G}=\{ \mathcal{N},\mathcal{E} \}$, comprising a non-empty player set $\mathcal{N}=\mathcal{L}\cup\mathcal{M}$ and an edge set $\mathcal{E} \subseteq \{(i,j),i,j\in\mathcal{N}\}$.  $(j,i)\in\mathcal{E}$ if player $i$ can receive information directly from player $j$. The in-neighbor set of player $i$ is defined as $\mathcal{N}_{i}=\{j|(j,i)\in\mathcal{E}\} \bigcup \{i\}$. The weight matrix $A'=[a_{ij}] \in \mathbb{R}^{N'\times{N'}}$ is defined such that $a_{ij}>0$ if $(j,i)\in\mathcal{E}$, $a_{ij}=0$ otherwise. A directed graph is strongly connected if there is a path from every node to every other node. To ensure every legitimate player participates in the game, the following assumption is introduced.
	
	\begin{assumption}[\cite{02Fang2022Directed}]\label{AssumptionCommunicationGraph}
		The communication sub-digraph $\mathcal{G}_{\mathcal{L}}$ induced by legitimate players is fixed and strongly connected. 
	\end{assumption}

	The purpose of malicious players is similar to that in \cite{yemini2025resilient}, which is to prevent legitimate players from seeking the NE. \textcolor{blue}{Malicious players are assumed to transmit arbitrary values of appropriate dimensions to their neighbors, without adhering to any optimization objective. Consequently, this paper avoids modeling malicious players' strategies via any assumed principle or game-theoretic rationale, even though such behaviors could, in theory, be structured in some scenarios.} Besides, $\mathcal{L}$ and $\mathcal{M}$ are defined for analysis, and the legitimate agents do not know which agents in the system are legitimate or malicious at beginning.

	\subsection{Trustworthiness values}
	Since players cannot discern whether their neighbors are legitimate or malicious, aggregate decisions may be influenced by malicious players. This subsection adopts the trustworthiness probabilistic framework, as introduced in \cite{yemini2021characterizing}, to characterize the behaviors of both legitimate and malicious players. Within this framework, the extraction of trust observation is defined as follows.
	\begin{definition}\label{DefinitionTrustValue}
		For a legitimate player $i\in\mathcal{L}$ and its neighbors $j\in\mathcal{N}_i$, a random variable $\tau_{ij} \in [0,1]$, is defined as the probability that player $j$ is a legitimate neighbor. A random sample of the variable at time $k$ is available to player $i$, and is denoted as $\tau_{ij,k}$.
	\end{definition}
	
	\begin{assumption}\label{AssumptionTrustValue}
		The following statements are valid.
		\begin{enumerate}[1)]
			\item For $i\in\mathcal{L}, j\in\mathcal{N}_i$, there exist constant $E_{ij}\triangleq \mathbb{E}[\tau_{ij,k}] - \frac{1}{2},$ such that 
			$E_{ij} >0, j\in \mathcal{L},$ and
			$E_{ij} <0, j\in \mathcal{M}.$
			\item The observed information $\{\tau_{ij,k}\}_{k}$ is independent for all $k\geq1$ and for all $i \in \mathcal{L}, j\in \mathcal{N}_i$. Besides, for any $i\in\mathcal{L}$ and $j\in\mathcal{M}$, the observed information $\{\tau_{ij,k}\}_{k}$ is identically distributed.
		\end{enumerate}
	\end{assumption}
    \textcolor{blue}{Assumption \ref{AssumptionTrustValue} is made based on \cite[discussion]{yemini2025resilient}, which shows that $\tau_{ij}$ can be obtained in practice by extracting physical information such as wireless signal characteristics \cite{gil2017guaranteeing}}.
	Denote $E_{min} = \min\{E_{ij}, i\in\mathcal{L}, j\in\mathcal{N}_i\}$. Assume that $\tau_{ij,k}>\frac{1}{2}$ indicates a legitimate transmission and $\tau_{ij,k}<\frac{1}{2}$ indicates a malicious transmission in a stochastic sense. Although incorrect classification at iteration $k$ is possible, correct classification is certain in terms of expectation. The objective of this paper is presented in the following problem.

	\begin{problem}
		Propose a resilient fast NE seeking algorithm for aggregative games with unbalanced directed networks, where each legitimate player is able to classify its neighbors as legitimate or malicious during the NE seeking process.
	\end{problem}
	
	\section{Fast NE seeking algorithm}\label{GameSection}

	This section propose a fast NE seeking algorithm for the game without malicious players, inspired by \cite{huang2022linearly}, where the multi-round communication mechanism is proposed to accelerate the NE seeking process with balanced networks. The proposed algorithm adopts this communication mechanism with a constant or diminishing step-size, as detailed in Algorithm \ref{Algorithm 2}.

	\begin{algorithm}
		\caption{The fast NE seeking algorithm with unbalanced network}
		\label{Algorithm 2}
		\begin{algorithmic}[1]
			\STATE\textbf{Input}: $ x_{i,1} \in \Omega_i, v_{i,1} = e_{i},  \sigma_{i,1} = \phi_i(x_{i,1})$.
			\STATE\textbf{Output}: Players' strategies $x_i, \forall i \in\mathcal{L}$.
			\FOR {$k = 1, 2, ..., K$}
			\FOR {each player $i \in \mathcal{L}$}
			\STATE Communicate with all neighbors $\mathcal{N}_i$ as Multi-round Communication Mechanism
			$$\sigma_{i,k,S_k}, v_{i,k,S_k} \leftarrow MRC (\sigma_{i,k}, v_{i,k}).$$
			\STATE Updates variables as the Gradient-based Update Process 
			\begin{align}
				v_{i,k+1}, &x_{i,k+1}, \sigma_{i,k+1} \nonumber\\ 
				\leftarrow &Update(\sigma_{i,k}, \sigma_{i,k,S_k}, v_{i,k,S_k}, x_{i,k}, i'). \nonumber
			\end{align}
			\ENDFOR
			\ENDFOR
			\STATE
			\STATE \textbf{function} Multi-round Communication Mechanism $MRC(\sigma_{i,k}, v_{i,k})$:
			\STATE Set $\sigma_{i,k,0} = \sigma_{i,k}$, $v_{i,k,0} = v_{i,k}$.
			\FOR {$l = 0, 1, ..., S_k-1$}
			\STATE Send $\sigma_{i,k,l}, v_{i,k,l}$ to out-neighbors.
			\STATE Receive $\sigma_{j,k,l}, v_{j,k,l}$ from $j\in\mathcal{N}_{i}$.
			\begin{align}
				&\sigma_{i,k,l+1} = \sum\limits_{j\in\mathcal{N}_{i}} a_{ij,k} \sigma_{j,k,l}, \label{Multi-estimate}\\
				&v_{i,k,l+1} = \sum\limits_{j\in\mathcal{N}_{i}}a_{ij,k}v_{j,k,l}, \label{Multi-eigenvalue}
			\end{align}
			\ENDFOR 
			\STATE \textbf{return} $\sigma_{i,k,S_k}, v_{i,k,S_k}$
			\STATE
			\STATE \textbf{function} Gradient-based Update Process $Update(\sigma_{i,k}, \sigma_{i,k,S_k}, v_{i,k,S_k}, x_{i,k}, i')$:
			\STATE Player $i$ updates its variables
			\begin{align}
				& v_{i,k+1} = v_{i,k,S_k}, \label{gradientbasedupdate1}\\
				&x_{i,k+1} =P_{\Omega_i} [x_{i,k} - \alpha_k G_i(x_{i,k},\sigma_{i,k})], \label{gradientbasedupdate2}\\
				&\sigma_{i,k+1} = \sigma_{i,k,S_k} + \frac{\phi_i(x_{i,k+1})}{[v_{i,k+1}]_{i}} - \frac{\phi_i(x_{i,k})}{[v_{i,k}]_{i}}\label{gradientbasedupdate3},
			\end{align}
			\STATE \textbf{return} $v_{i,k+1}, x_{i,k+1}, \sigma_{i,k+1}$
		\end{algorithmic}
	\end{algorithm}

	In the multi-round communication mechanism, players engage in communication with their neighbors $S_k$ times, where $S_k$ is pre-set based on the step-size. During each communication, player $i$ sends information to its neighbors (including its estimate of the aggregate decisions denoted by $\sigma_{i,k,l}$, and its estimate of the left eigenvalue of the unbalanced network denoted by $v_{i,k,l}$), receives $\sigma_{j,k,l}, v_{j,k,l}$ from neighbors, and calculates their weighted average to obtain intermediate $\sigma_{i,k,S_k}, v_{i,k,S_k}$. Afterward, $i$ will obtain $v_{i,k+1}$, $\sigma_{i,k+1}$ and $x_{i,k+1}$ through the gradient-based update process. \textcolor{blue}{In Algorithm \ref{Algorithm 2}, \eqref{Multi-eigenvalue} and \eqref{gradientbasedupdate1} are used to estimate the weight of each player, a method widely applied in handling unbalanced directed network \cite{chen2024achieving}}, and \eqref{gradientbasedupdate3} is to estimate the aggregate decision formed by legitimate players. The weights $a_{ij,k}$ are constructed as follows.
    \begin{align}\label{weights_nomalicious}
		a_{ij,k} = \left\{
		\begin{aligned}
			&\frac{1}{|\mathcal{N}_{i,k}|}, j\in\mathcal{N}_{i}\\
			&0, j\notin \mathcal{N}_{i}
		\end{aligned}
		\right., 
	\end{align}
    a weight matrix $A\in \mathbb{R}^{N\times{N}}$ can then be obtained for the sub-digraph $\mathcal{G}_{\mathcal{L}}$ induced by legitimate players, \textcolor{blue}{with the properties presented in Lemma \ref{Lemma1} where $\theta$ is related to the number of players and network topology \cite{nedic2014distributed}.}
	
	\begin{lemmax}[\cite{xi2018linear}]\label{Lemma1}
		Suppose that Assumption \ref{AssumptionCommunicationGraph} holds and the weight matrix $A$ is row-stochastic. Denote $V_{k+1} = A^k$, $V_{\infty} = \lim\limits_{k \to \infty}A^{k}$, $\hat{V}_{k} = diag(V_{k})$ and $\hat{V}_{\infty} = diag(V_{\infty})$. There exist constants $\gamma>0$ and $\theta$ satisfying $0<\theta<1$, such that $$\|\hat{V}_k^{-1}-\hat{V}_{\infty}^{-1}\| \leq \rho^2 \gamma \theta^k,$$ where $\rho = \sup_{k}\|\hat{V}_{k}^{-1}\|$,.
	\end{lemmax}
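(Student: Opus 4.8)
The plan is to deduce the bound from the classical fact that powers of a primitive row-stochastic matrix converge geometrically, and then to transport that rate through the two elementary operations involved: extracting the diagonal and inverting it. Throughout I would keep the index shift $V_{k+1}=A^{k}$ (equivalently $\hat V_k=diag(A^{k-1})$) explicit, since it only changes the constants.

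\textbf{Step 1: geometric convergence of $A^{k}$.} Under Assumption \ref{AssumptionCommunicationGraph} the subgraph $\mathcal{G}_{\mathcal{L}}$ is strongly connected, so $A$ is irreducible; and because $i\in\mathcal{N}_i$ the weight $[A]_{ii}$ is positive, so $A$ is aperiodic, hence primitive. Row-stochasticity gives $A1_N=1_N$, and the Perron--Frobenius theorem then yields that $1$ is a simple eigenvalue, every other eigenvalue of $A$ has modulus strictly below $1$, and $A^{k}\to V_\infty=1_N\pi^{T}$ with $\pi>0$ the left Perron vector normalized by $\pi^{T}1_N=1$ (so $[\hat V_\infty]_{ii}=\pi_i$). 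Fixing $\theta$ strictly between the subdominant eigenvalue modulus and $1$, a Jordan-form estimate (or, equivalently, a Dobrushin scrambling-coefficient argument for stochastic matrices) produces a constant $c>0$ with $\|A^{k}-V_\infty\|\le c\,\theta^{k}$ for all $k\ge0$, where $c$ and $\theta$ depend only on the topology and $N$, consistent with the stated dependence of $\theta$ on the number of players and the network topology. Since the lemma is quoted from \cite{xi2018linear}, this step may simply be invoked.

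\textbf{Steps 2--4: transport to the inverse of the diagonal.} For any matrix $M$, $\|diag(M)\|=\max_i|[M]_{ii}|=\max_i|e_i^{T}Me_i|\le\|M\|$; applying this with $M=A^{k-1}-V_\infty$ and using $\hat V_k-\hat V_\infty=diag(A^{k-1}-V_\infty)$ gives $\|\hat V_k-\hat V_\infty\|\le\|A^{k-1}-V_\infty\|\le (c/\theta)\,\theta^{k}$. Next, $[A^{k-1}]_{ii}\ge([A]_{ii})^{k-1}>0$ shows each $\hat V_k$ is invertible, while $[\hat V_k]_{ii}\to\pi_i>0$ shows $\inf_{k}\min_i[\hat V_k]_{ii}>0$, so $\rho=\sup_k\|\hat V_k^{-1}\|<\infty$ and, passing to the limit, $\|\hat V_\infty^{-1}\|\le\rho$. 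Finally, from the identity $\hat V_k^{-1}-\hat V_\infty^{-1}=\hat V_k^{-1}(\hat V_\infty-\hat V_k)\hat V_\infty^{-1}$ (the factors are diagonal, so the order is immaterial) and submultiplicativity,
$$\|\hat V_k^{-1}-\hat V_\infty^{-1}\|\le\|\hat V_k^{-1}\|\,\|\hat V_\infty-\hat V_k\|\,\|\hat V_\infty^{-1}\|\le\rho^{2}\,\frac{c}{\theta}\,\theta^{k},$$
which is the claim with $\gamma=c/\theta$.

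\textbf{Main obstacle.} Everything after Step 1 is routine bookkeeping with submultiplicative norms and the diagonal structure; the real content is Step 1 --- showing the subdominant eigenvalue of $A$ has modulus strictly less than $1$ and converting that into an explicit geometric decay rate $\theta$ with a controllable constant. This is exactly where primitivity of $A$ (guaranteed by strong connectivity together with positive self-weights) is used, and it is the part that \cite{xi2018linear} establishes and that I would cite rather than reprove here.
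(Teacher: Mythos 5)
Your proposal is correct: the paper itself gives no proof of this lemma (it is imported verbatim from the cited reference \cite{xi2018linear}), and your argument — primitivity of the row-stochastic $A$ from strong connectivity plus positive self-weights, geometric convergence $\|A^{k}-V_{\infty}\|\le c\theta^{k}$, the bound $\|diag(M)\|\le\|M\|$, and the identity $\hat{V}_k^{-1}-\hat{V}_{\infty}^{-1}=\hat{V}_k^{-1}(\hat{V}_{\infty}-\hat{V}_k)\hat{V}_{\infty}^{-1}$ with $\rho=\sup_k\|\hat{V}_k^{-1}\|<\infty$ — is exactly the standard derivation underlying that cited result, including the correct handling of the index shift $V_{k+1}=A^{k}$. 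Nothing further is needed.
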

	
	Denote $V_k=[v_{1,k},v_{2,k}...,v_{N,k}]^T$. Define
	$\sigma_k = [\sigma_{1,k}, \sigma_{2,k}, ..., \sigma_{N,k}]^T$ as the column vector of all players' estimate of the aggregate strategies at iteration $k$, $\sigma(x_k) = \sum\limits_{i=1}^N \phi_i(x_{i,k})$ as the true value of aggregate strategy, and $\sigma^*=\sum\limits_{i=1}^N \phi_i(x_i^*)$ as the aggregate NE strategy. Additionally, denote $\Phi_k = [\phi_{1,k}, \phi_{2,k}, ..., \phi_{N,k}]^T$. With these definitions, Algorithm \ref{Algorithm 2} can be written in the following compact forms
	\begin{align}
		&V_{k+1} = A^{S_k}V_{k} \label{AlgorithmUpdate2};\\
		&x_{k+1} = P_{\Omega}[x_k - \alpha_k G(x_k, \sigma_k)] \label{AlgorithmUpdate1};\\
		&\sigma_{k+1} = A^{S_k}\sigma_{k} + \hat{V}_{k+1}^{-1}\Phi_{k+1} - \hat{V}_{k}^{-1}\Phi_{k}. \label{AlgorithmUpdate3}
	\end{align}
	
	According to Assumption \ref{AssumptionTrustValue}-1) and the definition of weight matrix $A$, there is a vector $v\in\mathbb{R}^N$ such that $v^TA=v^T$, $1_N^Tv=1$, and $[v]_i>0$ by the Perron-Frobenius theorem. 
	
	Define $\hat{V}_{\infty}$-weighted Euclidean norm $\|x\|_{\hat{V}_{\infty}} = \sqrt{x^T\hat{V}_{\infty}x}$ and $\hat{V}_{\infty}$-induced norm of matrix $\|A\|_{\hat{V}_{\infty}} = \sup_{x\neq0}\frac{\|Ax\|_{\hat{V}_{\infty}}}{\|x\|_{\hat{V}_{\infty}}}$. By \cite{bianchi2020nash}, there is $\rho(A-V_{\infty})\leq\|A-V_{\infty}\|_{\hat{V}_{\infty}}<1$ and $\sqrt{[v]_{min}} \|x\| \leq \|x\|_{\hat{V}_{\infty}} \leq \sqrt{[v]_{max}} \|x\|$, where $[v]_{min}$ and $[v]_{max}$ are the minimum and maximum component of $v$, respectively. Besides, define $\rho_1 = \|A-V_{\infty}\|_{\hat{V}_{\infty}}$ and $\rho_2 = \|I_N - V_{\infty}\|$.
	Furthermore, under Assumption \ref{AssumptionGameSets}, there exists constants $C$ and $\tilde{C}$ such that $\|x_k\|\leq C$ and $\|\Phi_k\|\leq \tilde{C}$.
	\begin{theoremx}\label{SectionGameTheorem1}
		Suppose that \textcolor{blue}{Assumptions \ref{AssumptionGameSets}, \ref{AssumptionGameGradient}, \ref{AssumptionLips}, and \ref{AssumptionCommunicationGraph} hold.} Let $S_k = S \geq 1$ and  $\alpha_k = \alpha$, $\forall k\in\mathbb{N}_+$, such that $\rho(\Gamma_{\alpha}) = \lambda_{max}(\Gamma_{\alpha})<1$.
		Then the sequence $\{x_k\}_{k\in\mathbb{N}_+}$ generated by Algorithm \ref{Algorithm 2} converges to the neighborhood of NE $x^*$ at a linear rate, i.e., 
		\begin{align}\label{SectionGameTheorem1Result}
			&\|x_{k+1} - x^*\|^2 
   \leq \rho(\Gamma_{\alpha})^k\big[\|x_1-x^*\|^2 + \|\sigma_1 - V_{\infty}\sigma_1\|_{\hat{V}_{\infty}}^2\big] \nonumber\\
   &+D_1 \frac{\theta^2[\rho(\Gamma_{\alpha})^k - (\theta^2)^k]}{\rho(\Gamma_{\alpha}) - (\theta^2)} + D_2\frac{\theta[\rho(\Gamma_{\alpha})^k - (\theta)^k]}{\rho(\Gamma_{\alpha}) - (\theta)} \nonumber\\
			&+ D_3\frac{1-\rho(\Gamma_{\alpha})^k}{1-\rho(\Gamma_{\alpha})}, 
		\end{align}
		where 
		\begin{align}\small
			\Gamma_{\alpha} = \begin{bmatrix}
				1-2\alpha\mu & \frac{\alpha L_2}{\sqrt{[v]_{min}}}\\
				\frac{\alpha L_2}{\sqrt{[v]_{min}}} & (\rho_1^2)^{S} \\
			\end{bmatrix}, \nonumber
		\end{align}
		$D_1 = 4L_2^2\|V_{\infty}\|^2\tilde{C}^2\rho^4\gamma^2\alpha^2+8[v]_{max} \rho_2^2\tilde{C}^2\rho^4\gamma^2$, $D_2 = 4L_2C\|V_{\infty}\|\tilde{C}\rho^2\gamma\alpha+2\rho_1^{S_k}\tilde{M} \sqrt{[v]_{max}}2\rho_2\tilde{C}\rho^2\gamma$, and $D_3 = 2\rho_1^{S}\tilde{M}\sqrt{[v]_{max}}\rho_2\rho L_3M\alpha+2[v]_{max}\rho_2^2\rho^2 L_3^2M^2\alpha^2$.
	\end{theoremx}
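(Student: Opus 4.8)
The plan is to track the coupled evolution of the \emph{optimization error} $\|x_k-x^*\|^2$ and the \emph{aggregate-tracking error} $\|\sigma_k-V_\infty\sigma_k\|_{\hat V_\infty}^2$, show that this pair obeys a two-dimensional linear recursion whose system matrix is exactly $\Gamma_\alpha$ plus an inhomogeneous term that is a combination of $\theta^{2k}$, $\theta^{k}$ and constant contributions, and then unroll that recursion. Throughout I use $\|x_k\|\le C$, $\|\Phi_k\|\le\tilde C$ and a uniform bound $M$ on $\|G(x_k,\sigma_k)\|$ (all from compactness, Assumption \ref{AssumptionGameSets}), so that the projection step in \eqref{AlgorithmUpdate1} gives $\|x_{k+1}-x_k\|\le \alpha M$.

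\textbf{Step 1 (optimization error).} Starting from $x^*=P_\Omega[x^*-\alpha G(x^*,\sigma^*)]$ and \eqref{AlgorithmUpdate1}, I apply non-expansiveness of the projection, expand the square, and split $G(x_k,\sigma_k)-G(x^*,\sigma^*)=\big[G(x_k,\sigma(x_k))-G(x^*,\sigma^*)\big]+\big[G(x_k,\sigma_k)-G(x_k,\sigma(x_k))\big]$: the first bracket yields $-2\alpha\mu\|x_k-x^*\|^2$ by $\mu$-strong monotonicity (Assumption \ref{AssumptionGameGradient}), the second is bounded by $L_2\|\sigma_k-1_N\sigma(x_k)\|$ (Assumption \ref{AssumptionLips}), and the $\alpha^2\|G-G\|^2$ term is absorbed into the constant part using $\|G\|\le M$. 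I then decompose $\|\sigma_k-1_N\sigma(x_k)\|\le \|\sigma_k-V_\infty\sigma_k\|+\|V_\infty\sigma_k-1_N\sigma(x_k)\|$, the first piece being the tracking error (comparable to its $\hat V_\infty$-norm up to $1/\sqrt{[v]_{min}}$) and the second being an $O(\theta^k)$ remainder, because the tracking update \eqref{AlgorithmUpdate3} satisfies the conservation identity $v^T\sigma_{k+1}-v^T\hat V_{k+1}^{-1}\Phi_{k+1}=v^T\sigma_k-v^T\hat V_k^{-1}\Phi_k$ (using $v^TA^{S}=v^T$) and $\|\hat V_k^{-1}-\hat V_\infty^{-1}\|\le\rho^2\gamma\theta^k$ by Lemma \ref{Lemma1}. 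Applying Young's inequality to the cross term then produces $\|x_{k+1}-x^*\|^2\le(1-2\alpha\mu)\|x_k-x^*\|^2+\tfrac{\alpha L_2}{\sqrt{[v]_{min}}}\|\sigma_k-V_\infty\sigma_k\|_{\hat V_\infty}^2+(\text{terms of order }\alpha^2\text{ and }\theta^k)$.

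\textbf{Step 2 (tracking error).} Multiplying \eqref{AlgorithmUpdate3} by $I_N-V_\infty$ and using $V_\infty A^{S}=V_\infty$, $(A^{S}-V_\infty)V_\infty=0$ and $A^{S}-V_\infty=(A-V_\infty)^{S}$, I get $(I_N-V_\infty)\sigma_{k+1}=(A^{S}-V_\infty)(\sigma_k-V_\infty\sigma_k)+(I_N-V_\infty)\big[\hat V_{k+1}^{-1}(\Phi_{k+1}-\Phi_k)+(\hat V_{k+1}^{-1}-\hat V_k^{-1})\Phi_k\big]$. Taking $\hat V_\infty$-norms, $\|A^{S}-V_\infty\|_{\hat V_\infty}\le\|A-V_\infty\|_{\hat V_\infty}^{S}=\rho_1^{S}$ yields, after squaring, the contraction factor $(\rho_1^2)^{S}$ — this is exactly where the multi-round communication mechanism buys the faster rate. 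In the remaining term, $\|\Phi_{k+1}-\Phi_k\|\le L_3\|x_{k+1}-x_k\|\le L_3 M\alpha$ (Assumption \ref{AssumptionLips}) gives the persistent $O(\alpha)$ contribution, and, upon refining $\|x_{k+1}-x_k\|$, the off-diagonal coupling $\tfrac{\alpha L_2}{\sqrt{[v]_{min}}}\|x_k-x^*\|^2$ back into the $\sigma$-row; $\|\hat V_{k+1}^{-1}-\hat V_k^{-1}\|\le 2\rho^2\gamma\theta^k$ (Lemma \ref{Lemma1}) together with $\|\Phi_k\|\le\tilde C$ and $\|I_N-V_\infty\|=\rho_2$ gives the $O(\theta^k)$ pieces. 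Squaring and separating cross terms via Young's inequality leaves $\|\sigma_{k+1}-V_\infty\sigma_{k+1}\|_{\hat V_\infty}^2\le(\rho_1^2)^{S}\|\sigma_k-V_\infty\sigma_k\|_{\hat V_\infty}^2+\tfrac{\alpha L_2}{\sqrt{[v]_{min}}}\|x_k-x^*\|^2+(\text{a combination of }\theta^{2k},\ \theta^k,\ \alpha,\ \alpha^2\text{ terms})$.

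\textbf{Step 3 (assemble and unroll); main obstacle.} Setting $z_k=\big(\|x_k-x^*\|^2,\ \|\sigma_k-V_\infty\sigma_k\|_{\hat V_\infty}^2\big)^T$, Steps 1--2 give entrywise $z_{k+1}\le\Gamma_\alpha z_k+w_k$, where each component of $w_k$ is bounded by a quantity of the form $D_1'\theta^{2k}+D_2'\theta^k+D_3'$ with the constants collecting $L_2,L_3,C,\tilde C,M,\rho,\rho_1^{S},\rho_2,\gamma,[v]_{max},\alpha$ exactly as in $D_1,D_2,D_3$. Since $\Gamma_\alpha$ is symmetric with nonnegative entries and the hypothesis $\rho(\Gamma_\alpha)=\lambda_{max}(\Gamma_\alpha)<1$ holds, $\|\Gamma_\alpha^{m}\|_2=\rho(\Gamma_\alpha)^m$, so iterating gives $z_{k+1}\le\Gamma_\alpha^{k}z_1+\sum_{j=1}^{k}\Gamma_\alpha^{k-j}w_j$; taking the first component and bounding $\|z_1\|_2\le\|z_1\|_1=\|x_1-x^*\|^2+\|\sigma_1-V_\infty\sigma_1\|_{\hat V_\infty}^2$ produces the leading term of \eqref{SectionGameTheorem1Result}, while the three elementary sums $\sum_{j}\rho(\Gamma_\alpha)^{k-j}\theta^{2j}$, $\sum_{j}\rho(\Gamma_\alpha)^{k-j}\theta^{j}$, $\sum_{j}\rho(\Gamma_\alpha)^{k-j}$ evaluate to $\tfrac{\theta^2[\rho(\Gamma_\alpha)^k-(\theta^2)^k]}{\rho(\Gamma_\alpha)-\theta^2}$, $\tfrac{\theta[\rho(\Gamma_\alpha)^k-\theta^k]}{\rho(\Gamma_\alpha)-\theta}$, $\tfrac{1-\rho(\Gamma_\alpha)^k}{1-\rho(\Gamma_\alpha)}$, i.e. the three fractional terms in the statement; as $\rho(\Gamma_\alpha),\theta,\theta^2<1$, the transient decays at the linear rate $\rho(\Gamma_\alpha)^k$ and $\|x_k-x^*\|^2$ settles into the neighborhood of size $D_3/(1-\rho(\Gamma_\alpha))$. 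I expect the crux to be Step 2: establishing $\|V_\infty\sigma_k-1_N\sigma(x_k)\|=O(\theta^k)$ from the conservation law of the tracking scheme and Lemma \ref{Lemma1}, and then selecting the Young's-inequality weights so that the off-diagonal coefficients come out symmetric and equal to $\tfrac{\alpha L_2}{\sqrt{[v]_{min}}}$ (which is what keeps $\rho(\Gamma_\alpha)<1$ for small $\alpha$) while simultaneously sorting the perturbations into the $\theta^{2k}$, $\theta^k$ and constant buckets defining $D_1,D_2,D_3$. A secondary point of care is that $\|\Phi_{k+1}-\Phi_k\|$ decays only like $O(\alpha)$, not to zero — precisely why the guarantee is convergence to a neighborhood of $x^*$ rather than to $x^*$ itself.
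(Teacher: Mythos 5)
Your overall architecture is the same as the paper's: a recursion for $\|x_k-x^*\|^2$ from strong monotonicity plus the $L_2$-split of $G(x_k,\sigma_k)-G(x_k,1_N\sigma(x_k))$, a recursion for $\|\sigma_k-V_\infty\sigma_k\|_{\hat V_\infty}^2$ with contraction $(\rho_1^2)^S$ from $\|A^{S}-V_\infty\|_{\hat V_\infty}\le\rho_1^S$, the $O(\theta^k)$ remainders via the conservation identity (the paper's Lemma \ref{SectionGameLemma2}, equivalent to your $v^T$ identity) and Lemma \ref{Lemma1}, and finally unrolling three geometric sums. The gap is in how $\Gamma_\alpha$ is supposed to appear. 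You claim an \emph{entrywise} recursion $z_{k+1}\le\Gamma_\alpha z_k+w_k$ for the vector of \emph{squared} errors, obtained by Young's inequality on the cross term $2\alpha\frac{L_2}{\sqrt{[v]_{min}}}\|x_k-x^*\|\,\|\sigma_k-V_\infty\sigma_k\|_{\hat V_\infty}$. This cannot produce exactly $\Gamma_\alpha$: any Young weighting $2ab\le\epsilon a^2+\epsilon^{-1}b^2$ necessarily inflates either the $(1,1)$ entry beyond $1-2\alpha\mu$ or the $(1,2)$ entry beyond $\frac{\alpha L_2}{\sqrt{[v]_{min}}}$, so you end up with a different matrix and a different spectral radius than the one in the statement. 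Worse, to make the matrix symmetric you insert a coupling term $\frac{\alpha L_2}{\sqrt{[v]_{min}}}\|x_k-x^*\|^2$ into the $\sigma$-row "upon refining $\|x_{k+1}-x_k\|$"; no such term is derivable from the assumptions (there is no Lipschitz constant of $G$ in $x$, and the natural constants from $\|\Phi_{k+1}-\Phi_k\|\le L_3\|x_{k+1}-x_k\|\le L_3M\alpha$ involve $L_3$, $M$, $\rho_1^S$, not $L_2/\sqrt{[v]_{min}}$). You yourself flag this weight-selection as the crux — and as stated it cannot be resolved.

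The paper's resolution is different and avoids Young's inequality altogether at this point: keep the cross term bilinear, keep the $\sigma$-row \emph{decoupled} from $x$ by first proving a uniform bound $\|\sigma_k-V_\infty\sigma_k\|_{\hat V_\infty}\le\tilde M$ (Lemma \ref{SectionGameLemma3}-1), which is used to linearize the cross terms in the tracking recursion and is where $\tilde M$ enters $D_2,D_3$), then \emph{add} the two recursions and recognize the right-hand side as the quadratic form
\begin{align}
\begin{bmatrix}\|x_k-x^*\| & \|\sigma_k-V_\infty\sigma_k\|_{\hat V_\infty}\end{bmatrix}\Gamma_\alpha\begin{bmatrix}\|x_k-x^*\|\\ \|\sigma_k-V_\infty\sigma_k\|_{\hat V_\infty}\end{bmatrix}\nonumber
\end{align}
in the \emph{unsquared} norms; symmetry of $\Gamma_\alpha$ then gives the bound $\rho(\Gamma_\alpha)\big[\|x_k-x^*\|^2+\|\sigma_k-V_\infty\sigma_k\|_{\hat V_\infty}^2\big]$, and the scalar recursion for the sum is unrolled exactly as in your Step 3. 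So your unrolling and the interpretation of the three fractional terms are fine, but to obtain the stated $\Gamma_\alpha$ and the stated $D_1,D_2,D_3$ you must replace your entrywise/Young construction with this quadratic-form argument (or accept a different matrix and constants than those claimed in the theorem).
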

	
	To prove Theorem \ref{SectionGameTheorem1}, we first present the following three lemmas.
	\begin{lemmax}\label{SectionGameLemma2}
		The following equation holds true
		\begin{align}
			V_{\infty}\sigma_k = V_{\infty}\hat{V}_{k}^{-1}\Phi_k, \forall k\in\mathbb{N}_+.
		\end{align}
	\end{lemmax}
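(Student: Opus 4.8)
The plan is to prove the identity by induction on $k$, exploiting the fact that $V_{\infty}$ annihilates the mixing dynamics of the row-stochastic matrix $A$. The only ingredients needed are the initialization in Algorithm \ref{Algorithm 2}, the compact recursions \eqref{AlgorithmUpdate2}--\eqref{AlgorithmUpdate3}, and the invariance property of $V_{\infty}$ recorded next.

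First I would establish the key invariance. Since $V_{\infty} = \lim_{k\to\infty} A^{k}$ and matrix multiplication is continuous, $V_{\infty}A = \lim_{k\to\infty}A^{k+1} = V_{\infty}$, hence $V_{\infty}A^{m} = V_{\infty}$ for every integer $m\ge 0$ (equivalently $V_{\infty} = 1_N v^T$ with $v^TA = v^T$); in particular $V_{\infty}A^{S_k} = V_{\infty}$ for all $k$. I would also note in passing that each $\hat{V}_k$ is invertible: from \eqref{AlgorithmUpdate2} together with $V_1 = I_N$ (which follows from $v_{i,1} = e_i$), $V_k = A^{S_1+\cdots+S_{k-1}}$ is a power of the primitive matrix $A$ (strongly connected with positive diagonal by \eqref{weights_nomalicious}), so the diagonal entries $[v_{i,k}]_i$ are strictly positive and $\hat{V}_k^{-1}$ is well defined.

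For the base case $k=1$: the initialization gives $\hat{V}_1 = diag(I_N) = I_N$ and $\sigma_{i,1} = \phi_i(x_{i,1})$, i.e. $\sigma_1 = \Phi_1 = \hat{V}_1^{-1}\Phi_1$, so the claim holds (indeed even before premultiplying by $V_{\infty}$). For the inductive step, assume $V_{\infty}\sigma_k = V_{\infty}\hat{V}_k^{-1}\Phi_k$. Premultiplying \eqref{AlgorithmUpdate3} by $V_{\infty}$ and using $V_{\infty}A^{S_k} = V_{\infty}$ yields
\[
V_{\infty}\sigma_{k+1} = V_{\infty}\sigma_k + V_{\infty}\hat{V}_{k+1}^{-1}\Phi_{k+1} - V_{\infty}\hat{V}_k^{-1}\Phi_k ,
\]
and substituting the induction hypothesis cancels the first and third terms, leaving $V_{\infty}\sigma_{k+1} = V_{\infty}\hat{V}_{k+1}^{-1}\Phi_{k+1}$, which closes the induction.

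This lemma carries essentially no real obstacle: the substantive content is the single invariance $V_{\infty}A = V_{\infty}$, after which the recursion telescopes. The only points requiring a little care are the bookkeeping of the initialization (so that $\hat{V}_1^{-1}$ is well defined and the base case is exact) and the observation that the correction term in \eqref{AlgorithmUpdate3} is exactly $\hat{V}_{k+1}^{-1}\Phi_{k+1} - \hat{V}_k^{-1}\Phi_k$ — this is precisely the design feature of \eqref{gradientbasedupdate3} that makes the cancellation go through, and it is what makes $V_{\infty}\sigma_k$ track the true (weighted) aggregate rather than drift.
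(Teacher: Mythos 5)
Your proof is correct: the base case follows from the initialization ($V_1=I_N$, $\sigma_1=\Phi_1$), and the inductive step of premultiplying \eqref{AlgorithmUpdate3} by $V_{\infty}$ and invoking $V_{\infty}A^{S_k}=V_{\infty}$ makes the correction term telescope exactly as intended. The paper omits its own proof (stating it is similar to \cite{02Fang2022Directed}), and your induction is precisely the standard argument that omission refers to, so there is nothing to fault beyond noting that your care with the invertibility of $\hat{V}_k$ (positive diagonals of powers of $A$) is a welcome detail.
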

	The proof of Lemma \ref{SectionGameLemma2}
 is similar to \cite{02Fang2022Directed} and thus is omitted.
	\begin{lemmax} \label{SectionGameLemma3}
		Under Assumptions \ref{AssumptionGameSets}, \ref{AssumptionGameGradient}, \ref{AssumptionLips}, and \ref{AssumptionCommunicationGraph}, 
		\begin{enumerate}[1)]
			\item there exists a constant $\tilde{M}$ such that $\|\sigma_{k+1}-V_{\infty}\sigma_{k+1}\|_{\hat{V}_{\infty}}\leq \tilde{M}, \forall k\in\mathbb{N}_+$.
			\item there exists a constant $M$ such that $\|G(x_k, \sigma_k)\|<M, \forall k\in\mathbb{N}_+$.
			\item $\|\sigma_{k+1}-V_{\infty}\sigma_{k+1}\|_{\hat{V}_{\infty}}$ satisfies following inequality
			\begin{align}
				&\|\sigma_{k+1}-V_{\infty}\sigma_{k+1}\|_{\hat{V}_{\infty}}^2 \nonumber\\
				\leq&(\rho_1^2)^{S_k}\|\sigma_{k}-V_{\infty}\sigma_{k}\|_{\hat{V}_{\infty}}^2 + 2\rho_1^{S_k}\tilde{M}\sqrt{[v]_{max}}\rho_2\rho L_3M\alpha_k \nonumber\\
				&+ 2\rho_1^{S_k}\tilde{M} \sqrt{[v]_{max}}2\rho_2\tilde{C}\rho^2\gamma\theta^k +2[v]_{max}\rho_2^2\rho^2 L_3^2M^2\alpha_k^2 \nonumber\\
				&+ 8[v]_{max}\rho_2^2\tilde{C}^2\rho^4\gamma^2(\theta^2)^k.
			\end{align}
			where $\rho$ is defined in Lemma \ref{Lemma1}.
		\end{enumerate}
	\end{lemmax}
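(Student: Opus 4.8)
The plan is to build everything from a single one-step inequality for the disagreement vector $\sigma_k-V_\infty\sigma_k$ and then bootstrap the three claims in order. First I would left-multiply the compact update \eqref{AlgorithmUpdate3} by $I_N-V_\infty$. Using $A1_N=1_N$ and $v^TA=v^T$, so that $V_\infty=1_Nv^T$ satisfies $AV_\infty=V_\infty A=V_\infty^2=V_\infty$, a trivial induction gives $(I_N-V_\infty)A^{S_k}=A^{S_k}-V_\infty=(A-V_\infty)^{S_k}$ and $(A-V_\infty)^{S_k}V_\infty=0$. Hence
\[
\sigma_{k+1}-V_\infty\sigma_{k+1} = (A-V_\infty)^{S_k}(\sigma_k-V_\infty\sigma_k) + w_k,
\]
where $w_k\triangleq(I_N-V_\infty)\big(\hat{V}_{k+1}^{-1}\Phi_{k+1}-\hat{V}_{k}^{-1}\Phi_{k}\big)$. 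Taking the $\hat{V}_{\infty}$-norm and using $\|A-V_\infty\|_{\hat{V}_{\infty}}=\rho_1$ with sub-multiplicativity yields $\|\sigma_{k+1}-V_\infty\sigma_{k+1}\|_{\hat{V}_{\infty}}\le\rho_1^{S_k}\|\sigma_k-V_\infty\sigma_k\|_{\hat{V}_{\infty}}+\|w_k\|_{\hat{V}_{\infty}}$.

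For part 1) I would use the crude bound $\|w_k\|_{\hat{V}_{\infty}}\le\sqrt{[v]_{max}}\,\rho_2\big(\|\hat{V}_{k+1}^{-1}\|\|\Phi_{k+1}\|+\|\hat{V}_{k}^{-1}\|\|\Phi_{k}\|\big)\le 2\sqrt{[v]_{max}}\,\rho_2\rho\tilde{C}$, which makes the recursion above a contraction ($\rho_1^{S_k}\le\rho_1<1$) driven by a uniformly bounded input; since $\sigma_1$ is fixed, an induction produces the uniform bound $\tilde{M}$. For part 2), Lemma \ref{SectionGameLemma2} gives $V_\infty\sigma_k=V_\infty\hat{V}_{k}^{-1}\Phi_k$, so $\|V_\infty\sigma_k\|\le\|V_\infty\|\rho\tilde{C}$; combined with $\|\sigma_k-V_\infty\sigma_k\|\le\tilde{M}/\sqrt{[v]_{min}}$ from part 1), the estimates $\sigma_k$ stay in a fixed compact ball. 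As $x_k\in\Omega$ is confined to a compact set and $G$ is continuous by Assumption \ref{AssumptionGameSets}, $\|G(x_k,\sigma_k)\|$ is then uniformly bounded by some $M$.

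For part 3) I would square the displayed identity in the $\hat{V}_{\infty}$-inner product and bound the cross term by Cauchy--Schwarz, using $\|\sigma_k-V_\infty\sigma_k\|_{\hat{V}_{\infty}}\le\tilde{M}$ there, to obtain $\|\sigma_{k+1}-V_\infty\sigma_{k+1}\|_{\hat{V}_{\infty}}^2\le(\rho_1^2)^{S_k}\|\sigma_k-V_\infty\sigma_k\|_{\hat{V}_{\infty}}^2+2\rho_1^{S_k}\tilde{M}\|w_k\|_{\hat{V}_{\infty}}+\|w_k\|_{\hat{V}_{\infty}}^2$. The remaining work is a sharp estimate of $w_k$: split $\hat{V}_{k+1}^{-1}\Phi_{k+1}-\hat{V}_{k}^{-1}\Phi_k=\hat{V}_{k+1}^{-1}(\Phi_{k+1}-\Phi_k)+(\hat{V}_{k+1}^{-1}-\hat{V}_{k}^{-1})\Phi_k$. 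Nonexpansiveness of $P_{\Omega}$ in \eqref{AlgorithmUpdate1} gives $\|x_{k+1}-x_k\|\le\alpha_k M$, and the $L_3$-Lipschitz bound of Assumption \ref{AssumptionLips} then gives $\|\Phi_{k+1}-\Phi_k\|\le L_3M\alpha_k$, so the first piece is at most $\rho L_3M\alpha_k$; Lemma \ref{Lemma1} together with $\theta<1$ gives $\|\hat{V}_{k+1}^{-1}-\hat{V}_{k}^{-1}\|\le\rho^2\gamma\theta^{k+1}+\rho^2\gamma\theta^k\le 2\rho^2\gamma\theta^k$, so the second piece is at most $2\tilde{C}\rho^2\gamma\theta^k$. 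Hence $\|w_k\|_{\hat{V}_{\infty}}\le\sqrt{[v]_{max}}\,\rho_2(\rho L_3M\alpha_k+2\tilde{C}\rho^2\gamma\theta^k)$; substituting this into the $2\rho_1^{S_k}\tilde{M}\|w_k\|_{\hat{V}_{\infty}}$ term and applying $(a+b)^2\le 2a^2+2b^2$ to $\|w_k\|_{\hat{V}_{\infty}}^2$ reproduces exactly the five stated terms.

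The main obstacle is the first step: recognizing that the row-stochastic mixing $A^{S_k}$ acting on $\sigma_k$ collapses to the contraction $(A-V_\infty)^{S_k}$ acting on the disagreement vector, which hinges on the commuting/idempotence relations of $V_\infty$ and on the contraction factor $\rho_1<1$ being read off in the $\hat{V}_{\infty}$-norm rather than the Euclidean one. Everything else is bookkeeping — chiefly choosing the particular split of $\hat{V}_{k+1}^{-1}\Phi_{k+1}-\hat{V}_{k}^{-1}\Phi_k$ and the inequality $(a+b)^2\le 2a^2+2b^2$ so that the constants come out exactly as written rather than merely up to a multiplicative factor.
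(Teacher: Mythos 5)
Your proposal is correct and follows essentially the same route as the paper: the same decomposition of $\sigma_{k+1}-V_\infty\sigma_{k+1}$ into the contraction $(A-V_\infty)^{S_k}(\sigma_k-V_\infty\sigma_k)$ plus the drift $(I_N-V_\infty)(\hat{V}_{k+1}^{-1}\Phi_{k+1}-\hat{V}_k^{-1}\Phi_k)$, the same split of the drift term, the bound $\|x_{k+1}-x_k\|\le\alpha_k M$ via nonexpansive projection once $M$ is available, and the same squaring with $(a+b)^2\le 2a^2+2b^2$ to recover the five terms. The only differences are cosmetic (a cruder uniform bound on the drift in part 1, and a direct compactness/continuity argument in part 2 instead of the paper's $L_2$-Lipschitz split), so no further comparison is needed.
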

	\begin{proof}

		
		\textbf{1)}: Based on \eqref{AlgorithmUpdate3} and Lemma \ref{SectionGameLemma2},
		\begin{align}
			&\|\sigma_{k+1}-V_{\infty}\sigma_{k+1}\|_{\hat{V}_{\infty}} \nonumber\\
			=&\|A^{S_k}\sigma_{k}+\hat{V}_{k+1}^{-1}\Phi_{k+1}-\hat{V}_{k}^{-1}\Phi_{k} - V_{\infty}\hat{V}_{k+1}^{-1}\Phi_{k+1}\|_{\hat{V}_{\infty}} \nonumber\\
			\leq&\|A^{S_k}\sigma_{k} - V_{\infty}\hat{V}_{k}^{-1}\Phi_{k}\|_{\hat{V}_{\infty}} \nonumber\\
			&+ \|V_{\infty}\hat{V}_{k}^{-1}\Phi_{k} + \hat{V}_{k+1}^{-1}\Phi_{k+1} - \hat{V}_{k}^{-1}\Phi_{k} - V_{\infty}\hat{V}_{k+1}^{-1}\Phi_{k+1}\|_{\hat{V}_{\infty}}. \nonumber
		\end{align}
		
		For the first term $\|A^{S_k}\sigma_{k} - V_{\infty}\hat{V}_{k}^{-1}\Phi_{k}\|_{\hat{V}_{\infty}}$, by noticing that $V_{\infty}V_{\infty}=V_{\infty}$ and Lemma \ref{SectionGameLemma2}, it satisfies
		\begin{align}\nonumber \label{SectionGameLemma2Step1Inequality1}
			&\|A^{S_k}\sigma_{k} - V_{\infty}\hat{V}_{k}^{-1}\Phi_{k}\|_{\hat{V}_{\infty}} \nonumber\\
			=&\|(A^{S_k}-V_{\infty})(\sigma_{k}-V_{\infty}\sigma_{k})\|_{\hat{V}_{\infty}} \nonumber\\
			\leq&\|(A-V_{\infty})^{S_k}\|_{\hat{V}_{\infty}}\|\sigma_{k}-V_{\infty}\sigma_{k}\|_{\hat{V}_{\infty}} \nonumber\\
			\leq&(\rho_1)^{S_k}\|\sigma_{k}-V_{\infty}\sigma_{k}\|_{\hat{V}_{\infty}}.
		\end{align}
		
		For the second term $\|V_{\infty}\hat{V}_{k}^{-1}\Phi_{k} + \hat{V}_{k+1}^{-1}\Phi_{k+1} - \hat{V}_{k}^{-1}\Phi_{k} - V_{\infty}\hat{V}_{k+1}^{-1}\Phi_{k+1}\|$, it has 
		\begin{align}\label{SectionGameLemma2Step1Inequality2}
			&\|V_{\infty}\hat{V}_{k}^{-1}\Phi_{k} + \hat{V}_{k+1}^{-1}\Phi_{k+1} - \hat{V}_{k}^{-1}\Phi_{k} - V_{\infty}\hat{V}_{k+1}^{-1}\Phi_{k+1}\| \nonumber\\
			=&\|(I_N - V_{\infty})(\hat{V}_{k+1}^{-1}\Phi_{k+1} - \hat{V}_{k}^{-1}\Phi_{k})\| \nonumber\\
			\leq&\rho_2 (\|\hat{V}_{k+1}^{-1}\Phi_{k+1} - \hat{V}_{k+1}^{-1}\Phi_{k}\| + \|\hat{V}_{k+1}^{-1}\Phi_{k} - \hat{V}_{k}^{-1}\Phi_{k}\|) \nonumber\\
			\leq&\rho_2(\|\hat{V}_{k+1}^{-1}\|L_3\|x_{k+1}-x_{k}\| + \|\hat{V}_{k+1}^{-1} - \hat{V}_{k}^{-1}\|\|\Phi_k\|) \nonumber\\
			\leq&\rho_2\rho L_32C + \rho_2\tilde{C}2\rho^2\gamma\theta^k.
		\end{align}
		
		Thus, combining \eqref{SectionGameLemma2Step1Inequality1} and \eqref{SectionGameLemma2Step1Inequality2} gives 
		\begin{align}
			&\|\sigma_{k+1}-V_{\infty}\sigma_{k+1}\|_{\hat{V}_{\infty}} \nonumber\\
			\leq&(\rho_1)^{S_k}\|\sigma_{k}-V_{\infty}\sigma_{k}\|_{\hat{V}_{\infty}} + \sqrt{[v]_{max}}\rho_2\rho L_32C  \nonumber\\ &+\sqrt{[v]_{max}}\rho_2\tilde{C}2\rho^2\gamma\theta^k.\nonumber
		\end{align}
		Since $S_k\geq 1$, $\rho_1^{S_k} \leq \rho_1$,
		\begin{align}
			&\|\sigma_{k+1}-V_{\infty}\sigma_{k+1}\|_{\hat{V}_{\infty}} \nonumber\\
			\leq& \rho_1^k\|\sigma_1-V_{\infty}\sigma_1\| + \sqrt{[v]_{max}}\rho_2\rho L_32C \sum_{l=0}^{k-1} \rho_1^l \nonumber\\
			&+ \sqrt{[v]_{max}}\rho_2\tilde{C}2\rho^2\gamma \sum_{l=1}^{k} \rho_1^{k-l}\theta^l.
		\end{align}
		It is easy to see that $\|\sigma_{k+1}-V_{\infty}\sigma_{k+1}\|_{\hat{V}_{\infty}}$ by a constant $\tilde{M}$.
		
		\textbf{2)}: Since $\sigma(x_k) = \sum_{i=1}^N\phi_i(x_i^k), x_i^k\in\Omega_i$ and $\Omega_i$ is a compact set, $\sigma(x_k)$ is bounded. Then, based on the continuity of $G$, $G(x_k, \sigma(x_k))$ is bounded and it holds that
		\begin{align}\label{SectionGameLemma2Step2Inequality1}
			&\|G(x_k, \sigma_k)\| \nonumber\\
			\leq& \|G(x_k, \sigma_k) - G(x_k, V_{\infty}\sigma_k)\| + \|G(x_k, V_{\infty}\sigma_k)\| \nonumber\\
			\leq& L_2\|\sigma_k-V_{\infty}\sigma_k\| + \|G(x_k, V_{\infty}\sigma_k)\|, 
		\end{align}
		so, from Lemma \ref{SectionGameLemma2}, $\|V_{\infty}\sigma_k\|\leq \|V_{\infty}\| \|\hat{V}_k^{-1}\| \|\Phi_k\|$ holds, and $\|V_{\infty}\sigma_k\|$ is bounded. Then, $G(x_k, V_{\infty}\sigma_k)$ is bounded. Combining Lemma \ref{SectionGameLemma3}-1) with \eqref{SectionGameLemma2Step2Inequality1} gives the conclusion 2).
		
		\textbf{3)}: Because $\|G(x_k, \sigma_k)\|<M$, from \eqref{SectionGameLemma2Step1Inequality2},
		\begin{align}
			&\|V_{\infty}\hat{V}_{k}^{-1}\Phi_{k} + \hat{V}_{k+1}^{-1}\Phi_{k+1} - \hat{V}_{k}^{-1}\Phi_{k} - V_{\infty}\hat{V}_{k+1}^{-1}\Phi_{k+1}\| \nonumber\\
			\leq & \rho_2(\|\hat{V}_{k+1}^{-1}\|L_3\alpha_kM + \|\hat{V}_{k+1}^{-1} - \hat{V}_{k}^{-1}\|\|\Phi_k\|). \nonumber
		\end{align}
		Thus
		\begin{align}
			&\|\sigma_{k+1}-V_{\infty}\sigma_{k+1}\|_{\hat{V}_{\infty}} \nonumber\\
			\leq&(\rho_1)^{S_k}\|\sigma_{k}-V_{\infty}\sigma_{k}\|_{\hat{V}_{\infty}} + \sqrt{[v]_{max}}\rho_2\rho L_3M\alpha_k \nonumber\\
			&+ \sqrt{[v]_{max}}2\rho_2\tilde{C}\rho^2\gamma\theta^k. \nonumber
		\end{align}
		Then,
		\begin{align}\label{SectionGameLemma2Step1Inequality3}
			&\|\sigma_{k+1}-V_{\infty}\sigma_{k+1}\|_{\hat{V}_{\infty}}^2 \nonumber\\
			\leq&(\rho_1^2)^{S_k}\|\sigma_{k}-V_{\infty}\sigma_{k}\|_{\hat{V}_{\infty}}^2 +2\rho_1^{S_k}\|\sigma_{k}-V_{\infty}\sigma_{k}\|_{\hat{V}_{\infty}}\nonumber\\
			&\cdot(\sqrt{[v]_{max}}\rho_2\rho L_3M\alpha_k + \sqrt{[v]_{max}}2\rho_2\tilde{C}\rho^2\gamma\theta^k) \nonumber\\
			&+2(\sqrt{[v]_{max}}\rho_2\rho L_3M\alpha_k)^2 + 2(\sqrt{[v]_{max}}2\rho_2\tilde{C}\rho^2\gamma\theta^k)^2. \nonumber\\
			\leq&(\rho_1^2)^{S_k}\|\sigma_{k}-V_{\infty}\sigma_{k}\|_{\hat{V}_{\infty}}^2 + 2\rho_1^{S_k}\tilde{M}\sqrt{[v]_{max}}\rho_2\rho L_3M\alpha_k \nonumber\\
			&+ 2\rho_1^{S_k}\tilde{M} \sqrt{[v]_{max}}2\rho_2\tilde{C}\rho^2\gamma\theta^k +2[v]_{max}\rho_2^2\rho^2 L_3^2M^2\alpha_k^2 \nonumber\\
			&+ 8[v]_{max}\rho_2^2\tilde{C}^2\rho^4\gamma^2(\theta^2)^k.
		\end{align}
	\end{proof}
	
	\begin{lemmax} \label{SectionGameLemma4}
		Under Assumptions \ref{AssumptionGameSets}, \ref{AssumptionGameGradient}, \ref{AssumptionLips}, and \ref{AssumptionCommunicationGraph}, the sequences $\{\|x_k-x^*\|^2\}_{k\in\mathbb{N}_+}$ generated by Algorithm \ref{Algorithm 2} satisfies:
		\begin{align}
			&\|x_{k+1} - x^*\|^2 \nonumber\\
			\leq&(1-2\alpha_k\mu)\|x_k - x^*\|^2 +4\alpha_k^2M^2 \nonumber\\
			&+2\alpha_kL_2\frac{1}{\sqrt{[v]_{min}}}\|x_k - x^*\|\|\sigma_k - V_{\infty}\sigma_k\|_{\hat{V}_{\infty}} \nonumber\\
			&+ 4\alpha_kL_2C\|V_{\infty}\|\tilde{C}\rho^2\gamma\theta^k.
		\end{align}
	\end{lemmax}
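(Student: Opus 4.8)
The plan is to run the classical inexact projected pseudo‑gradient estimate, treating the local aggregate estimates stacked in $\sigma_k$ as a perturbed copy of the true aggregate $\sigma(x_k)1_N$ and isolating the perturbation. First I would use the variational characterization $x^{*}=P_{\Omega}[x^{*}-\alpha_k G(x^{*})]$, valid for every $\alpha_k>0$ under Assumption \ref{AssumptionGameGradient} \cite{facchinei2003finite}, together with the non‑expansiveness of $P_{\Omega}$ applied to \eqref{AlgorithmUpdate1}, to obtain
\begin{align}
\|x_{k+1}-x^{*}\|^{2}&\leq\|x_k-x^{*}-\alpha_k(G(x_k,\sigma_k)-G(x^{*}))\|^{2}\nonumber\\
&=\|x_k-x^{*}\|^{2}+\alpha_k^{2}\|G(x_k,\sigma_k)-G(x^{*})\|^{2}\nonumber\\
&\quad-2\alpha_k(x_k-x^{*})^{T}(G(x_k,\sigma_k)-G(x^{*})).\nonumber
\end{align}
The quadratic term is disposed of at once: by Lemma \ref{SectionGameLemma3}-2) (the same bound $M$ dominating $\|G(x^{*})\|$ since $x^{*}\in\Omega$), $\|G(x_k,\sigma_k)-G(x^{*})\|\leq 2M$, so this term is at most $4\alpha_k^{2}M^{2}$.

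For the inner‑product term I would decompose $G(x_k,\sigma_k)-G(x^{*})=(G(x_k)-G(x^{*}))+(G(x_k,\sigma_k)-G(x_k,\sigma(x_k)1_N))$, recalling that $G(x_k)=G(x_k,\sigma(x_k)1_N)$ is the exact pseudo‑gradient. Strong monotonicity (Assumption \ref{AssumptionGameGradient}) applied to the first bracket yields the $-2\alpha_k\mu\|x_k-x^{*}\|^{2}$ contribution, while Cauchy--Schwarz and the (coordinate‑wise) $L_2$‑Lipschitz property of Assumption \ref{AssumptionLips} applied to the second bracket give at most $2\alpha_k L_2\|x_k-x^{*}\|\,\|\sigma_k-\sigma(x_k)1_N\|$.

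The crux is then to convert $\|\sigma_k-\sigma(x_k)1_N\|$ into the consensus error $\|\sigma_k-V_{\infty}\sigma_k\|_{\hat{V}_{\infty}}$ appearing in the statement. Since $A$ is row‑stochastic and $\mathcal{G}_{\mathcal{L}}$ is strongly connected, $V_{\infty}=1_Nv^{T}$ and $\hat{V}_{\infty}=diag(v)$, hence $V_{\infty}\hat{V}_{\infty}^{-1}=1_N1_N^{T}$ and $\sigma(x_k)1_N=1_N1_N^{T}\Phi_k=V_{\infty}\hat{V}_{\infty}^{-1}\Phi_k$; combining with Lemma \ref{SectionGameLemma2} ($V_{\infty}\sigma_k=V_{\infty}\hat{V}_k^{-1}\Phi_k$) gives $\sigma_k-\sigma(x_k)1_N=(\sigma_k-V_{\infty}\sigma_k)+V_{\infty}(\hat{V}_k^{-1}-\hat{V}_{\infty}^{-1})\Phi_k$. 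Applying Lemma \ref{Lemma1} ($\|\hat{V}_k^{-1}-\hat{V}_{\infty}^{-1}\|\leq\rho^{2}\gamma\theta^{k}$), the bound $\|\Phi_k\|\leq\tilde{C}$, the norm equivalence $\|z\|\leq\frac{1}{\sqrt{[v]_{min}}}\|z\|_{\hat{V}_{\infty}}$, and $\|x_k-x^{*}\|\leq 2C$ (Assumption \ref{AssumptionGameSets}) turns the second contribution into $\frac{2\alpha_k L_2}{\sqrt{[v]_{min}}}\|x_k-x^{*}\|\,\|\sigma_k-V_{\infty}\sigma_k\|_{\hat{V}_{\infty}}+4\alpha_k L_2 C\|V_{\infty}\|\tilde{C}\rho^{2}\gamma\theta^{k}$. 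Collecting the three contributions gives the claimed inequality. I expect the only genuine obstacle to be the bookkeeping in this last conversion — keeping straight the scalar‑versus‑vector meaning of $G(x_k,\sigma_k)$ so that the $L_2$‑Lipschitz bound is used coordinate‑wise, and verifying the identity $V_{\infty}\hat{V}_{\infty}^{-1}=1_N1_N^{T}$; everything else is the textbook projected‑gradient estimate.
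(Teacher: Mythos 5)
Your proposal is correct and follows essentially the same route as the paper's proof: non-expansiveness of the projection with the fixed-point characterization of $x^*$, strong monotonicity for the $-2\alpha_k\mu\|x_k-x^*\|^2$ term, the bound $\|G(x_k,\sigma_k)-G(x^*,\sigma^*)\|\le 2M$ for the quadratic term, and the split $\sigma_k-1_N\sigma(x_k)=(\sigma_k-V_\infty\sigma_k)+V_\infty(\hat V_k^{-1}-\hat V_\infty^{-1})\Phi_k$ via Lemma \ref{SectionGameLemma2}, $V_\infty\hat V_\infty^{-1}=1_N1_N^T$, and Lemma \ref{Lemma1}, followed by the norm equivalence $\|\cdot\|\le\frac{1}{\sqrt{[v]_{min}}}\|\cdot\|_{\hat V_\infty}$. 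No gaps; the bookkeeping you flag is exactly what the paper does.
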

	\begin{proof}
        From the update rule \eqref{AlgorithmUpdate1},
		\begin{align} \label{SectionGameLemma3Step1Inequality1}
			&\|x_{k+1}-x^*\|^2\nonumber\\
			\leq& \|x_k-x^*\|^2 + \alpha_k^2\|G(x_k, \sigma_k) - G(x^*, \sigma^*)\|^2 \nonumber\\
			&-2\alpha_k(x_k-x^*)^T(G(x_k, \sigma_k) - G(x_k, 1_N\sigma(x_k))) \nonumber\\
			&-2\alpha_k(x_k-x^*)^T(G(x_k, 1_N\sigma(x^k)) - G(x^*, \sigma(x^*))) \nonumber\\
			\leq& (1-2\alpha_k \mu)\|x_k-x^*\|^2 + \alpha_k^2\|G(x_k, \sigma_k) - G(x^*, \sigma^*)\|^2 \nonumber\\
			&-2\alpha_k(x_k - x^*)^T(G(x_k,\sigma_k) - G(x_k, 1_N\sigma(x_k))).
		\end{align}
		The second inequality is derived by Assumption \ref{AssumptionGameGradient}.
		
		It is easy to see that $\|G(x_k, \sigma_k) - G(x^*, \sigma^*)\|^2 \leq 4M^2$. By noticing $V_{\infty}\hat{V}_{\infty}^{-1} = 1_N1_N^T$ and Lemma \ref{SectionGameLemma2}, 
		\begin{align}\label{SectionGameLemma3Step3Inequality1}
			&-2\alpha_k(x_k-x^*)^T(G(x_k, \sigma_k) - G(x_k, \sigma(x_k))) \nonumber\\
			\leq& 2\alpha_k\|x_k-x^*\|\|G(x_k, \sigma_k) - G(x_k, 1_N\sigma(x_k))\| \nonumber\\
			\leq& 2\alpha_k\|x_k-x^*\|L_2(\|\sigma_k - V_{\infty}\sigma_k\| + \|V_{\infty}\sigma_k - 1_N\sigma(x_k)\|) \nonumber\\
			\leq& 2\alpha_k\|x_k-x^*\|L_2(\|\sigma_k - V_{\infty}\sigma_k\| + \|V_{\infty}\|\tilde{C}\rho^2\gamma\theta^k) \nonumber\\
			\leq& 2\alpha_kL_2\|x_k-x^*\|\|\sigma_k - V_{\infty}\sigma_k\| + 4\alpha_kCL_2 \|V_{\infty}\|\tilde{C}\rho^2\gamma\theta^k.
		\end{align}
		
		Taking \eqref{SectionGameLemma3Step3Inequality1} into \eqref{SectionGameLemma3Step1Inequality1}, gives
		\begin{align}
			&\|x_{k+1} - x^*\|^2 \nonumber\\
			\leq& (1-2\alpha_k\mu)\|x_k - x^*\|^2 + 4\alpha_k^2M^2 \nonumber\\
			&+2\alpha_kL_2\|x_k - x^*\|\|\sigma_k - V_{\infty}\sigma_k\| \nonumber\\
			& + 4\alpha_kL_2C\|V_{\infty}\|\tilde{C}\rho^2\gamma\theta^k. \nonumber\\
			\leq&(1-2\alpha_k\mu)\|x_k - x^*\|^2 +4\alpha_k^2M^2 \nonumber\\
			&+2\alpha_kL_2\frac{1}{\sqrt{[v]_{min}}}\|x_k - x^*\|\|\sigma_k - V_{\infty}\sigma_k\|_{\hat{V}_{\infty}} \nonumber\\
			&+ 4\alpha_kL_2C\|V_{\infty}\|\tilde{C}\rho^2\gamma\theta^k.
		\end{align}
	\end{proof}
	
	Now it is ready to prove Theorem \ref{SectionGameTheorem1}.
	\begin{proof}
		Based on Lemmas \ref{SectionGameLemma2} and \ref{SectionGameLemma3}, it has
		\begin{align}
			&\|x_{k+1} - x^*\|^2 + \|\sigma_{k+1} - V_{\infty}\sigma_{k+1}\|_{\hat{V}_{\infty}}^2 \nonumber\\
			\leq &(1-2\alpha_k\mu)\|x_k - x^*\|^2 +4\alpha_k^2M^2 \nonumber\\
			&+ (\rho_1^2)^{S}\|\sigma_k - V_{\infty}\sigma_k\|_{\hat{V}_{\infty}}^2 \nonumber\\
			&+2\alpha_kL_2\frac{1}{\sqrt{[v]_{min}}}\|x_k-x^*\|\|\sigma_k - V_{\infty}\sigma_k\|_{\hat{V}_{\infty}} \nonumber\\
			&+ 4L_2C\|V_{\infty}\|\tilde{C}\rho^2\gamma\theta^k\alpha_k \nonumber\\
			&+2\rho_1^{S}\tilde{M}\sqrt{[v]_{max}}\rho_2\rho L_3M\alpha_k \nonumber\\
			&+ 2\rho_1^{S}\tilde{M} \sqrt{[v]_{max}}2\rho_2\tilde{C}\rho^2\gamma\theta^k\nonumber\\
			&+2[v]_{max}\rho_2^2\rho^2 L_3^2M^2\alpha_k^2 + 8[v]_{max}\rho_2^2\tilde{C}^2\rho^4\gamma^2(\theta^2)^k, \nonumber \\
            \leq & \begin{bmatrix}
				\|x_k-x^*\| & \|\sigma_k - V_{\infty}\sigma_k\|_{\hat{V}_{\infty}}
			\end{bmatrix}
			\Gamma
			\begin{bmatrix}
				\|x_k-x^*\| \\
				\|\sigma_k - V_{\infty}\sigma_k\|_{\hat{V}_{\infty}}
			\end{bmatrix} \nonumber\\
			&+ (\theta^2)^k(8[v]_{max} \rho_2^2\tilde{C}^2\rho^4\gamma^2) \nonumber\\
			&+\theta^k(4L_2C\|V_{\infty}\|\tilde{C}\rho^2\gamma\alpha_k+2\rho_1^{S}\tilde{M} \sqrt{[v]_{max}}2\rho_2\tilde{C}\rho^2\gamma) \nonumber\\
			&+2\rho_1^{S}\tilde{M}\sqrt{[v]_{max}}\rho_2\rho L_3M\alpha_k \nonumber\\
			&+(2[v]_{max}\rho_2^2\rho^2 L_3^2M^2 + 4M^2)\alpha_k^2, \nonumber
		\end{align}
		where \begin{align}\label{Gamma}
			\Gamma_{\alpha_k} = \begin{bmatrix}
				1-2\alpha_k\mu & \alpha_kL_2\frac{1}{\sqrt{[v]_{min}}}\\
				\alpha_kL_2\frac{1}{\sqrt{[v]_{min}}} & (\rho_1^2)^{S} \\
			\end{bmatrix}.
		\end{align}
		Since $\Gamma_{\alpha_k}$ is symmetric matrix, $\rho(\Gamma_{\alpha_k}) = \|\Gamma_{\alpha_k}\|_2 < 1$. Hence,
		\begin{align}\label{SectionGameTheorem1Inequality1}
			&\|x_{k+1} - x^*\|^2 + \|\sigma_{k+1} - V_{\infty}\sigma_{k+1}\|_{\hat{V}_{\infty}}^2 \nonumber\\
			\leq & \rho(\Gamma_{\alpha_k})\big[\|x_k-x^*\|^2 + \|\sigma_k - V_{\infty}\sigma_k\|_{\hat{V}_{\infty}}^2\big] \nonumber\\
			&+ (\theta^2)^k(8[v]_{max} \rho_2^2\tilde{C}^2\rho^4\gamma^2) \nonumber\\
			&+\theta^k(4L_2C\|V_{\infty}\|\tilde{C}\rho^2\gamma\alpha_k+2\rho_1^{S}\tilde{M} \sqrt{[v]_{max}}2\rho_2\tilde{C}\rho^2\gamma) \nonumber\\
			&+2\rho_1^{S}\tilde{M}\sqrt{[v]_{max}}\rho_2\rho L_3M\alpha_k \nonumber\\
			&+(2[v]_{max}\rho_2^2\rho^2 L_3^2M^2 + 4M^2)\alpha_k^2.
		\end{align}
		Denote $D_1 = 8[v]_{max} \rho_2^2\tilde{C}^2\rho^4\gamma^2$, $D_2 = 4L_2C\|V_{\infty}\|\tilde{C}\rho^2\gamma\alpha_k+2\rho_1^{S_k}\tilde{M} \sqrt{[v]_{max}}2\rho_2\tilde{C}\rho^2\gamma$, and $D_3 = 2\rho_1^{S}\tilde{M}\sqrt{[v]_{max}}\rho_2\rho L_3M\alpha_k+(2[v]_{max}\rho_2^2\rho^2 L_3^2M^2 + 4M^2)\alpha_k^2$. 
		
		Since $\alpha_k = \alpha$, 
		\begin{align}
			&\|x_{k+1} - x^*\|^2\nonumber\\
			\leq&\|x_{k+1} - x^*\|^2 + \|\sigma_{k+1} - V_{\infty}\sigma_{k+1}\|_{\hat{V}_{\infty}}^2 \nonumber\\
			\leq & \rho(\Gamma_{\alpha})^k\big[\|x_1-x^*\|^2 + \|\sigma_1 - V_{\infty}\sigma_1\|_{\hat{V}_{\infty}}^2\big] \nonumber\\
			&+D_1\sum_{l=1}^{k}\rho(\Gamma_\alpha)^{k-l}(\theta^2)^l + D_2\sum_{l=1}^{k}\rho(\Gamma_\alpha)^{k-l}\theta^l \nonumber\\
			&+ D_3\sum_{l=1}^{k}\rho(\Gamma_\alpha)^{k-l} \nonumber\\
			\leq& \rho(\Gamma_{\alpha})^k\big[\|x_1-x^*\|^2 + \|\sigma_1 - V_{\infty}\sigma_1\|_{\hat{V}_{\infty}}^2\big] \nonumber\\
			&+D_1\frac{\theta^2[\rho(\Gamma_\alpha)^k - (\theta^2)^k]}{\rho(\Gamma_\alpha) - (\theta^2)} + D_2\frac{\theta[\rho(\Gamma_\alpha)^k - (\theta)^k]}{\rho(\Gamma_\alpha) - (\theta)} \nonumber\\
			&+ D_3\frac{1-\rho(\Gamma_\alpha)^k}{1-\rho(\Gamma_\alpha)}. \nonumber
		\end{align}
	\end{proof}
	
	\textcolor{blue}{Theorem \ref{SectionGameTheorem1} shows that the outputs of the algorithm are within the neighborhood of the NE. The time complexity for a player per iteration is $O(S N^2 + p)$, derived from $S$ rounds of communication each requiring $O(N^2)$ operations to compute weighted averages of $N$-dimensional vectors from up to $N$ neighbors, plus $O(p)$ operations for the local gradient update.}

    \textcolor{blue}{From Theorem \ref{SectionGameTheorem1}, $S$ will affect both the convergence error and rate. Denote $1-\epsilon_2 = 1-2\alpha_k \mu$, $\epsilon_1 = \alpha_k L_2 \frac{1}{\sqrt{[v]_{min}}}$, and $A = (\rho_1^2)^S$ in \eqref{Gamma}, the spectral radius $\rho(\Gamma)$ can be derived as $\rho(\Gamma) = \frac{1}{2} \big[(1-\epsilon_2+A) + \sqrt{(1-\epsilon_2+A)^2 - 4A(1-\epsilon_2) +4\epsilon_1^2}\big] =\frac{1}{2} \big[(1-\epsilon_2+A) + \sqrt{(1-\epsilon_2-A)^2+4\epsilon_1^2}\big]  =\frac{1}{2} \big[(1-\epsilon_2+A) + (1-\epsilon_2-A)\sqrt{1+\frac{4\epsilon_1^2}{(1-\epsilon_2-A)^2}}\big]$. Since $\sqrt{1+x^2} \leq (1+\frac{1}{2}x^2)$, there is  $\sqrt{1+\frac{4\epsilon_1^2}{(1-\epsilon_2-A)^2}} \leq 1+\frac{2\epsilon_1^2}{(1-\epsilon_2-A)^2}$. Thus, $\rho(\Gamma)=\frac{1}{2}\big[(1-\epsilon_2+A) + (1-\epsilon_2-A)(1+\frac{2\epsilon_1^2}{(1-\epsilon_2-A)^2}) \leq 1-\epsilon_2+\frac{\epsilon_1^2}{1-\epsilon_1-A}$, which means 
    \begin{align}\label{DiscussionConvergence}
        \rho(\Gamma_{\alpha}) \leq 1 - 2\alpha \mu + \frac{\alpha^2 L_2^2 \frac{1}{[v]_{min}}}{1-\alpha \frac{L_2}{\sqrt{[v]_{min}}} - (\rho_1^2)^S},
    \end{align}
    \begin{enumerate}
        \item From \eqref{SectionGameTheorem1Result}, $\lim\limits_{k\to\infty}\|x_{k+1} - x^*\|^2 \leq D_3\frac{1}{1-\rho(\Gamma_\alpha)} = \frac{2\rho_1^{S}\tilde{M}\sqrt{[v]_{max}}\rho_2\rho L_3M\alpha+(2[v]_{max}\rho_2^2\rho^2 L_3^2M^2+4M^2)\alpha^2}{1-\rho(\Gamma_\alpha)} \leq \frac{\rho_1^{S}\tilde{M}\sqrt{[v]_{max}}\rho_2\rho L_3M}{\mu} $ with sufficiently small $\alpha$. This means that with a larger communication round per iteration $S$, the theoretical upper bound becomes smaller.
        \item Denote the total number of communication instances as $n$ (assuming $n = kS$, where $k$ is the total number of iterations.). From \eqref{SectionGameTheorem1Result}, the convergence rate with respect to $n$ can be expressed as $\rho(\Gamma_{\alpha})^{1/S}$, which is influenced by the game's  parameters, the network topology, and the step-size $\alpha$. From \eqref{DiscussionConvergence} the upper bound of $\rho(\Gamma_{\alpha})^{1/S}$ can be approximated as $[a + \frac{b}{c - d^S}]^{1/S}$, where $a = 1 - 2\alpha \mu, b = \alpha^2 L_2^2 \frac{1}{[v]_{min}}, c = 1-\alpha \frac{L_2}{\sqrt{[v]_{min}}}, d = (\rho_1^2)$. Therefore, the convergence rate can either increase monotonically with $S$, or first decreasing and then increasing with $S$, indicating a trade-off between convergence rate and $S$.
    \end{enumerate}
}

	\begin{remark}
		\begin{enumerate}
			\item \cite{huang2022linearly}\cite{zhu2023distributed}\cite{carnevale2024tracking} establish results on balanced networks, and \cite{huang2022linearly}\cite{zhu2023distributed} also require global information to determine the minimum communication rounds of per iteration. In contrast, Algorithm \ref{Algorithm 2} establishes on unbalanced directed network and achieves fully distributed by allowing for arbitrary communication rounds, i.e., $S \geq 1$.
			\item  \textcolor{blue}{\cite{02Fang2022Directed}\cite{zhu2022asynchronous} establish results on unbalanced networks with a diminishing step-size, but neither provides a clear convergence rate. In contrast, Algorithm \ref{Algorithm 2} uses a constant step-size and incorporates a compressible unbalanced network information matrix $\Gamma_{\alpha}$, thereby achieves linear convergence.}
		\end{enumerate}
	\end{remark}
	
	Algorithm \ref{Algorithm 2} converges only to the neighborhood of the NE when using constant step-size. However, by choosing appropriate step-size and communication rounds, it can also achieve rigorous convergence with sub-linear convergence rate as shown in the following corollary.
	
	\begin{corollary}\label{SectionGameCorollary1}
		Suppose that Assumptions \ref{AssumptionGameSets}, \ref{AssumptionGameGradient}, \ref{AssumptionLips}, and \ref{AssumptionCommunicationGraph} hold. Let step-size $\alpha_k = \frac{1}{(k+1)^r}, \forall k\in\mathbb{N}_+$, $\frac{1}{2}<r<1$. Let $S_k \geq \frac{\ln\frac{1}{2}}{2\ln\rho_1}$.
		Then, the sequence $\{x_k\}_{k\in\mathbb{N}_+}$ generated by Algorithm \ref{Algorithm 2} converges to the NE $x^*$ at a rate of $O(\frac{1}{(k+1)^{2r-1}})$.
	\end{corollary}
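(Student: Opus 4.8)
The plan is to decouple the analysis into two scalar recursions — one for the aggregation error $b_k \triangleq \|\sigma_k - V_{\infty}\sigma_k\|_{\hat{V}_{\infty}}^2$ and one for the optimization error $a_k \triangleq \|x_k - x^*\|^2$ — and to treat them in that order. The coupled Lyapunov matrix $\Gamma_{\alpha_k}$ of Theorem \ref{SectionGameTheorem1} is not directly useful here, since $\rho(\Gamma_{\alpha_k}) \to 1$ as $\alpha_k \to 0$, so no uniform contraction is available; instead the diminishing-step-size structure must be exploited term by term.

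First I would note that the hypothesis $S_k \ge \frac{\ln\frac12}{2\ln\rho_1}$ (recall $0<\rho_1<1$, hence $\ln\rho_1<0$) is exactly what guarantees $(\rho_1^2)^{S_k} \le \frac12$, so Lemma \ref{SectionGameLemma3}-3) becomes $b_{k+1} \le \tfrac12 b_k + c_1\alpha_k + c_2\theta^k + c_3\alpha_k^2$ for suitable constants $c_1,c_2,c_3$ collecting the factors $\rho_1^{S_k}\tilde{M}\sqrt{[v]_{max}}\rho_2\rho L_3 M$, etc. Unrolling this contraction and using that $\alpha_k=(k+1)^{-r}$ is slowly varying — so $\sum_{l=1}^{k}(\tfrac12)^{k-l}\alpha_l = O(\alpha_k)$, while the $\theta^k$ and $\alpha_k^2$ contributions are $o(\alpha_k)$ — yields $b_k = O(\alpha_k)$, i.e. $\|\sigma_k - V_{\infty}\sigma_k\|_{\hat{V}_{\infty}} = O(\sqrt{\alpha_k}) = O((k+1)^{-r/2})$.

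Next I would substitute this into Lemma \ref{SectionGameLemma4}. The one delicate term is the cross term $2\alpha_k L_2 \tfrac{1}{\sqrt{[v]_{min}}}\|x_k-x^*\|\,\|\sigma_k - V_{\infty}\sigma_k\|_{\hat{V}_{\infty}}$, which is $O(\alpha_k^{3/2})\|x_k-x^*\|$; applying Young's inequality with the step-dependent weight $\mu\alpha_k$ bounds it by $\mu\alpha_k\|x_k-x^*\|^2 + O(\alpha_k^2)$, so that the strong-monotonicity drift $-2\alpha_k\mu\|x_k-x^*\|^2$ absorbs it with room to spare. Collecting this with $4\alpha_k^2 M^2$ and the geometric remainder $4\alpha_k L_2 C\|V_{\infty}\|\tilde{C}\rho^2\gamma\theta^k = o(\alpha_k^2)$ gives, for all sufficiently large $k$, the scalar recursion $a_{k+1} \le (1-\mu\alpha_k)a_k + O(\alpha_k^2)$. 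A standard lemma for recursions of the form $a_{k+1}\le(1-c\alpha_k)a_k + O(\alpha_k^2)$ with $\alpha_k=(k+1)^{-r}$ and $\tfrac12<r<1$ then yields $a_k = O(\alpha_k) = O((k+1)^{-r})$; since $r \ge 2r-1$ for $r\le1$, this implies the claimed rate $\|x_k-x^*\|^2 = O((k+1)^{-(2r-1)})$, the finitely many small-$k$ iterates affecting only the constant because $\|x_k\|\le C$ and $x^*$ is fixed.

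I expect the main obstacle to be the first step: showing that the geometric contraction of $b_k$ survives a forcing term that is only $O(\alpha_k)$ (not $O(\alpha_k^2)$), and yet still produces an aggregation error decaying fast enough — $O(\sqrt{\alpha_k})$ — that, after Young's inequality, the cross term in Lemma \ref{SectionGameLemma4} is genuinely $O(\alpha_k^2)$ rather than merely $O(\alpha_k)$, which would otherwise prevent $a_k$ from vanishing. The use of the $k$-dependent Young weight $\mu\alpha_k$, rather than a constant one, is what makes the residual perturbation quadratic in $\alpha_k$ and hence compatible with the standard diminishing-step-size lemma.
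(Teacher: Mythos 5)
Your proposal is correct, but it follows a genuinely different route from the paper's proof. The paper keeps the two errors coupled: it combines Lemma \ref{SectionGameLemma4} with a squared form of Lemma \ref{SectionGameLemma3}-3) into a $2\times 2$ matrix recursion with $\Gamma'_{\alpha_k}$ (whose $(2,2)$ entry $2(\rho_1^2)^{S_k}\le 1$ is what the condition on $S_k$ is used for), bounds $\rho(\Gamma'_{\alpha_k})\le 1-2\mu\alpha_k+o(\alpha_k)$ by a Taylor expansion, then crudely replaces this factor by $1-\frac{1}{k+1}$ for large $k$ so that $\prod_{m=l+1}^{k}\bigl(1-\frac{1}{m+1}\bigr)=\frac{l+1}{k+1}$, and evaluates the resulting convolution sums explicitly; the dominant term $\sum_l \frac{l+1}{k+1}(l+1)^{-2r}$ is what produces the exponent $2r-1$. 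You instead decouple the recursions sequentially: the $S_k$ condition gives the contraction factor $(\rho_1^2)^{S_k}\le\frac12$ directly in Lemma \ref{SectionGameLemma3}-3) (whose linear-in-$\alpha_k$ forcing is legitimate thanks to the uniform bound $\tilde M$ of Lemma \ref{SectionGameLemma3}-1)), so unrolling gives $\|\sigma_k-V_\infty\sigma_k\|_{\hat V_\infty}^2=O(\alpha_k)$; feeding this into Lemma \ref{SectionGameLemma4} and splitting the cross term by Young's inequality with weight $\mu\alpha_k$ yields $a_{k+1}\le(1-\mu\alpha_k)a_k+O(\alpha_k^2)$, and the standard diminishing-step-size (Chung-type) lemma gives $a_k=O(\alpha_k)=O((k+1)^{-r})$, which implies the stated $O((k+1)^{-(2r-1)})$ since $2r-1<r$. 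Your analysis is therefore not only valid but sharper than the paper's: by retaining the full $1-\mu\alpha_k$ drift rather than downgrading it to $1-\frac{1}{k+1}$, you obtain the exponent $r$ instead of $2r-1$ (the paper's weakening is, however, what lets it quote an explicit per-iteration error bound \eqref{SectionCro}, which is reused later in Theorem \ref{SectionTheorem3}). One small remark: your opening claim that the coupled matrix is ``not directly useful'' because $\rho(\Gamma_{\alpha_k})\to1$ is slightly overstated — the paper does use it, with a time-varying contraction estimate — but this does not affect the correctness of your argument; the only cosmetic gap is that your $b_k$-recursion should also carry the $(\theta^2)^k$ forcing term, which is harmlessly absorbed into the $\theta^k$ term.
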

	\begin{proof}
		By using \eqref{SectionGameLemma2Step1Inequality3},
		\begin{align}
			&\|\sigma_{k+1}-V_{\infty}\sigma_{k+1}\|_{\hat{V}_{\infty}}^2 \nonumber\\
			\leq&2(\rho_1^2)^{S_k}\|\sigma_{k}-V_{\infty}\sigma_{k}\|_{\hat{V}_{\infty}}^2 +2\big[2(\sqrt{[v]_{max}}\rho_2\rho L_3M\alpha_k)^2 \nonumber\\
			&+ 2(\sqrt{[v]_{max}}2\rho_2\tilde{C}\rho^2\gamma\theta^k)^2\big]. \nonumber
		\end{align}
		Besides, by using Lemma \ref{SectionGameLemma4}, 
		\begin{align}
			&\|x_{k+1} - x^*\|^2 + \|\sigma_{k+1} - V_{\infty}\sigma_{k+1}\|_{\hat{V}_{\infty}}^2 \nonumber\\
			\leq &(1-2\alpha_k\mu)\|x_k - x^*\|^2 \nonumber\\
			&+ 2(\rho_1^2)^{S_k}\|\sigma_k - V_{\infty}\sigma_k\|_{\hat{V}_{\infty}}^2 \nonumber\\
			&+2\alpha_kL_2\frac{1}{\sqrt{[v]_{min}}}\|x_k-x^*\|\|\sigma_k - V_{\infty}\sigma_k\|_{\hat{V}_{\infty}} \nonumber\\
			& + 4L_2C\|V_{\infty}\|\tilde{C}\rho^2\gamma\theta^k\alpha_k  + 4M^2\alpha_k^2\nonumber\\
			&+4[v]_{max}\rho_2^2\rho^2 L_3^2M^2\alpha_k^2 + 16[v]_{max}\rho_2^2\tilde{C}^2\rho^4\gamma^2(\theta^2)^k \nonumber\\
			=& \begin{bmatrix}
				\|x_k-x^*\| & \|\sigma_k - V_{\infty}\sigma_k\|_{\hat{V}_{\infty}}
			\end{bmatrix}
			\Gamma'_{\alpha_k}
			\begin{bmatrix}
				\|x_k-x^*\| \\
				\|\sigma_k - V_{\infty}\sigma_k\|_{\hat{V}_{\infty}}
			\end{bmatrix} \nonumber\\
			& + 4L_2C\|V_{\infty}\|\tilde{C}\rho^2\gamma\theta^k\alpha_k + 4M^2\alpha_k^2 \nonumber\\
			&+4[v]_{max}\rho_2^2\rho^2 L_3^2M^2\alpha_k^2 + 16[v]_{max}\rho_2^2\tilde{C}^2\rho^4\gamma^2(\theta^2)^k \nonumber\\
			\leq & \rho(\Gamma'_{\alpha_k})\big[\|x_k-x^*\|^2 + \|\sigma_k - V_{\infty}\sigma_k\|_{\hat{V}_{\infty}}^2\big] \nonumber\\
			&+(\theta^2)^k(16[v]_{max}\rho_2^2\tilde{C}^2\rho^4\gamma^2) \nonumber\\
			&+\theta^k(4L_2C\|V_{\infty}\|\tilde{C}\rho^2\gamma\alpha_k) \nonumber\\
			&+\alpha_k^2(4[v]_{max}\rho_2^2\rho^2 L_3^2M^2 + 4M^2), \nonumber
		\end{align}
		where
		\begin{align}
			\Gamma'_{\alpha_k} = \begin{bmatrix}
				1-2\alpha_k\mu & \alpha_kL_2\frac{1}{\sqrt{[v]_{min}}}\\
				\alpha_kL_2\frac{1}{\sqrt{[v]_{min}}} & 2(\rho_1^2)^{S_k} 
			\end{bmatrix}. \nonumber
		\end{align}
		Denote $d_1 = 16[v]_{max}\rho_2^2\tilde{C}^2\rho^4\gamma^2$, $d_2 = 4L_2C\|V_{\infty}\|\tilde{C}\rho^2\gamma\alpha_1$, and $d_3 = 4[v]_{max}\rho_2^2\rho^2 L_3^2M^2 + 4M^2$.
		
		By explicit computation of $\rho(\Gamma'_{\alpha_k})$ and Taylor expansion at $ \alpha_k = 0$, $\rho(\Gamma'_{\alpha_k})\leq 1-2\mu \alpha_k +o(\alpha_k)$ holds.
		Based on \eqref{SectionGameTheorem1Inequality1}, it can be derived that
		\begin{align}\label{SectionGameCorollary1Inequallity1}
			&\|x_{k+1} - x^*\|^2 \nonumber\\
			\leq& \big[\prod_{l=1}^{k}(1-2\mu\alpha_k+o(\alpha_k))\big]\big[\|x_1-x^*\|^2 + \|\sigma_1 - V_{\infty}\sigma_1\|^2\big] \nonumber\\
			&+d_1\sum_{l=1}^{k}\big[\prod_{m=l+1}^{k}(1-2\mu\alpha_m+o(\alpha_m))\big](\theta^2)^l \nonumber\\
			&+ d_2\sum_{l=1}^{k}\big[\prod_{m=l+1}^{k}(1-2\mu\alpha_m+o(\alpha_m))\big]\theta^l \nonumber\\
			&+ d_3\sum_{l=1}^{k}\big[\prod_{m=l+1}^{k}(1-2\mu\alpha_m+o(\alpha_m))\big]\alpha_l^2.
		\end{align}
		
		Because of $\frac{1}{2} < r < 1$, there is a constant $K_0$ such that $2\mu\alpha_k(k+1)+o(\alpha_k)(k+1)\geq1$ when $k\geq K_0$ . Thus, $1-2\mu \alpha_k +o(\alpha_k)\leq 1-\frac{1}{k+1}$ when $k\geq K_0$. For simplicity assume $K_0=1$. Thus, \eqref{SectionGameCorollary1Inequallity1} can be written as
		\begin{align}
			&\|x_{k+1} - x^*\|^2 \nonumber\\
			\leq& \big[\prod_{l=1}^{k}(1-\frac{1}{k+1})\big]\big(\|x_1-x^*\|^2 + \|\sigma_1 - V_{\infty}\sigma_1\|^2\big) \nonumber\\
			&+d_1\sum_{l=1}^{k}\big[\prod_{m=l+1}^{k}(1-\frac{1}{m+1})\big](\theta^2)^l \nonumber\\
			&+ d_2\sum_{l=1}^{k}\big[\prod_{m=l+1}^{k}(1-\frac{1}{m+1})\big]\theta^l \nonumber\\
			&+ d_3\sum_{l=1}^{k}\big[\prod_{m=l+1}^{k}(1-\frac{1}{m+1})\big]\frac{1}{(k+1)^{2r}}. \nonumber
		\end{align}
		Since $\prod_{m=l+1}^{k}\big(1-\frac{1}{m+1}\big) = \frac{l+1}{k+1}$, the following equations hold.
		\begin{align}
			\textcircled{\scriptsize{1}}&\prod_{l=1}^{k}(1-\frac{1}{k+1}) = \frac{1}{k+1}, \nonumber\\
			\textcircled{\scriptsize{2}}&\sum_{l=1}^{k}\big[\prod_{m=l+1}^{k}(1-\frac{1}{m+1})\big](\theta^2)^l = \sum_{l=1}^{k}\frac{l+1}{k+1}(\theta^2)^l \nonumber\\
			=& \frac{1}{k+1}\frac{\theta^2}{1-\theta^2} + \frac{1}{k+1}\frac{\theta^2(1-(\theta^2)^k)}{(1-\theta^2)^2} - \frac{(\theta^2)^{k+1}}{1-\theta^2},\nonumber \\
			\textcircled{\scriptsize{3}}&\sum_{l=1}^{k}\big[\prod_{m=l+1}^{k}(1-\frac{1}{m+1})\big]\theta^l = \sum_{l=1}^{k}\frac{l+1}{k+1}\theta^l \nonumber\\
			=& \frac{1}{k+1}\frac{\theta}{1-\theta} + \frac{1}{k+1}\frac{\theta(1-\theta^k)}{(1-\theta)^2} - \frac{\theta^{k+1}}{1-\theta},\nonumber\\
			\textcircled{\scriptsize{4}}&\sum_{l=1}^{k}\big[\prod_{m=l+1}^{k}(1-\frac{1}{m+1})\big]\frac{1}{(l+1)^{2r}} = \sum_{l=1}^{k}\frac{l+1}{k+1}\frac{1}{(l+1)^{2r}} \nonumber\\
			\leq& \frac{1}{k+1}\sum_{l=1}^{k}\frac{1}{(l+1)^{2r-1}} \leq \frac{1}{k+1} \int_{0}^{k} \frac{1}{(x+1)^{2r-1}} dx \nonumber\\
			=& \frac{1}{k+1} \big(\frac{(k+1)^{2-2r}-1}{2-2r}\big) = \frac{1}{2-2r}\big[\frac{1}{(k+1)^{2r-1}} - \frac{1}{k+1}\big]. \nonumber
		\end{align}
		Thus, 
		\begin{align}\label{SectionCro}
			&\|x_{k+1} - x^*\|^2 \nonumber\\
			\leq& \big(\|x_1-x^*\|^2 + \|\sigma_1 - V_{\infty}\sigma_1\|^2\big)\frac{1}{k+1} \nonumber\\
			&+d_1\big[\frac{1}{k+1}\frac{\theta^2}{1-\theta^2} + \frac{1}{k+1}\frac{\theta^2(1-(\theta^2)^k)}{(1-\theta^2)^2} - \frac{(\theta^2)^{k+1}}{1-\theta^2}\big] \nonumber\\
			&+ d_2\big[\frac{1}{k+1}\frac{\theta}{1-\theta} + \frac{1}{k+1}\frac{\theta(1-\theta^k)}{(1-\theta)^2} - \frac{\theta^{k+1}}{1-\theta}\big] \nonumber\\
			&+ d_3\frac{1}{2-2r}\big[\frac{1}{(k+1)^{2r-1}} - \frac{1}{k+1}\big].
		\end{align}
		As $\frac{1}{2}<r<1$, $0<2r-1< 1$. Hence, $\prod_{l=1}^{k}(1-\frac{1}{k+1}) \to 0$, $\sum_{l=1}^{k}\big[\prod_{m=l+1}^{k}(1-\frac{1}{m+1})\big](\theta^2)^l \to 0$, $\sum_{l=1}^{k}\big[\prod_{m=l+1}^{k}(1-\frac{1}{m+1})\big]\theta^l \to 0$,  and $\sum_{l=1}^{k}\big[\prod_{m=l+1}^{k}(1-\frac{1}{m+1})\big]\frac{1}{(k+1)^{2r}} \to 0$. So $\|x_{k+1} - x^*\|^2 \to 0$ at a speed of $O(\frac{1}{(k+1)^{2r-1}})$.
	\end{proof}
	
	\begin{remark}
		Different from \cite{02Fang2022Directed} where convergence to the NE is analyzed via series convergence, Corollary \ref{SectionGameCorollary1} not only confirms rigorous convergence but also quantifies the convergence error at iteration $k$. This is achieved by constructing a compressible unbalanced network information matrix $\Gamma_{\alpha_k}$, and applying the Taylor expansion to the norm of this compressive matrix.
	\end{remark}
	
	The two explicit squared errors \eqref{SectionGameTheorem1Result} and \eqref{SectionCro} in Theorem \ref{SectionGameTheorem1} and Corollary \ref{SectionGameCorollary1} are pivotal for establishing the convergence of the resilient fast NE seeking algorithm for the game with malicious players in the following section.

	\section{Resilient NE seeking Algorithm}

    This section extend the fast NE seeking algorithm to resilient by incorporating the trustworthiness probabilistic framework. Specifically, this section introduces the classification and isolation procedure, the trustworthiness broadcast mechanism, and the resilient fast NE seeking algorithm, together with their theoretical analysis and convergence results.

	\subsection{The classification and isolation of players}\label{ClassificationSection}
	
	This subsection presents how correct classification is achieved within the heterogeneous trustworthiness probabilistic framework.
	
	In \cite{yemini2021characterizing}\cite{yemini2025resilient}, player $i$ performs one extraction of $\tau_{ij}$ and one gradient calculation in each iteration. To reduce the number of gradient calculations and inspired by the design of delay updating setting in \cite{yemini2025resilient} and the multi-round communication mechanism in \cite{jakovetic2014fast}, consider that a player performs $S_k$ extractions per iteration as trust values, and denote these extractions as $\tau_{ij,k,l}, l= 0, 1, ..., S_k-1$. The trustworthiness $\beta_{ij,k}$ is the sum over a history of the trust observations $\tau_{ij,k,l}$, i.e.
	\begin{align}
		\beta_{ij,k+1} = \sum_{m=1}^{k} \sum_{l=0}^{S_m-1} (\tau_{ij,m,l}-\frac{1}{2}),
	\end{align}
	$\beta_{ij,k}\geq0$ implies that $i$ classifies $j$ as a legitimate neighbor, conversely, $i$ classifies $j$ as a malicious neighbor. Define trust in-neighbors $\mathcal{N}_{i,k} =\{j\in\mathcal{N}_i:\beta_{ij,k}\geq0\}$. If $i$ classifies $j$ as a malicious player, $i$ will still accept information from $j$ (as this classification may be incorrect), but will not use $j$'s information by adjusting the weights based on the revised form of \eqref{weights_nomalicious} as following,
	\begin{align}\label{weights}
		a_{ij,k} = \left\{
		\begin{aligned}
			&\frac{1}{|\mathcal{N}_{i,k}|}, j\in\mathcal{N}_{i,k}\\
			&0, j\notin \mathcal{N}_{i,k}
		\end{aligned}
		\right.. 
	\end{align}
	Similar to \cite[Lemma 2]{yemini2021characterizing}, the following lemma can be obtained.
	\begin{lemmax}
		Under Assumption \ref{AssumptionTrustValue}, the following statements hold.
		\begin{enumerate}[1)]
			\item The probability that player $i\in\mathcal{L}$ classifies a legitimate neighbor $j\in \mathcal{N}_i\cap\mathcal{L}$ as a malicious one at iteration $k+1$ satisfies
			\begin{align}
				\mathbb{P}\{\beta_{ij,k+1}<0, i\in\mathcal{L}, j\in\mathcal{N}_i\cap\mathcal{L}\} \leq e^{-2E_{ij}^2\sum_{m=1}^{k}S_m}. \nonumber
			\end{align}
			\item The probability that player $i\in\mathcal{L}$ classifies a malicious neighbor $j\in \mathcal{N}_i\cap\mathcal{M}$ as a legitimate one at iteration $k+1$ satisfies
			\begin{align}
				\mathbb{P}\{\beta_{ij,k+1}\geq0, i\in\mathcal{L}, j\in\mathcal{N}_i\cap\mathcal{M}\} \leq e^{-2E_{ij}^2\sum_{m=1}^{k}S_m}. \nonumber
			\end{align}
		\end{enumerate}
	\end{lemmax}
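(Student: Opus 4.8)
The plan is to recognize $\beta_{ij,k+1}$ as a sum of bounded, independent random variables and to apply Hoeffding's inequality. Write $N_k \triangleq \sum_{m=1}^{k} S_m$ for the total number of trust observations accumulated through iteration $k$, so that $\beta_{ij,k+1} = \sum_{m=1}^{k}\sum_{l=0}^{S_m-1}\big(\tau_{ij,m,l}-\tfrac12\big)$ is a sum of $N_k$ terms. Each summand $\tau_{ij,m,l}-\tfrac12$ lies in $[-\tfrac12,\tfrac12]$ because $\tau_{ij,m,l}\in[0,1]$ by Definition~\ref{DefinitionTrustValue}, and by Assumption~\ref{AssumptionTrustValue}-2) (together with the per-communication-instance extraction design described above, which ties the index $l$ to distinct communication rounds) these $N_k$ summands are mutually independent. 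Since $E_{ij}=\mathbb{E}[\tau_{ij,k}]-\tfrac12$ is a constant that does not depend on the iteration index, every summand has the same mean $E_{ij}$, whence $\mathbb{E}[\beta_{ij,k+1}]=N_k E_{ij}$.

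For statement 1), $j\in\mathcal{N}_i\cap\mathcal{L}$ gives $E_{ij}>0$, so $\{\beta_{ij,k+1}<0\}$ is the event that $\beta_{ij,k+1}$ falls below its mean by at least $N_k E_{ij}>0$. Hoeffding's inequality for a sum of $N_k$ independent variables each of range length $1$ (so $\sum(b_i-a_i)^2=N_k$) then yields
\[
\mathbb{P}\{\beta_{ij,k+1}<0\}
=\mathbb{P}\{\beta_{ij,k+1}-\mathbb{E}[\beta_{ij,k+1}]<-N_k E_{ij}\}
\le \exp\!\Big(-\tfrac{2(N_k E_{ij})^2}{N_k}\Big)
=e^{-2E_{ij}^2 N_k},
\]
which is exactly the claimed bound $e^{-2E_{ij}^2\sum_{m=1}^{k}S_m}$.

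For statement 2), $j\in\mathcal{N}_i\cap\mathcal{M}$ gives $E_{ij}<0$, hence $\mathbb{E}[\beta_{ij,k+1}]=N_k E_{ij}<0$, and $\{\beta_{ij,k+1}\ge 0\}$ is an upper-tail deviation of at least $-N_k E_{ij}=N_k|E_{ij}|>0$ above the mean. The one-sided Hoeffding bound then gives $\mathbb{P}\{\beta_{ij,k+1}\ge 0\}\le \exp\!\big(-2(N_k E_{ij})^2/N_k\big)=e^{-2E_{ij}^2 N_k}$, again as claimed. Note that identical distribution of the malicious observations (the last part of Assumption~\ref{AssumptionTrustValue}-2)) is not actually needed for this argument — only independence and a common mean — so the proof is slightly more general than stated.

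The only point needing care is the independence/boundedness bookkeeping rather than any genuine difficulty: Assumption~\ref{AssumptionTrustValue}-2) is phrased for the single-extraction sequence $\{\tau_{ij,k}\}_k$, so I would first justify extending it to the doubly-indexed family $\{\tau_{ij,m,l}\}_{m,l}$ via the fact that each extraction corresponds to a separate communication instance, and then observe that the centered ranges $[-\tfrac12,\tfrac12]$ are precisely what makes the Hoeffding exponent collapse from $2(N_kE_{ij})^2/\sum(b_i-a_i)^2$ to $2E_{ij}^2N_k$. Everything else is a direct invocation of Hoeffding's inequality, in the same spirit as \cite[Lemma~2]{yemini2021characterizing}.
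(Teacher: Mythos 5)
Your proof is correct and follows exactly the route the paper intends: the paper omits its own proof, noting only that the lemma follows as in \cite[Lemma 2]{yemini2021characterizing}, which is precisely the Hoeffding argument you give (centered observations in $[-\tfrac12,\tfrac12]$, independence across the $\sum_{m=1}^{k}S_m$ extractions, lower/upper tail depending on the sign of $E_{ij}$). Your side remarks---that identical distribution is not needed beyond a common mean, and that Assumption \ref{AssumptionTrustValue}-2) should be read as applying to the doubly-indexed extractions---are accurate and do not change the argument.
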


	Define $T_{ij}$ as follows:
	for $i\in\mathcal{L}, j\in\mathcal{N}_i\cap\mathcal{L}$, $T_{ij}$ satisfies $\beta_{ij,T_{ij}-1}<0$ and $\beta_{ij,k}\geq0, \forall k\geq T_{ij}$; and
	for $i\in\mathcal{L}, j\in\mathcal{N}_i\cap\mathcal{M}$, $T_{ij}$ satisfies $\beta_{ij,T_{ij}-1}\geq0$ and $\beta_{ij,k}<0, \forall k\geq T_{ij}$. Denote the order statistic of correct classification time as $\{T^{(1)}, T^{(2)}, ..., T^{(\sum_{i=1}^{N'} |\mathcal{N}_i|)}\}$, denote $T_{f} = T^{(\sum_{i=1}^{N'} |\mathcal{N}_i|)}$,
	the following theorem verifies the existence of $T_{f}$ and investigates the probability of the correct classification time.
	\begin{theoremx}\label{SectionTrustTheorem1}
		Under Assumption \ref{AssumptionTrustValue}, the random finite time $T_{f}$ always exists, regardless of the design of $\{S_k\}_{k=1}^{K}$. Furthermore, the probability of $T_f$ satisfies
		\begin{align}
			\mathbb{P}\{T_{f}=k\}&\leq e^{-2E_{min}^2\sum_{m=1}^{k-1}S_m}.\nonumber
		\end{align}
	\end{theoremx}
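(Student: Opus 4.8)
The plan is to prove the statement in two parts, noting first that the very definition of $T_{ij}$ (and hence of $T_f$, the largest of the times $T_{ij}$ over all pairs $i\in\mathcal{L}$, $j\in\mathcal{N}_i$) already presupposes that the sign of $\beta_{ij,\cdot}$ stabilizes after some finite iteration; establishing this almost-sure stabilization is the ``existence'' claim and must come first. To prove it, fix a pair $i\in\mathcal{L}$, $j\in\mathcal{N}_i$. By the lemma immediately preceding this theorem (valid under Assumption~\ref{AssumptionTrustValue}), the probability that $i$ misclassifies $j$ at iteration $m$ is at most $e^{-2E_{ij}^2\sum_{s=1}^{m-1}S_s}$; since every admissible $\{S_k\}$ satisfies $S_k\ge1$, this is at most $e^{-2E_{ij}^2(m-1)}$, whose sum over $m$ is finite because $E_{ij}\ne0$. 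Summability, rather than mere decay of the per-iteration probability, is what is needed here, since it is precisely what rules out a last misclassification at an arbitrarily large iteration. By Borel--Cantelli, almost surely $i$ misclassifies $j$ only finitely often, i.e.\ $\beta_{ij,\cdot}$ has the correct sign for all sufficiently large iterations, which is exactly $T_{ij}<\infty$ a.s. (equivalently one may invoke the strong law, $\beta_{ij,k}/\sum_{m=1}^{k-1}S_m\to E_{ij}\ne0$). Since there are finitely many such pairs, $T_f=\max_{i\in\mathcal{L},\,j\in\mathcal{N}_i}T_{ij}$ is a.s.\ finite, and nothing beyond $S_k\ge1$ was used --- this is the ``regardless of the design of $\{S_k\}$'' clause.

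For the tail bound I would translate $\{T_f=k\}$ into a single-iteration misclassification event. If $T_f=k$ then some pair $(i^\star,j^\star)$ attains $T_{i^\star j^\star}=k$, and by its defining property $\beta_{i^\star j^\star,k-1}$ has the wrong sign, i.e.\ $(i^\star,j^\star)$ is misclassified at iteration $k-1$. Hence $\{T_f=k\}\subseteq\bigcup_{i\in\mathcal{L},\,j\in\mathcal{N}_i}\{(i,j)\text{ misclassified at iteration }k-1\}$, so a union bound together with the preceding lemma gives
\begin{align}
\mathbb{P}\{T_f=k\}\ \le\ \sum_{i\in\mathcal{L},\,j\in\mathcal{N}_i}e^{-2E_{ij}^2\sum_{m=1}^{k-1}S_m}\ \le\ \Big(\,{\textstyle\sum_{i\in\mathcal{L}}|\mathcal{N}_i|}\,\Big)\,e^{-2E_{min}^2\sum_{m=1}^{k-1}S_m},\nonumber
\end{align}
using $E_{ij}^2\ge E_{min}^2$ for every pair, so that $E_{min}$ records the weakest classification ability. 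Thus the exponential rate is set purely by the weakest pair and by the communication budget $\{S_m\}$, the only population dependence being the prefactor that counts in-neighbor pairs; this is the precise sense in which the maximum classification time is governed by the weakest pair and is independent of the number of players.

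The main obstacle is the bookkeeping in the second part rather than any hard estimate. One must use the ``last-transition'' form of the definition of $T_{ij}$ so that $\{T_f=k\}$ is pinned to iteration $k-1$ exactly: arguing only that ``some pair is misclassified at some iteration $\ge k-1$'' would introduce an extra geometric summation over iterations and destroy the clean single-exponential form. One also has to decide how to present the combinatorial prefactor from the union over neighbor pairs --- either absorbed into a constant, or noted to be dominated by the exponential decay. By contrast the existence half is essentially immediate once the preceding lemma is combined with Borel--Cantelli and the finiteness of the set of pairs.
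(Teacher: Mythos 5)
Your two-step structure is the same as the paper's: existence of $T_f$ via the preceding lemma plus Borel--Cantelli over the finitely many pairs (the paper argues exactly this, also noting summability holds for any $\{S_k\}$ since $S_k\ge 1$), and the tail bound by pinning $\{T_f=k\}$ to a wrong-sign event at iteration $k-1$ and invoking the per-edge concentration bound. Where you diverge is the final counting step. The paper characterizes $\{T_f=k\}$ through the order statistic: some edge $(i_0,j_0)$ has $T_{i_0j_0}=k$ while all other edges satisfy $T_{ij}\le k$, and then uses the assumed independence of the per-edge classification processes to write $\mathbb{P}\{T_f=k\}=\bigl[\prod_{ij\neq i_0j_0}\mathbb{P}\{T_{ij}\le k\}\bigr]\mathbb{P}\{T_{i_0j_0}=k\}$, drops the product (each factor $\le 1$), and bounds the remaining term by $e^{-2E_{i_0j_0}^2\sum_{m=1}^{k-1}S_m}\le e^{-2E_{min}^2\sum_{m=1}^{k-1}S_m}$ --- hence no combinatorial prefactor at all. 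You instead use a union bound over all in-neighbor pairs, which needs no independence but produces the factor $\sum_{i\in\mathcal{L}}|\mathcal{N}_i|$. Since the theorem is stated without any such constant, your proposal as written establishes only a constant-factor weakening of the claimed inequality; you flag the prefactor but do not remove it, and there is no constant in the statement to absorb it into. That is the one genuine shortfall relative to the paper. (It is fair to say the paper's own factorization is somewhat informal, since the identity of the last-classified edge is random and a fully rigorous version would also have to account for which edge attains the maximum; and for the downstream use in the convergence theorem your extra constant would be harmless. Both you and the paper also share the same minor off-by-one in the exponent's summation limit when translating "$\beta_{ij,k-1}$ has the wrong sign'' through the lemma, so I do not count that against you.)
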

	\begin{proof}
		\textbf{1)}: The first conclusion can be derived by applying the approaches in \cite{yemini2025resilient} from a player-player perspective. For $i\in\mathcal{L}, j\in\mathcal{N}_i\cap\mathcal{L}$, there is $\sum_{k=1}^{\infty} \mathbb{P}\{\beta_{ij,k+1}<0\} \leq \sum_{k=1}^{\infty}e^{-2E_{ij}^2\sum_{m=1}^{k}S_m} \leq \sum_{k=1}^{\infty}e^{-2E_{ij}^2k} < \infty$. Based on Borel–Cantelli lemma, the event $\{\beta_{ij,k+1}<0, i\in\mathcal{L}, j\in\mathcal{N}_i\cap\mathcal{L}\}$ occurs only finitely often almost surely. Similarly, $\{\beta_{ij,k+1}>0, i\in\mathcal{L}, j\in\mathcal{N}_i\cap\mathcal{M}\}$ occurs only finitely often almost surely, too. Thus, a random finite time $T_{ij}$ exists regardless of the design of $S_k$. Then, the random finite time $T_{f} = T^{(\sum_{i=1}^{N'} |\mathcal{N}_i|)}$ always exists.
		
		\textbf{2)}: As to the probability of $T_{f}$, suppose $i\in\mathcal{L}, j\in\mathcal{N}_i\cap\mathcal{L}$. If $T_{ij} = k$, then iteration $k$ is the first time at which edge $(i,j)$ is permanently classified correctly, whereas at iteration $k-1$ it is still incorrect. For one legitimate–legitimate edge, “correct classification” means $\beta_{ij,k} \geq 0$, while “not yet correct” at $k-1$ means $\beta_{ij,k-1} < 0$. This yields $\{T_{ij}=k, i\in\mathcal{L}, j\in\mathcal{N}_i\cap\mathcal{L}\} \subseteq \{\beta_{ij,k-1}<0, i\in\mathcal{L}, j\in\mathcal{N}_i\cap\mathcal{L}\}$,  which means $\mathbb{P}\{T_{ij}=k, i\in\mathcal{L}, j\in\mathcal{N}_i\cap\mathcal{L}\} \leq \mathbb{P}\{\beta_{ij,k-1}<0, i\in\mathcal{L}, j\in\mathcal{N}_i\cap\mathcal{L}\} \leq  e^{-2E_{ij}^2\sum_{m=1}^{k-1}S_m}$.

		Similarly, if $i\in\mathcal{L}, j\in\mathcal{N}_i\cap\mathcal{M}$, then, $\{T_{ij}=k, i\in\mathcal{L}, j\in\mathcal{N}_i\cap\mathcal{M}\}\subseteq \{\beta_{ij,k-1}\geq0, i\in\mathcal{L}, j\in\mathcal{N}_i\cap\mathcal{M}\}$, which means $\mathbb{P}\{T_{ij}=k, i\in\mathcal{L}, j\in\mathcal{N}_i\cap\mathcal{M}\} \leq \mathbb{P}\{\{\beta_{ij,k-1}\geq0, i\in\mathcal{L}, j\in\mathcal{N}_i\cap\mathcal{M}\} \} \leq  e^{-2E_{ij}^2\sum_{m=1}^{k-1}S_m}$.

    By the definition of $T_f$, the event $\{T_f = k\}$ occurs if and only if there exists at least one edge $(i_0,j_0)$ such that $T_{i_0j_0} = k$ and, for all other edges, $T_{ij} \leq k$. Since the classification processes for different edges are independent, combining the above inequalities gives
		\begin{align}
			\mathbb{P}\{T_{f} = k\} =& \mathbb{P}\{T^{(1)}<k,..., T^{(\sum_{i=1}^{N'} |\mathcal{N}_i|)} = k\}\nonumber\\
			=&\big[\prod_{ij\neq i_0j_0}\mathbb{P}\{T_{ij}\leq k\}\big]\mathbb{P}\{T_{i_0j_0}=k\} \nonumber\\
			\leq& e^{-2E_{i_0j_0}^2\sum_{m=1}^{k-1}S_m} \nonumber\\
			\leq& e^{-2E_{min}^2\sum_{m=1}^{k-1}S_m}. \nonumber
		\end{align}
	\end{proof}
	
	\begin{remark}
        \textcolor{blue}{The results in \cite{yemini2021characterizing} and \cite{yemini2025resilient} depend on the number of players and the uniform classification capability to treat all legitimate players as a single entity with identical trust dynamics. This treatment allows the overall classification error to be bounded by summing over all edges and enables the decoupling of the maximum classification time analysis into two independent error events. However, the heterogeneous identification capabilities prevent us from continuing to handle the problem in this manner. Inspired by their proof methodology, the maximum correct classification time $T_f$ is obtained by focusing on a pair of players and constructing the order statistic to analyze the probability of $T_f$. As a result, the obtained theorem addresses the issue of heterogeneity. Besides, it ensures that the probability of $T_f$ is solely dependent on the pair of players with the weakest classification ability, denoted as $E_{ij} = E_{min}$, and is independent of the total number of players.}
	\end{remark}
	
	\begin{remark}
		Different from \cite{yemini2021characterizing}\cite{yemini2025resilient} which extract trust observations only once per iteration, each player allows for the extraction of trust observations up to $S_k$ times at each iteration. According to Theorem \ref{SectionTrustTheorem1}, the expected time to correctly classify all players is derived as $\sum_{k=2}^{\infty}(k-1) \mathbb{P}\{T_{f}=k\} \leq  \sum_{k=2}^{\infty} (k-1) e^{-2E_{min}^2\sum_{m=1}^{k-1}S_m}$. As a special case that extracts only one trust observation in each iteration, i.e., $S_k = 1, k=1, 2, ..., K $, its expectation is $\sum_{k=2}^{\infty} (k-1) \mathbb{P}\{T_{f}=k\}\leq \sum_{k=2}^{\infty} (k-1)   e^{-2E_{min}^2(k-1)}$.
		Clearly, $(k-1) e^{-2E_{min}^2\sum_{m=1}^{k-1}S_m} \leq (k-1)   e^{-2E_{min}^2(k-1)}$ holds. This indicates that allowing $S_k$ extractions per iteration effectively reduces the upper bound of the expected time for correct classification, which in turn leads to a decrease in the times of gradient computations.
	\end{remark}
	
	\subsection{The trustworthiness broadcast mechanism}\label{FloodingSection}

    Once the malicious players are classified within a random finite time $T_f$, they can be isolated by setting $a_{ij} = 0, i\in\mathcal{L}, j\in\mathcal{N}_i\cap\mathcal{M}$, and the aggregative game becomes a game where only legitimate players participate. \textcolor{blue}{Note that the aggregative game after isolating the malicious players is different from those without the presence of malicious players from the beginning. This is because the initialization at the beginning includes the malicious ones and isolating them cannot ensure a legal initialization of the game from then onwards. More specifically, since $v_{i,T_f} = e_{i},  \sigma_{i,T_f} = \phi_i(x_{i,T_f})$ is not satisfied, $T_f$ cannot be taken as the initial time of Algorithm \ref{Algorithm 2}.}

    \textcolor{blue}{To address this, the trustworthiness broadcast mechanism is designed as follows to ensure that once malicious players are isolated, each legitimate player not only knows the other legitimate players but also resets its variables to the initial inputs required by Algorithm \ref{Algorithm 2}.}
	
	If $\mathcal{N}_{i,k-1} \neq \mathcal{N}_{i,k}$, set $\iota_{i,k} = N'$, otherwise, $\iota_{i,k}$ does not change.
	\begin{align}
		&\mathcal{A}_{i,k}(i) = k, \label{Broadcast2}\\
		&\mathcal{A}_{i,k+1} = max_{j\in\mathcal{N}_{i,k}}{\mathcal{A}_{j,k}}, \label{Broadcast3}\\
		&\iota_{i,k} = max_{j\in\mathcal{N}_{i,k}}{\iota_{j,k}}, \label{Broadcast4}\\
		&list_{i,k} = \{j|\mathcal{A}_{i,k+1}(i)-\mathcal{A}_{i,k+1}(j)\leq N'-1\}, \label{Broadcast5}\\
		&i' = find(list_{i,k}==i) \label{Broadcast6}. \\
		&\iota_{i,k+1} = \iota_{i,k} - 1,\quad and \quad reset \quad v_{i,k}, \sigma_{i,k}, \quad if\quad\iota_{i,k} \neq 0. \label{Awareness} 
	\end{align}
	
	The core of the broadcast mechanism is that any player in the network can send information to another player with at most $N'-1$ information transmissions. The following provides an intuitive explanation of how players can become aware of the presence of malicious players, and how they can obtain a list of legitimate players.
	
	\textbf{1)} After a change in player $i$'s classification of neighbors, it takes at most $N'-1$ iterations for all players to be informed. $\iota_{i,k}$
	serves as a counter, indicating the maximum number of additional iterations needed for player $i$'s information to reach the entire network. When the counter has not reached $0$, player $i$ resets $v_{i,k}, \sigma_{i,k}$ to initial values and decrements the counter by $1$. Taking Fig. \ref{fig:communication-network5} as an example, consider a scenario where player $1$ updates its classification at iteration $k$ to classify player $4$ as malicious. Player $1$ sets the counter $\iota_{1, k}=5$, and sends it to player $2$, resulting in $\iota_{2, k}=5$ by \eqref{Broadcast4}. Subsequently, we have $\iota_{1, k+1}=\iota_{2, k+1}=4$ by \eqref{Awareness}. Similarly, after three rounds, all legitimate players have synchronized their counters $\iota_{1, k+3}=\iota_{2, k+3}=\iota_{3, k+3}=\iota_{5, k+3}=2$, indicating that they are aware of the presence of malicious players.
	
	\textbf{2)} After $T_f$, when malicious players are isolated, player $i$ requires only an $N'$-dimensional vector $\mathcal{A}_{i,k}$, where each element $\mathcal{A}_{i,k}(j)$ marks the latest iteration count of player $j$. According to \eqref{Broadcast3}, the message that 'player $j$ has not yet been isolated at iteration $k$, i.e., $\mathcal{A}_{j,k}(j)=k$' will be known to $i$ by no later than iteration $k+N'-1$. Since malicious players are isolated before $T_f$, their highest mark is $T_f$. Thus, if another player's iteration count differs from player $i$'s by more than $N'-1$ after iteration $T_f+N'$, that player is considered malicious. Then, we can obtain a list of legitimate players $list_{i,k}$, and determine player $i$'s number $i'$ among them. For example, at $T_f+5$, if player $5$'s vector $\mathcal{A}_{5,T_f+5}$ still shows player $4$'s iteration count $\mathcal{A}_{5,T_f+5}(4)=T_f$ , then player $5$ can infer that player $4$ is malicious. Otherwise, player $5$ would have received an iteration count from its neighbors indicating that player $4$ participated in the game at least by $T_f+1$. At this point, player $5$ can obtain a list of legitimate players $list_{5,T_f+5} = \{1,2,3,5\}$, and its number among legitimate players is $4$.
	\begin{figure}
		\centering
		\includegraphics[width=0.7\linewidth]{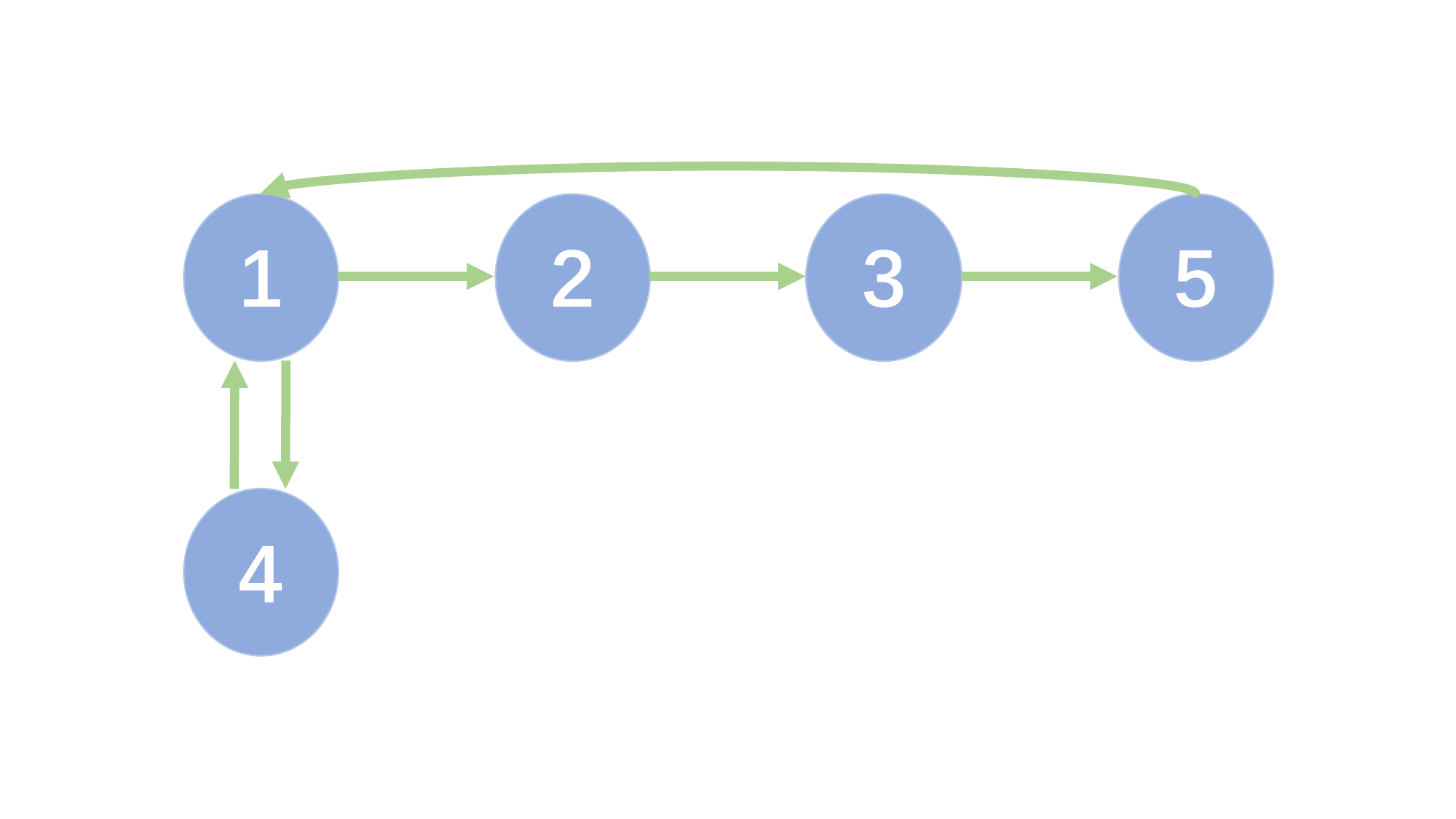}
		\caption{The directed communication network with legitimate players $1,2,3,5$ and a malicious player $4$.}
		\label{fig:communication-network5}
	\end{figure}

	In the following it will prove that after time $T_f$, \textbf{1)} the counter $\iota_{i,k}$ of legitimate players will simultaneously become $0$ at $T_f+N'$; \textbf{2)} each legitimate player will obtain a list of all legitimate players, i.e. $list_{i,k} = \mathcal{L}, \forall i\in\mathcal{L}, k\geq T_f+N'$. 

	\textbf{1)} Assuming that at iteration $T_{f}$, players $i_0^1, i_0^2, ..., i_0^{n_0}$ change their classification, denote them as a set $\mathcal{C}_{0} = \{i_0^1, i_0^2, ..., i_0^{n_0}\}$. Due to the definition of $T_f$, there is no more change in the classification of legitimate players after $T_f$. The following  analysis is divided into three stages: initiation $T_f+1$, propagation during $T_f+2$ to $T_f+N'-1$, and termination at $T_f+N'$. 
	
	\textcircled{\scriptsize{1}}Since $\mathcal{N}_{i_0, T_f-1} \neq \mathcal{N}_{i_0, T_f}$, $\iota_{i_0, T_f} = N', \forall i_0 \in \mathcal{C}_0$. For $i_1$ that satisfies $\exists i_0 \in \mathcal{C}_0, i_0 \in \mathcal{N}_{i_1, T_f}$, there is $\iota_{i_1,T_{f}} = \max_{j\in\mathcal{N}_{i_1,T_f}} \iota_{j,T_{f}} = \iota_{i_0,T_{f}} =N'$ by \eqref{Broadcast4}. Denote $\mathcal{C}_1 = \{i_1|\exists i_0 \in \mathcal{C}_0, i_0 \in \mathcal{N}_{i_1, T_f}\}$. It is obvious that $\mathcal{C}_0 \subseteq \mathcal{C}_1$. After that, for $i\in \mathcal{C}_1$, because $\iota_{i,T_{f}} \neq 0$, there is $\iota_{i,T_f+1}= \iota_{i,T_f}- 1 = N'-1$ by \eqref{Awareness}. 
	
	\textcircled{\scriptsize{2}}Following similar steps, at $T_{f}+r$, $\iota_{i, T_f+r} = N'-r, \forall i\in \mathcal{C}_r$, and $\iota_{i, T_f+r} \leq N'-r, \forall i \notin \mathcal{C}_r$. Then, for $i_{r+1}$ that satisfies $\exists i_r \in \mathcal{C}_r, i_r \in \mathcal{N}_{i_{r+1}, T_f+r}$, there is $\iota_{i_{r+1},T_{f}+r} = \max_{j\in\mathcal{N}_{i_{r+1},T_f+r}} \iota_{j,T_{f}+r} =N'-r$ by \eqref{Broadcast4}. Denote $\mathcal{C}_{r+1} = \{i_{r+1}|\exists i_r \in \mathcal{C}_r, i_{r} \in \mathcal{N}_{i_{r+1},T_f+r}\}$. After that, for $i\in \mathcal{C}_{r+1}$, because of $\iota_{i,T_{f}+r} \neq 0$, there is $\iota_{i,T_f+r+1} = N'-r-1$ by \eqref{Awareness}. 
	
	\textcircled{\scriptsize{3}}At $T_{f}+N'-1$, for $i_{N'}$ that satisfies $\exists i_{N'-1} \in \mathcal{C}_{N'-1}, i_{N'-1} \in \mathcal{N}_{i_{N'}, T_f+N'-1}$, there is $\iota_{i_{N'},T_{f}+N'-1} = 1$ by \eqref{Broadcast4}. Denote $\mathcal{C}_{N'} = \{i_{N'}|\exists i_{N'-1} \in \mathcal{C}_{N'-1}, i_{N'-1} \in \mathcal{N}_{i_N', T_f+N'-1}\}$. After that, for $i\in \mathcal{C}_{N'}$, because $\iota_{i,T_{f}+N'-1} \neq 0$, there is $\iota_{i,T_f+N'} = 0$ by \eqref{Awareness}. 
	
	Due to Assumption \ref{AssumptionTrustValue}-1), there is $\mathcal{L} \subseteq \mathcal{C}_{N'}$. This indicates that the counters of all legitimate players have simultaneously become $0$ at $T_f+N'$.
	
	Take Fig. \ref{fig:communication-network5} as an example, player $1$ updates its classification at iteration $k$ to classify player $4$ as malicious, then player $1$ is in $\mathcal{C}_{0}$. The updates of the counters are listed in Table \ref{Example} ($N' = 5$). Record $\iota_{i, k}$ in a column for iteration $k$, i.e., $\iota_{2, T_f+1} = N'-1$.
	
	\begin{table}[H]
		\small
		\caption{The counter $\iota_{i, k}$ of all players after $T_f$}
		\label{Example}
		\begin{minipage}[c]{0.5\linewidth}
			\begin{tabular}{|c|c|c|c|c|c|c|}
				\hline
				$\mathcal{C}_r$ & players & $k=T_f$ & $T_f+1$ & $T_f+2$ & $T_f+3$ & $T_f+4$ \\
				\hline
				$\mathcal{C}_0$ & 1 & $N'$ & $N'-1$ & $N'-2$ & $N'-3$ & $N'-4$ \\
				\hline
				$\mathcal{C}_1$ & 1;2 &  & $N'-1$ & $N'-2$ & $N'-3$ & $N'-4$   \\
				\hline
				$\mathcal{C}_2$ & 1,2;3 &  & & $N'-2$ & $N'-3$ & $N'-4$   \\
				\hline
				$\mathcal{C}_3$ & 1,2,3;5 &  & & & $N'-3$ & $N'-4$   \\
				\hline
			\end{tabular}
		\end{minipage}
	\end{table}
	
	\textbf{2)} From \eqref{Broadcast2} and \eqref{Broadcast3}, there is $ \mathcal{A}_{i,k}(j)<T_f, \forall i\in\mathcal{L}, j\in\mathcal{M}, k\geq T_f$, and $\mathcal{A}_{i,k}(j)\geq k-N', \forall i\in\mathcal{L}, j\in\mathcal{L}, k\geq T_f+N'$. If not, there would exist $ i_0\in\mathcal{L}, j_0 \in \mathcal{M}, k_0 \geq T_f$, such that $\mathcal{A}_{i_0, k_0}(j_0)\geq T_f$. By \eqref{Broadcast3}, there would exist $ i_1 \in \mathcal{N}_{i_0,k_0-1}$, such that $\mathcal{A}_{i_1, k_0-1}(j_0)\geq T_f$. Through a similar derivation, it can be deduced that there would exist $ i_{k_0-T_f} \in \mathcal{N}_{i_{k_0-T_f-1},T_f}$, such that $\mathcal{A}_{i_{k_0-T_f}, T_f}(j_0)\geq T_f$. From \eqref{Broadcast2}, it follows that $j_0$ is $i_{k_0-T_f}$, then, $j_0 \in\mathcal{N}_{i_{k_0-T_f-1},T_f}$ and $i_0, i_{1}, ..., i_{k_0-T_f-1}$ are legitimate players. However, this contradicts the definition of $T_f$. Similarly, $\mathcal{A}_{i,k}(j)\geq k-N', \forall i\in\mathcal{L}, j\in\mathcal{L}, k\geq T_f+N'$.
	
	To explain $list_{i,k} = \mathcal{L}, k\geq T_f+N'$, suppose there exists $ j_0 \in \mathcal{M}, i_0\in\mathcal{L}$ such that $j_0 \in list_{i_0,k}$, there is $\mathcal{A}_{i_0, k+1}(i_0) -  \mathcal{A}_{i_0, k+1}(j_0) \leq N'-1$. Consequently, $\mathcal{A}_{i_0, k+1}(j_0)\geq k-N'+1 \geq T_f+N'-N'+1 = T_f +1$. However, it contradicts the previously conclusion $ \mathcal{A}_{i,k}(j)<T_f, \forall i\in\mathcal{L}, j\in\mathcal{M}, k\geq T_f$. Similarly, it is also impossible for there to exist $j_0 \in \mathcal{L}, i_0\in\mathcal{L}$ such that $j_0 \notin list_{i_0,k}$. 
	
	After $T_f+N'$, players can seek the NE using Algorithm \ref{Algorithm 2}. Since every legitimate player knows the list of all legitimate players, it can determine its number among the legitimate players using \eqref{Broadcast6}. Then, the weight matrix of all players $A'$ after $T_{f}+N'$ can be written as
	\begin{align}\nonumber
		A' = \begin{bmatrix}
			A & 0_{|\mathcal{L}|\times|\mathcal{M}|}\\
			B & C \\
		\end{bmatrix}, k\geq T_{f}+N',
	\end{align}
	where the weights between legitimate players (the weight that player $i$ assigns to player $j$ by \eqref{weights}) are written in the first $N$ rows and the first $N$ columns. $A\in \mathbb{R}^{|\mathcal{L}|\times|\mathcal{L}|}$, $B\in \mathbb{R}^{|\mathcal{M}|\times|\mathcal{L}|}$, and $C\in \mathbb{R}^{|\mathcal{M}|\times|\mathcal{M}|}$. We write $v_{i,k}$ as $[v_{i,k}^\mathcal{L}, v_{i,k}^\mathcal{M}]$, where $v_{i,k}^\mathcal{L} \in \mathbb{R}^{1\times|\mathcal{L}|}$ and $v_{i,k}^\mathcal{M} \in \mathbb{R}^{1\times|\mathcal{M}|}$. Thus, \eqref{Multi-eigenvalue} can be written as
	\begin{align}\nonumber
		&\begin{bmatrix}
			v_{1,k,l+1}^\mathcal{L}, v_{1,k,l+1}^\mathcal{M} \\
			...\\
			v_{N,k,l+1}^\mathcal{L}, v_{N,k,l+1}^\mathcal{M} \\
			v_{N+1,k,l+1}^\mathcal{L}, v_{N+1,k,l+1}^\mathcal{M}\\
			...\\
			v_{N',k,l+1}^\mathcal{L}, v_{N',k,l+1}^\mathcal{M}
		\end{bmatrix} = 
		A'
		\begin{bmatrix}
			v_{1,k,l}^\mathcal{L}, v_{1,k,l}^\mathcal{M} \\
			...\\
			v_{N,k,l}^\mathcal{L}, v_{N,k,l}^\mathcal{M} \\
			v_{N+1,k,l}^\mathcal{L}, v_{N+1,k,l}^\mathcal{M}\\
			...\\
			v_{N',k,l}^\mathcal{L}, v_{N',k,l}^\mathcal{M}
		\end{bmatrix},\nonumber\\
		&k \geq T_{f}+N', \nonumber
	\end{align}
	which means
	\begin{align}\label{Connection1}
		\begin{bmatrix}
			v_{1,k,l+1}^\mathcal{L} \\
			...\\
			v_{N,k,l+1}^\mathcal{L} \\
			
		\end{bmatrix} 
		= 
		A 
		\begin{bmatrix}
			v_{1,k,l}^\mathcal{L} \\
			...\\
			v_{N,k,l}^\mathcal{L}\\
		\end{bmatrix},  k \geq T_{f}+N'
	\end{align}
	and $v_{i,k,l}^\mathcal{M} = 0, \forall i\in\{1,2,...,N\}$ (notice that $v_{i,T_{f}+N'} = e_{i'}$). Similarly, \eqref{Multi-estimate} can be written as
	\begin{align}\label{Connection2}
		\begin{bmatrix}
			\sigma_{1,k,l+1}^\mathcal{L} \\
			...\\
			\sigma_{N,k,l+1}^\mathcal{L} \\
			
		\end{bmatrix} 
		= 
		A 
		\begin{bmatrix}
			\sigma_{1,k,l}^\mathcal{L} \\
			...\\
			\sigma_{N,k,l}^\mathcal{L}\\
		\end{bmatrix},  k \geq T_{f}+N'.
	\end{align}
	Note that $v_{i,k,l}^\mathcal{L}$, $\sigma_{i,k,l}^\mathcal{L}$ correspond to $v_{i,k,l}$, $\sigma_{i,k,l}$ in \eqref{Multi-eigenvalue}\eqref{Multi-estimate}, respectively. Therefore, after $T_f+N'$, the iterations of legitimate players are precisely described by \eqref{AlgorithmUpdate2}\eqref{AlgorithmUpdate1}\eqref{AlgorithmUpdate3} with initial inputs $v_{i,T_f+N'} = e_i,  \sigma_{i,T_f+N'} = \phi_i(x_{i,T_f+N'})$.

	\subsection{Convergence Analysis}\label{ConvergenceSection}
	This subsection analysis the convergence of the resilient fast NE seeking algorithm, which integrates the heterogeneous trustworthiness probabilistic framework with the fast NE seeking algorithm \ref{Algorithm 2} through the trustworthiness broadcast mechanism, as presented in Algorithm \ref{Algorithm 1}.

	\begin{algorithm}
		\caption{The resilient fast NE seeking algorithm}
		\label{Algorithm 1}
		\small
		\begin{algorithmic}
			\STATE\textbf{Input}: $ x_{i,1} \in \Omega_i, v_{i,1} = e_{i'},  \sigma_{i,1} = \phi_i(x_{i,1})$, $\beta_{ij,k} = 0, \forall j\in \mathcal{N}_i$, $\mathcal{N}_{i,0} = \mathcal{N}_i$, $\iota_{i,1} = 0$, $\mathcal{A}_{i,k} = 0_{N'\times 1}$.
			\STATE\textbf{Output}: Players' strategies $x_i, \forall i \in\mathcal{N}$.
			\FOR {$k = 1, 2, ..., K$}
			\FOR {each player $i \in \mathcal{N}$}
			\STATE Determine the trusted in-neighbors $\mathcal{N}_{i,k} =\{j\in\mathcal{N}_i:\beta_{ij,k}\geq0\}$ and neighbors' weights as \eqref{weights}.
			\STATE \textbf{Perform} the \textbf{\textit{trustworthiness broadcast mechanism}}
			If $\mathcal{N}_{i,k-1} \neq \mathcal{N}_{i,k}$, set $\iota_{i,k} = N'$, otherwise, $\iota_{i,k}$ does not change. \eqref{Broadcast2}-\eqref{Broadcast6}.
			\STATE \textbf{Perform} the \textbf{\textit{Multi-Round Communication Mechanism } } Communicate with all neighbors $\mathcal{N}_i$ as Multi-round Communication Mechanism
			\begin{align}
				&\sigma_{i,k,S_k}, v_{i,k,S_k} \leftarrow MRC (\sigma_{i,k}, v_{i,k}), \nonumber
			\end{align}
			\STATE \textbf{Updates} trustworthiness by extracting $\tau_{ij,k,l}$
			\begin{align}\label{AlgorithmIdentify1} 
				&\beta_{ij,k+1} = \beta_{ij,k} + \sum_{l=0}^{S_k-1} (\tau_{ij,k,l}-\frac{1}{2}). \nonumber
			\end{align}
			\IF{$\iota_{i,k} = 0$}
			\STATE Player $i$ updates its variables
			\begin{align}
				& v_{i,k+1} = v_{i,k,S_k}, \nonumber\\
				&x_{i,k+1} =P_{\Omega_i} [x_{i,k} - \alpha_k G_i(x_{i,k},\sigma_{i,k})], \nonumber\\
				&\sigma_{i,k+1} = \sigma_{i,k,S_k} + \frac{\phi_i(x_{i,k+1})}{[v_{i,k+1}]_{i'}} - \frac{\phi_i(x_{i,k})}{[v_{i,k}]_{i'}},\nonumber
			\end{align}
			\STATE \textbf{return} $v_{i,k+1}, x_{i,k+1}, \sigma_{i,k+1}$
			\ELSE
			\STATE Resets its variables as
			\begin{align}
				& v_{i,k+1} = e_{i'}; 
				x_{i,k+1} =x_{i,1}; 
				\sigma_{i,k+1} = \sigma_{i,1},\nonumber\\
				&\iota_{i,k+1} = \iota_{i,k} - 1 .\nonumber
			\end{align}
			\ENDIF
			\ENDFOR
			\ENDFOR
		\end{algorithmic}
	\end{algorithm}
	Player $i$ first sends and receives information with neighbors, including the trustworthiness broadcast mechanism \eqref{Broadcast2}-\eqref{Broadcast6} and multi-round communication mechanism \eqref{Multi-estimate}\eqref{Multi-eigenvalue}. $\tau_{ij,k,l}$ are extracted at the same time and $\beta_{ij,k}$ is then updated; if the counter $\iota_{i,k}$ is $0$, the $v_{i,k+1}, x_{i,k+1}, \sigma_{i,k+1}$ are updated by \eqref{gradientbasedupdate1}-\eqref{gradientbasedupdate3}, otherwise, the variables are reset to their initial values. \textcolor{blue}{Note that, the index of player $i$ within the set of legitimate players is denoted as $i'$. Owing to this initialization and weight estimation \eqref{gradientbasedupdate1}, the estimated weight correspond to the legitimate player $i$ is $i'$-th element of $v_{i,k}$. When only legitimate players are involved, $i' = i$. However, this equality does not hold in the earlier stages of the overall algorithm, where the legitimate set may still include misclassified players or exclude some legitimate ones. Therefore, we explicitly distinguish $i'$ from $i$ in our notation.} 
	
	 We will now proceed to study the convergence of Algorithm \ref{Algorithm 1}.
	 \begin{theoremx} \label{SectionTheorem3}
	 	Suppose that Assumptions \ref{AssumptionGameSets}, \ref{AssumptionGameGradient}, \ref{AssumptionLips}, \ref{AssumptionCommunicationGraph}, and \ref{AssumptionTrustValue} hold. 
	 	\begin{enumerate}[1)]
	 		\item Let step-size $\alpha_k = \alpha, \forall k\in\mathbb{N}_+$ such that $\rho(\Gamma_{\alpha}) = \lambda_{max}(\Gamma_{\alpha})<1$,
	 		then the sequence $\{x_k\}_{k\in\mathbb{N}_+}$ generated by Algorithm \ref{Algorithm 1} satisfies $\lim\limits_{k\to\infty} \mathbb{E}\|x_k-x^*\|^2 \leq\frac{D_3e^{2E_{min}^2}}{(1-\rho(\Gamma_{\alpha}))(1-e^{-2E_{min}^2})}$.
	 		\item Let step-size $\alpha_k = \frac{1}{(k+1)^r}, \forall k\in\mathbb{N}_+$, where $\frac{1}{2}<r<1$. Let $S_k \geq \frac{\ln\frac{1}{2}}{2\ln\rho_1}$. Then, the sequence $\{x_k\}_{k\in\mathbb{N}_+}$ generated by Algorithm \ref{Algorithm 1} satisfies $\lim\limits_{k\to\infty} \mathbb{E}\|x_k-x^*\|^2 = 0$.
	 	\end{enumerate}
	 \end{theoremx}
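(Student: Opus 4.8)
The plan is to reduce everything to Theorem \ref{SectionGameTheorem1} (for part 1) and Corollary \ref{SectionGameCorollary1} (for part 2) by exploiting the ``clean restart'' established in Section \ref{FloodingSection}, and then to average over the random classification time $T_f$ using the exponential tail bound of Theorem \ref{SectionTrustTheorem1}. \textbf{Step 1 (restart reduction).} Condition on $\{T_f=t\}$. By the construction around \eqref{Connection1}--\eqref{Connection2}, for every $k\ge k_0:=t+N'$ the legitimate players' iterates generated by Algorithm \ref{Algorithm 1} coincide with those of Algorithm \ref{Algorithm 2}, i.e. obey \eqref{AlgorithmUpdate2}--\eqref{AlgorithmUpdate3} with the reduced row-stochastic matrix $A$ on the strongly connected subgraph $\mathcal{G}_{\mathcal{L}}$ (Assumption \ref{AssumptionCommunicationGraph}) and with the \emph{valid} initialization $v_{i,k_0}=e_{i'}$, $x_{i,k_0}=x_{i,1}$, $\sigma_{i,k_0}=\phi_i(x_{i,1})$. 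Thus all hypotheses of Theorem \ref{SectionGameTheorem1}/Corollary \ref{SectionGameCorollary1} hold on the shifted horizon $k\ge k_0$, and since that initialization is a fixed quantity, $\|x_{k_0}-x^*\|^2+\|\sigma_{k_0}-V_{\infty}\sigma_{k_0}\|_{\hat{V}_{\infty}}^2\le\bar y$ for a constant $\bar y$ independent of $t$. By Theorem \ref{SectionTrustTheorem1}, $T_f<\infty$ almost surely, and since $S_m\ge1$ in both cases, $\mathbb{P}\{T_f=t\}\le e^{-2E_{min}^2(t-1)}$, whence $\mathbb{P}\{T_f>m\}\le e^{-2E_{min}^2 m}/(1-e^{-2E_{min}^2})\to0$.

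\textbf{Step 2 (part 1).} Decompose $\mathbb{E}\|x_k-x^*\|^2=\sum_{t\ge1}\mathbb{P}\{T_f=t\}\,\mathbb{E}[\|x_k-x^*\|^2\mid T_f=t]$ and split at $t=k-N'$. For $t>k-N'$ the algorithm has not yet restarted, but $x_k$ stays in the compact $\Omega$ (or equals the fixed $x_1$), so this partial sum is at most $\bar D\,\mathbb{P}\{T_f>k-N'\}$ for a constant $\bar D$, which vanishes as $k\to\infty$ by Step 1. For $t\le k-N'$, applying Theorem \ref{SectionGameTheorem1} on $[k_0,k]$ gives
\begin{align}
\mathbb{E}\big[\|x_k-x^*\|^2 \,\big|\, T_f=t\big]\le \rho(\Gamma_{\alpha})^{k-k_0}\bar y+\eta_{k,k_0}+\tfrac{D_3(1-\rho(\Gamma_{\alpha})^{k-k_0})}{1-\rho(\Gamma_{\alpha})}, \nonumber
\end{align}
where $\eta_{k,k_0}$ collects the $\theta$-decay terms of \eqref{SectionGameTheorem1Result}; this is uniformly bounded in $t,k$ and, for each fixed $t$, tends to at most $D_3/(1-\rho(\Gamma_{\alpha}))$ as $k\to\infty$. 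Interchanging the limit with $\sum_t$ by dominated convergence (summable majorant $\propto\mathbb{P}\{T_f=t\}$), then bounding the per-$t$ contribution and summing the geometric tail $\sum_t e^{-2E_{min}^2(t-1)}$ from Step 1, yields $\lim_{k\to\infty}\mathbb{E}\|x_k-x^*\|^2\le \frac{D_3 e^{2E_{min}^2}}{(1-\rho(\Gamma_{\alpha}))(1-e^{-2E_{min}^2})}$.

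\textbf{Step 3 (part 2).} Repeat Step 2 with Corollary \ref{SectionGameCorollary1} replacing Theorem \ref{SectionGameTheorem1}: on $\{T_f=t\}$ the restarted recursion runs with step-sizes $\alpha_{k_0},\alpha_{k_0+1},\dots$, still of the form $(k+1)^{-r}$, so the telescoping identities $\prod_{m=l+1}^{k}(1-\tfrac1{m+1})=\tfrac{l+1}{k+1}$ used in the proof of Corollary \ref{SectionGameCorollary1} remain valid with only a shifted starting index, giving $\mathbb{E}[\|x_k-x^*\|^2\mid T_f=t]\to0$ at rate $O((k+1)^{-(2r-1)})$ with a constant that depends on $t$ only through $k_0$. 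Since this conditional expectation is also bounded by $\bar D$ for all $k$, dominated convergence over $t$ (majorant $\bar D\,\mathbb{P}\{T_f=t\}$, summable) gives $\lim_{k\to\infty}\mathbb{E}\|x_k-x^*\|^2=0$.

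\textbf{Main obstacle.} The delicate point is interchanging $\lim_{k\to\infty}$ with the expectation over the random time $T_f$, i.e. showing that the ``slow-classification'' events $\{T_f\text{ large}\}$ contribute negligibly; this is exactly where the exponential tail $\mathbb{P}\{T_f=t\}\le e^{-2E_{min}^2(t-1)}$ of Theorem \ref{SectionTrustTheorem1} is combined with the uniform compactness bound on $\|x_k-x^*\|^2$. A secondary technical check is that the rate estimate of Corollary \ref{SectionGameCorollary1} is unaffected by entering the diminishing step-size sequence at the shifted index $k_0=t+N'$ rather than at $1$.
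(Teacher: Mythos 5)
Your proposal matches the paper's proof in all essentials: both condition on $T_f$ via the law of total expectation, absorb the pre-classification contribution into a compactness bound times the exponential tail $\mathbb{P}\{T_f>k-N'\}$ from Theorem \ref{SectionTrustTheorem1}, and apply Theorem \ref{SectionGameTheorem1} (resp.\ Corollary \ref{SectionGameCorollary1}) to the post-restart iterates, which the trustworthiness broadcast mechanism guarantees start from a valid initialization at $T_f+N'$, arriving at the same limiting bounds. The only difference is technical bookkeeping: you pass to the limit by dominated convergence (and explicitly flag the shifted step-size index after restart in part 2), whereas the paper evaluates the resulting convolution sums directly via geometric-series and integral estimates.
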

	 \begin{proof}
	 	Denote $\|x_k^{\mathcal{L}}-x^*\|^2$ as the squared error of Algorithm \ref{Algorithm 2} after correctly classifying all neighbors. Assume that malicious players exists, therefore $T_f\geq 2$.
	 	It is obvious that
	 	\begin{align}
	 		&\mathbb{P}\{T_{f}+N' >k\} = \sum_{l=k+1}^{\infty} \mathbb{P}\{T_{f} =l-N'\} \nonumber\\
	 		\leq& \sum_{l=k+1}^{\infty} e^{-2E_{min}^2\sum_{m=1}^{l-N'-1}S_m} 
	 		\leq \frac{e^{-2E_{min}^2(k-N')}}{1-e^{-2E_{min}^2}}, \nonumber
	 	\end{align}
	 	and by using the law of total expectation, it has
	 	\begin{align}
	 		&\mathbb{E}\|x_k-x^*\|^2 \nonumber\\
	 		\leq& \sum_{l=N'+2}^{k}\mathbb{P}\{T_{f} = l-N'\}\mathbb{E} \{\|x_{k-l}^\mathcal{L}-x^*\|^2|T_{f} = l-N'\}\nonumber\\
	 		& + \mathbb{P}\{T_{f} >k-N'\}C.
	 	\end{align}
	 	\textbf{1)}: For step-size $\alpha_k = \alpha, \forall k\in\mathbb{N}_+$ with $\rho(\Gamma_\alpha) = \lambda_{max}(\Gamma_{\alpha})<1$, by using Theorem \ref{SectionGameTheorem1},
	 	\begin{align}
	 		&\mathbb{E}\|x_k-x^*\|^2 \nonumber\\
	 		\leq& \sum_{l=N'+2}^{k}\mathbb{P}\{T_{f} = l-N'\}\mathbb{E} \{\|x_{k-l}^\mathcal{L}-x^*\|^2|T_{f} = l-N'\}\nonumber\\
	 		& + \mathbb{P}\{T_{f} > k-N'\}C \nonumber\\
	 		\leq& \sum_{l=N'+2}^{k} e^{-2E_{min}^2(l-N'-1)} \cdot \nonumber\\ &\bigg[\rho(\Gamma_{\alpha})^{k-l}(\|x_1-x^*\|^2 + \|\sigma_1 - V_{\infty}\sigma_1\|^2)  \nonumber\\
	 		&+D_1\frac{\theta^2[\rho(\Gamma_\alpha)^{k-l} - (\theta^2)^{k-l}]}{\rho(\Gamma_\alpha) - \theta^2} + D_2\frac{\theta[\rho(\Gamma_\alpha)^{k-l} - (\theta)^{k-l}]}{\rho(\Gamma_\alpha) - \theta} \nonumber\\
	 		&+ D_3\frac{1-\rho(\Gamma_\alpha)^{k-l}}{1-\rho(\Gamma_\alpha)} \bigg] + \frac{e^{-2E_{min}^2(k-N')}}{1-e^{-2E_{min}^2}} C. \nonumber
	    \end{align}
	 	Since, for example, 
	 	\begin{align}\label{SectionConvergence1}
	 		&\lim\limits_{k\to\infty}\sum_{l=N'+2}^{k} e^{-2E_{min}^2(l-N'-1)} \cdot \rho(\Gamma_{\alpha})^{k-l} \nonumber\\ =&\lim\limits_{k\to\infty}\frac{e^{2E_{min}^2}}{\rho(\Gamma_{\alpha})^{N'-1}} \frac{\rho(\Gamma_{\alpha})^k-e^{-2E_{min}^2k}}{\rho(\Gamma_{\alpha})-e^{-2E_{min}^2}} = 0,
	 	\end{align}
	 	 $\lim\limits_{k\to\infty}\sum_{l=N'}^{k} e^{-2E_{min}^2(l-N'-1)}  (\theta^2)^{k-l}  = 0$, and $\lim\limits_{k\to\infty}\sum_{l=N'}^{k} e^{-2E_{min}^2(l-N'-1)}  \theta^{k-l}  = 0$, it has
	 	\begin{align}
	 		\lim\limits_{k\to\infty} \mathbb{E}\|x_k-x^*\|^2 \leq\frac{D_3e^{2E_{min}^2}}{(1-\rho(\Gamma_{\alpha}))(1-e^{-2E_{min}^2})}, \nonumber
	 	\end{align}
	 	where $D_3 = 2\rho_1^{S_k}\tilde{M}\sqrt{[v]_{max}}\rho_2\rho L_3M\alpha+2[v]_{max}\rho_2^2\rho^2 L_3^2M^2\alpha^2$. 
	 	
	 	\textbf{2)}: Let step-size $\alpha_k = \frac{1}{(k+1)^r}, \forall k\in\mathbb{N}_+$, where $\frac{1}{2}<r<1$. Let $S_k \geq \frac{\ln\frac{1}{2}}{2\ln\rho_1}$.
	 	\begin{align}
	 		&\mathbb{E}\|x_k-x^*\|^2 \nonumber\\
	 		\leq& \sum_{l=N'+2}^{k}\mathbb{P}\{T_{f} = l-N'\}\mathbb{E} \{\|x_{k-l}^\mathcal{L}-x^*\|^2|T_{f} = l-N'\} \nonumber\\
	 		&+ \mathbb{P}\{T_{f} = l-N' >k\}C\nonumber\\
	 		\leq& \sum_{l=N'+2}^{k} e^{-2E_{min}^2(l-N'-1)} \cdot \nonumber\\
	 		&\bigg[\big(\|x_1-x^*\|^2 + \|\sigma_1 - V_{\infty}\sigma_1\|^2\big)\frac{1}{k-l+1} \nonumber\\
	 		&+d_1\big[\frac{1}{k-l+1}\frac{\theta^2}{1-\theta^2} + \frac{1}{k-l+1}\frac{\theta^2(1-(\theta^2)^{k-l})}{(1-\theta^2)^2}\nonumber\\
	 		& - \frac{(\theta^2)^{k-l+1}}{1-\theta^2}\big] \nonumber\\
	 		&+ d_2\big[\frac{1}{k-l+1}\frac{\theta}{1-\theta} + \frac{1}{k-l+1}\frac{\theta(1-\theta^{k-l})}{(1-\theta)^2} - \frac{\theta^{k-l+1}}{1-\theta}\big] \nonumber\\
	 		&+ d_3\frac{1}{2-2r}\big[\frac{1}{(k-l+1)^{2r-1}} - \frac{1}{k-l+1}\big]\bigg] \nonumber\\
	 		&+  \frac{e^{-2E_{min}^2(k-N')}}{1-e^{-2E_{min}^2}} C. \nonumber
	 	\end{align}
	 	There is
	 	\begin{align}
	 		&\sum_{l=N'+2}^{k}e^{-2E_{min}^2(l-N'-1)}\frac{1}{(k-l+1)^{2r-1}} \nonumber\\
	 		=& e^{2E_{min}^2N'}\sum_{l=N'+1}^{k-1}e^{-2E_{min}^2l}\frac{1}{(k-l)^{2r-1}} \nonumber\\
	 		\leq & e^{2E_{min}^2N'}\big[\frac{1}{k^{2r-1}} + \sum_{l=1}^{k-2}e^{-2E_{min}^2l}\frac{1}{(k-l)^{2r-1}} \nonumber\\
	 		& + e^{-2E_{min}^2(k-1)}\big],
	 	\end{align}
		define $f(x) \triangleq e^{-2E_{min}^2x}\frac{1}{(k-x)^{2r-1}}$, $f(x)$ decreases when $x\in(0,k-\frac{2r-1}{2E_{min}^2})$, and increases when $x\in(k-\frac{2r-1}{2E_{min}^2},k)$. Denote $k_0 = \lceil k-\frac{2r-1}{2E_{min}^2} \rceil -1$, then,
	 	\begin{align}
	 		&e^{-2E_{min}^2l}\frac{1}{(k-l)^{2r-1}} \nonumber\\
	 		\leq& \int_{l-1}^{l} e^{-2E_{min}^2x}\frac{1}{(k-x)^{2r-1}} dx, l=\{1, 2, ..., k_0\}, \nonumber\\
	 		&e^{-2E_{min}^2l}\frac{1}{(k-l)^{2r-1}} \nonumber\\
	 		\leq& \int_{l}^{l+1} e^{-2E_{min}^2x}\frac{1}{(k-x)^{2r-1}} dx, l=\{k_0+1, ..., k-2\}.
	 	\end{align}
	 	Thus, 
	 	\begin{align}
	 		\sum_{l=1}^{k-2}e^{-2E_{min}^2l}\frac{1}{(k-l)^{2r-1}} \leq \int_{0}^{k-1} e^{-2E_{min}^2x}\frac{1}{(k-x)^{2r-1}} dx, \nonumber
	 	\end{align}
	 	and $\lim\limits_{k\to\infty}\int_{0}^{k-1} e^{-2E_{min}^2x}\frac{1}{(k-x)^{2r-1}} dx=0$. Then, $\lim\limits_{k\to\infty}\sum_{l=1}^{k}e^{-2E_{min}^2(l-1)}\frac{1}{(k-l+1)^{2r-1}} = 0$, which means $\lim\limits_{k\to\infty}\sum_{l=N'+2}^{k}e^{-2E_{min}^2(l-N'-1)}\frac{1}{(k-l+1)^{2r-1}}=0$. 
	 	
	 	Similarly, $\lim\limits_{k\to\infty}\sum_{l=N'}^{k}e^{-2E_{min}^2(l-N'-1)}\frac{1}{(k-l+1)} = 0$ and $\lim\limits_{k\to\infty}\sum_{l=N'}^{k}e^{-2E_{min}^2(l-N'-1)} \theta^{k-l+1} = 0$.
	 	Therefore, it can be concluded $\lim\limits_{k\to\infty} \mathbb{E}\|x_k-x^*\|^2 = 0$.
	 \end{proof}

  \begin{remark}
        \textcolor{blue}{Recent advances have also been made in distributed optimization under malicious environments. \cite{han2025byzantine} contributes to distributed optimization in the presence of Byzantine agents, but still requires constraints on both the number and weights of malicious neighbors for each node. \cite{akgun2025learning} adopts the homogeneous trustworthiness probabilistic framework and designs a trust mechanism for distributed optimization; however, the homogeneity limiting its applicability. The proposed heterogeneous trustworthiness probabilistic framework can avoid these assumptions and provides a new perspective for handling malicious environments.}
    \end{remark}

	\section{Simulations}
	This section considers an energy consumption game involving eight plug-in hybrid electric vehicles to verify the results. Players' cost functions follow the design in \cite{02Fang2022Directed} \cite{simpson2006cost}\cite{wen2020distributed}:
	\begin{align} \nonumber
		f_i(x_i,\sigma(x))=x_i(a\sum\limits_{i=1}^Nx_i+b)+p_0\left(\kappa_i(1-\frac{x_i}{x_i^{max}})^2+d_i\right),
	\end{align}
	where $x_i$ is player $i$'s electricity consumption, and $[x_i^{min}, x_i^{max}]$ represents the capacity of player $i$'s battery. $a\sum\limits_{i=1}^Nx_i+b$ is the electricity price that depends on total electricity consumption. $p_0$ is the price of the fuel.  $\kappa_i\left(1-\frac{x_i}{x_i^{max}}\right)^2+d_i$ is the demand of fuel where $d_i$ is player $i$'s basic demand of fuel and $\kappa_i$ is $i$'s conversion factor of fuel demand. The parameters are listed in Table \ref{Tabel}.
	\begin{table}[H]\small
			\caption{PARAMETERS OF PHEVS IN THE SIMULATION}
			\label{Tabel}
			\centering
			\setlength{\tabcolsep}{1mm}{
				\begin{tabular}{c|c|c|c|c|c|c|c|c}
					\hline
					Player $i$ & Type & $x_i^{min}$ & $x_i^{max}$ & $p_0$ & $a$ & $b$ & $\kappa_i$ & $d_i$\\
					\hline
					1 & PHEV1 & 0 & 27.5 & 5.6 & 0.004 & 0.1 & 4.6 & 0.7\\
					2 & PHEV1 & 0 & 27.5 & 5.6 & 0.004 & 0.1 & 4.6 & 0.7\\
					3 & PHEV2 & 0 & 34.2 & 5.6 & 0.004 & 0.1 & 3.7 & 0.8\\
					4 & PHEV2 & 0 & 40.6 & 5.6 & 0.004 & 0.1 & 3.7 & 0.8\\
					5 & PHEV3 & 0 & 40.6 & 5.6 & 0.004 & 0.1 & 3.4 & 0.6\\
					6 & PHEV3 & 0 & 40.6 & 5.6 & 0.004 & 0.1 & 3.4 & 0.6\\
					7 & PHEV4 & 0 & 28.8 & 5.6 & 0.004 & 0.1 & 4.0 & 0.7\\
					8 & PHEV5 & 0 & 35.7 & 5.6 & 0.004 & 0.1 & 3.9 & 0.8\\
					\hline
			\end{tabular}}
	\end{table}
	\textcolor{blue}{Each player seeks to minimize its own charging cost, and the desired stable state is the Nash equilibrium, where no player can benefit from a unilateral deviation.} \textcolor{blue}{This scenario involves heterogeneous players, where each vehicle has its own cost due to differences in design and constraints.} The work in \cite{02Fang2022Directed} provides an example in which the PHEV system is also a potential game, with a potential function $F$ whose Hessian matrix $\nabla^2 F(x)$ is positive definite. Hence, from \cite{bertsekas2003convex}, the example satisfies Assumption \ref{AssumptionGameGradient}.
	The Nash Equilibrium is $x^*=[17.6859, 17.6859, 15.5682, 15.5682, 12.5484, 12.5484, 16.508\\7, 16.3481]$. 
	
	\subsection{Comparison for the fast NE seeking algorithm}
	Given that there is currently no literature exploring the problem of aggregative game equilibrium seeking within the trustworthiness probabilistic framework, this comparison focuses solely on the advantages of Algorithm \ref{Algorithm 2} over existing literature. Since Algorithm \ref{Algorithm 2}, like the one in \cite{02Fang2022Directed}, is designed for unbalanced directed networks, Algorithm \ref{Algorithm 2} is compared with the one in \cite{02Fang2022Directed}. Following the comparative simulations, Theorem \ref{SectionGameTheorem1} and Corollary \ref{SectionGameCorollary1} will be verified. The directed communication graph is illustrated in Fig. \ref{fig:communication8}.
	
	\begin{figure}[H]
		\centering
		\includegraphics[width=0.55\linewidth]{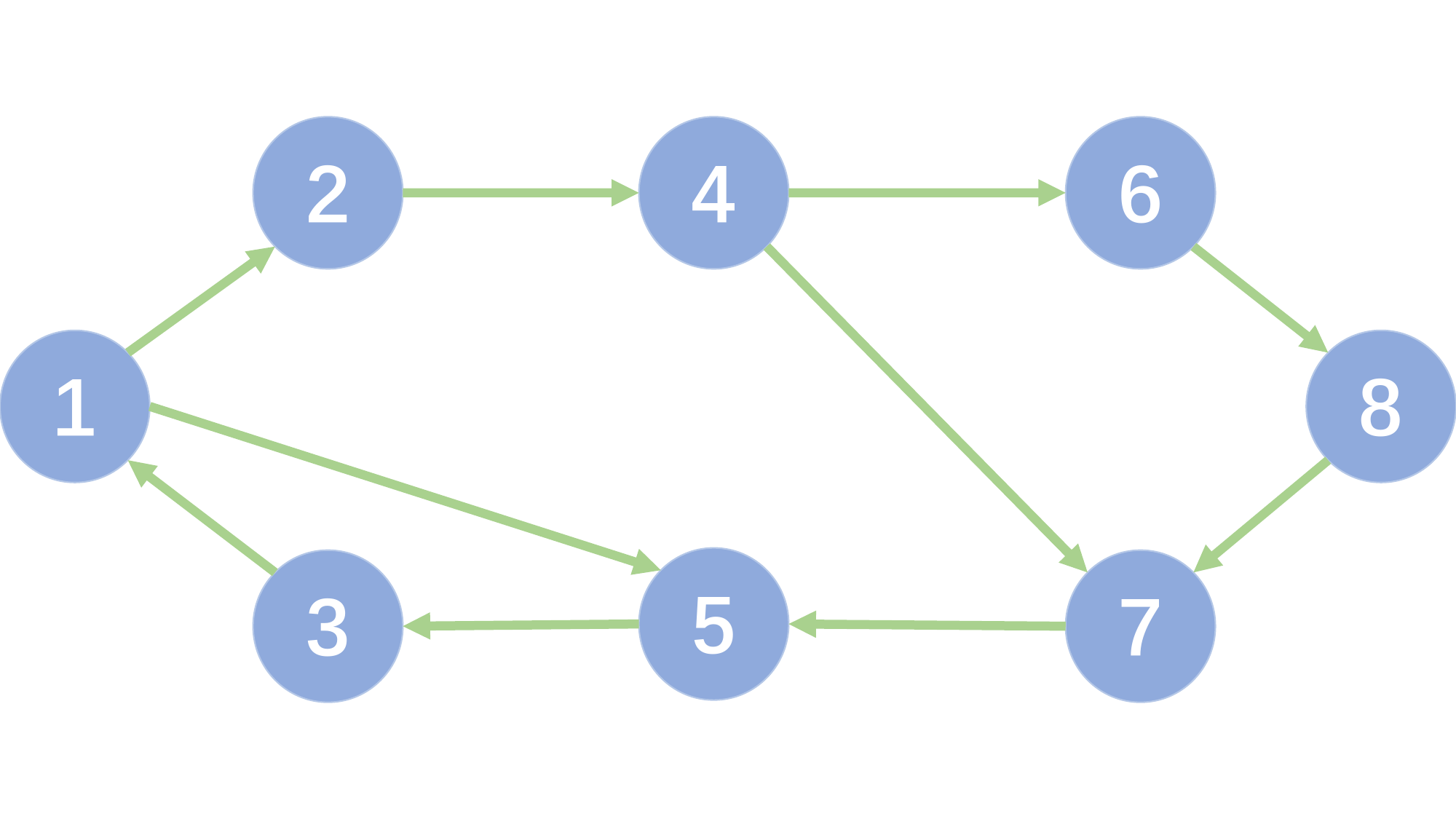}
		\caption{The directed communication network.}
		\label{fig:communication8}
	\end{figure}
	
	Fig. \ref{NESeekingGraph} shows the squared errors from Algorithm \ref{Algorithm 2} with constant and diminishing step-size, as well as the errors from the algorithm in \cite{02Fang2022Directed} under the same gradient computations. \textcolor{blue}{It illustrates that, given the same number of gradient computations, Algorithm \ref{Algorithm 2} with a constant step-size attains higher accuracy compared to the existing algorithm in \cite{02Fang2022Directed}. Moreover, with a diminishing step-size, Algorithm \ref{Algorithm 2} converges to NE more rapidly while achieving similar accuracy, as indicated by the red curve stabilizing faster than the blue one.} Table \ref{NESeekingTable} shows the final error of the algorithms with same communication numbers. The comparative results indicate that Algorithm \ref{Algorithm 2}, even with the same communication numbers, can achieve a smaller error. Moreover, gradient computation number is only half of that of algorithm in \cite{02Fang2022Directed}.
	
	\begin{figure}[H]
		\centering
		\includegraphics[width=0.9\linewidth]{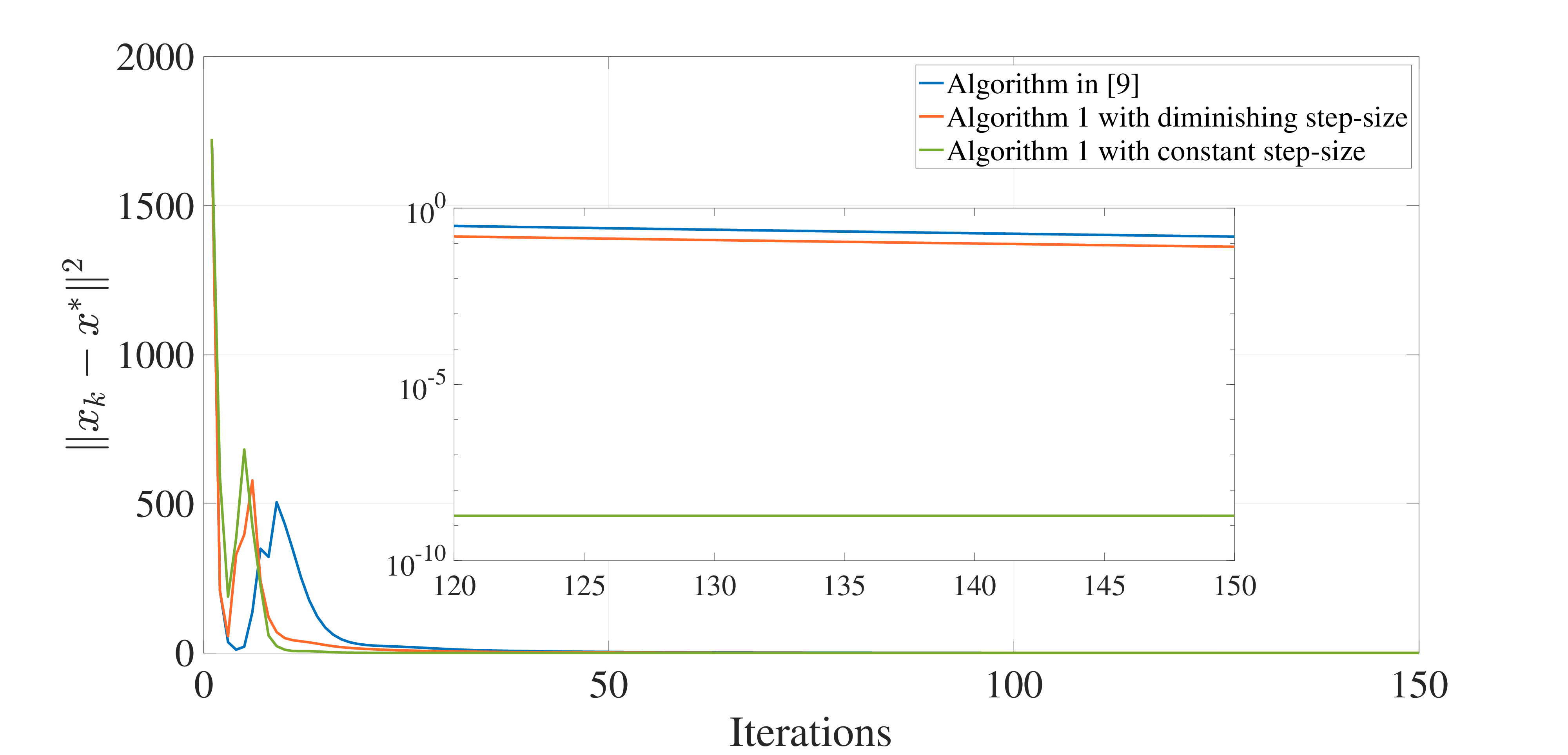}
		\caption{The squared errors of different algorithms with same gradient computation times.}
		\label{NESeekingGraph}
	\end{figure}
	\begin{table}[H]
		\centering
		\Huge
		\caption{The squared errors by different algorithms}
		\label{NESeekingTable}
		\resizebox{\linewidth}{!}{
		\begin{tabular}{|c|c|c|}
			\hline
			& 50 times gradient computation & 150 times communications \\
			\hline
			\makecell{Algorithm 1 with \\$\alpha_k = 5$ and $S_k = 2$} & $1.686487017104607\times 10 ^{-4}$ & $2.234054136428310\times 10 ^{-7}$ \\
			\hline
			\makecell{Algorithm 1 with \\$\alpha_k = \frac{8}{k^{0.65}}$ and $S_k = 2$} & $1.650012506016668$ & $5.83309863858811\times 10 ^{-1}$ \\
			\hline
			algorithm in \cite{02Fang2022Directed} & $3.541863670177937$ &  $1.55742187346232\times 10 ^{-1}$\\
			\hline
		\end{tabular}}
	\end{table}
	
	Figs. \ref{fig:curvefittingfixed} and \ref{fig:curvefittingdiminishing} show the squared errors $\|x_k-x^*\|^2$ with constant/diminishing step-sizes and $S_k = 2$, respectively. The blue curves are obtained by fitting using the MATLAB fitting toolbox. As depicted in Fig. \ref{fig:curvefittingfixed}, the decay of the squared error with constant step-size closely matches the blue curve, which decays at a rate of $O(a^k)$. Similarly, Fig. \ref{fig:curvefittingdiminishing} demonstrates that the decay of the squared error with diminishing step-size closely matches the blue curve, decaying at a rate of $O(\frac{1}{(k+1)^{2r-1}})$.
	\begin{figure}[H]
		\centering
		\includegraphics[width=0.9\linewidth]{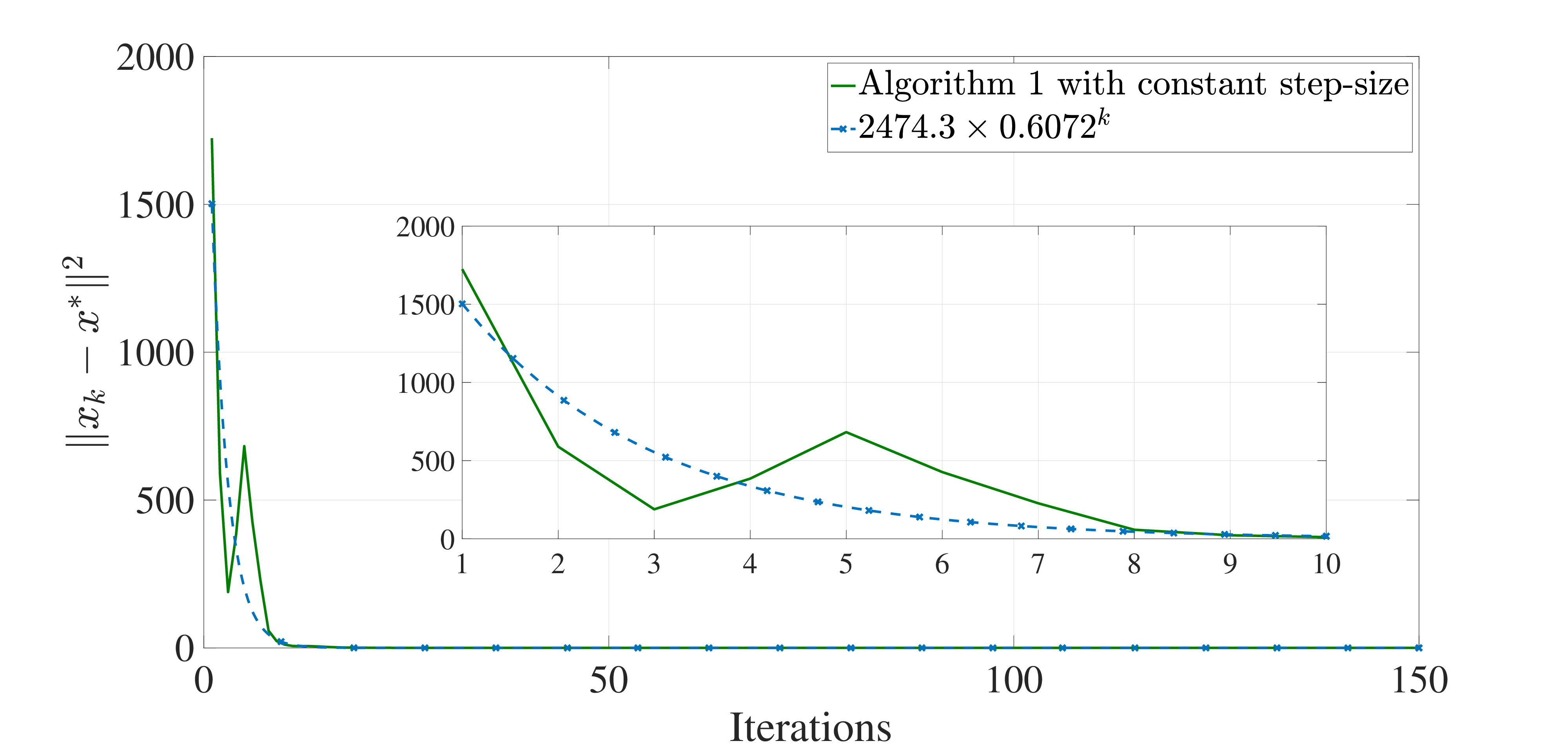}
		\caption{The squared errors by Algorithm \ref{Algorithm 2} with constant step-size.}
		\label{fig:curvefittingfixed}
	\end{figure}
	\begin{figure}[H]
		\centering
		\includegraphics[width=0.9\linewidth]{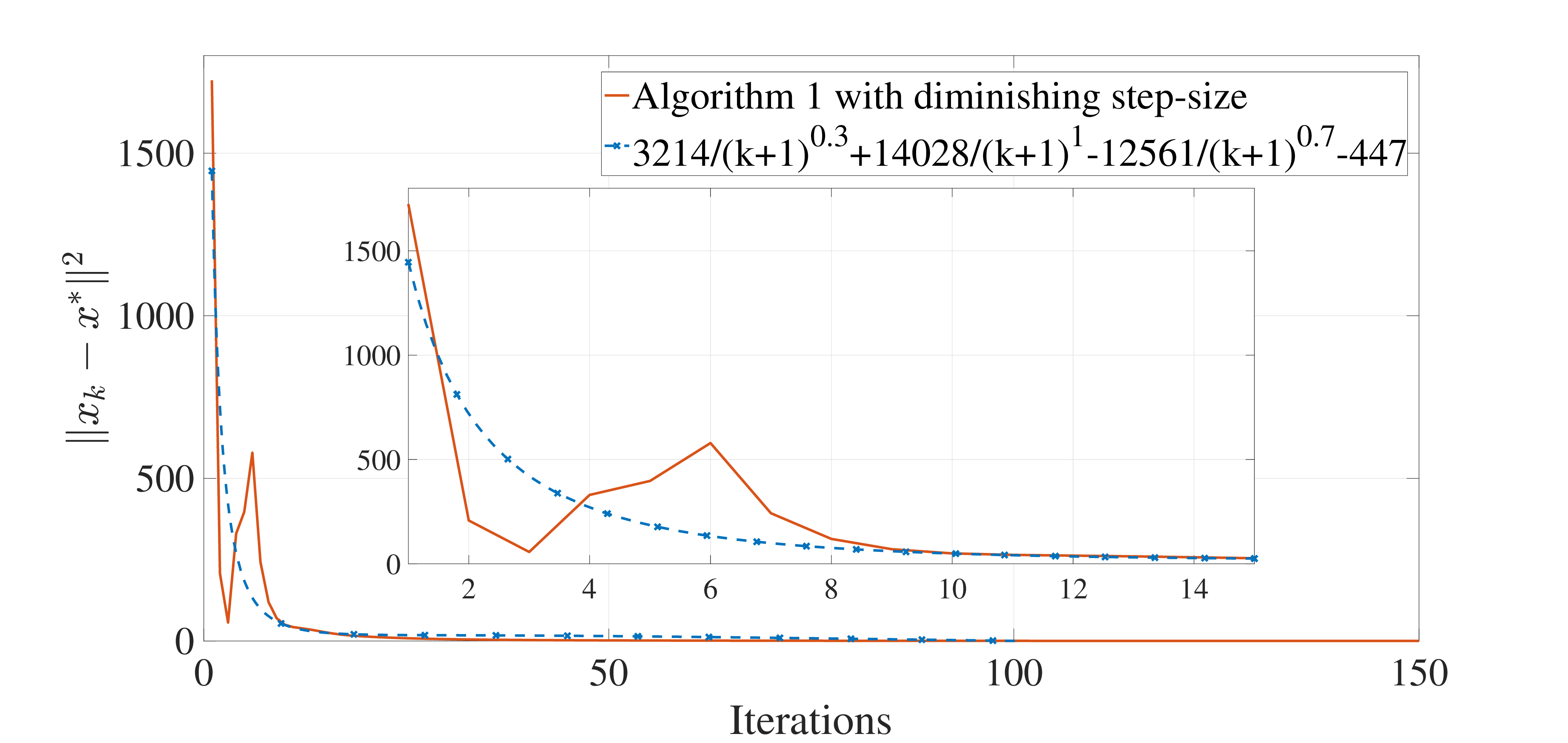}
		\caption{The squared errors by Algorithm \ref{Algorithm 2} with diminishing step-size.}
		\label{fig:curvefittingdiminishing}
	\end{figure}
	
	\subsection{Verification for the resilient fast NE seeking algorithm}
	This subsection verifies the effectiveness of Algorithm \ref{Algorithm 1} by introducing two malicious players into the game. The resulting communication network is depicted in Fig. \ref{fig:communication10}. The trust observations $\tau_{ij,k,l}$ are generated from uniform distributions with the following four types: $\textcircled{\scriptsize{1}}[0.25, 0.85], \textcircled{\scriptsize{2}}[0.4, 1], \textcircled{\scriptsize{3}}[0.15, 0.75], \textcircled{\scriptsize{4}}[0.05, 0.65]$.
	
	Fig. \ref{fig:resineseeking} shows the errors by using Algorithm \ref{Algorithm 1}. Initially, all neighbors are considered to be legitimate, and the subsequent changes in classification are presented in Table \ref{Changes in classification}. For instance, at iteration $2$, player $1$ classifies $3$ as malicious. After iteration $20$, the classification of neighbors by all players remains unchanged. Fig. \ref{fig:resineseekingmean} shows the mean squared errors by running Algorithm \ref{Algorithm 1} 2000 times. It demonstrates that Algorithm \ref{Algorithm 1} can converge to the NE with a small error, thereby validating Theorem \ref{SectionTheorem3}.
	
	\begin{figure}[H]
		\centering
		\includegraphics[width=0.6\linewidth]{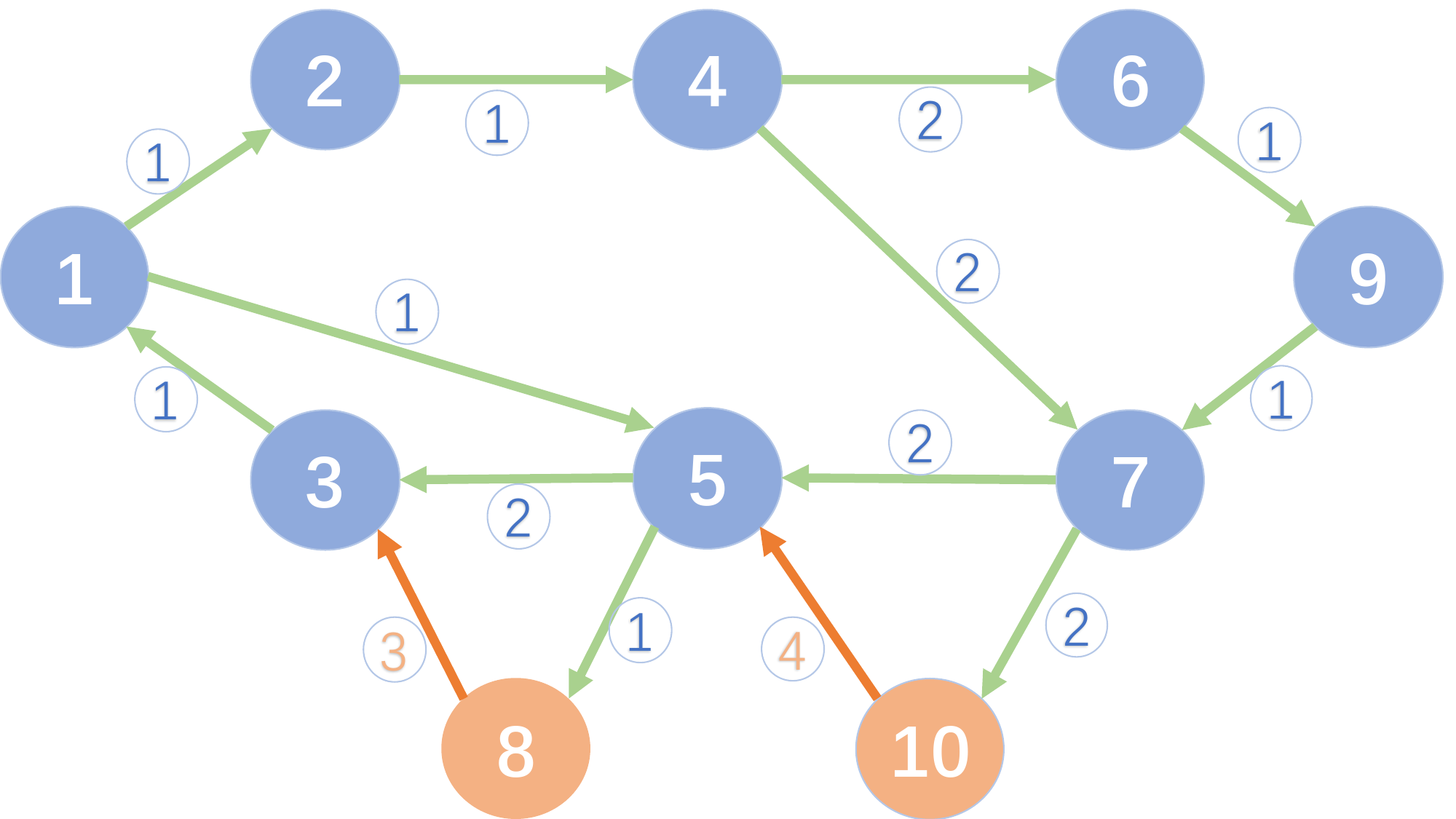}
		\caption{The communication network with malicious players 8 and 10.}
		\label{fig:communication10}
	\end{figure}
	\begin{figure}[H]
		\centering
		\includegraphics[width=1\linewidth]{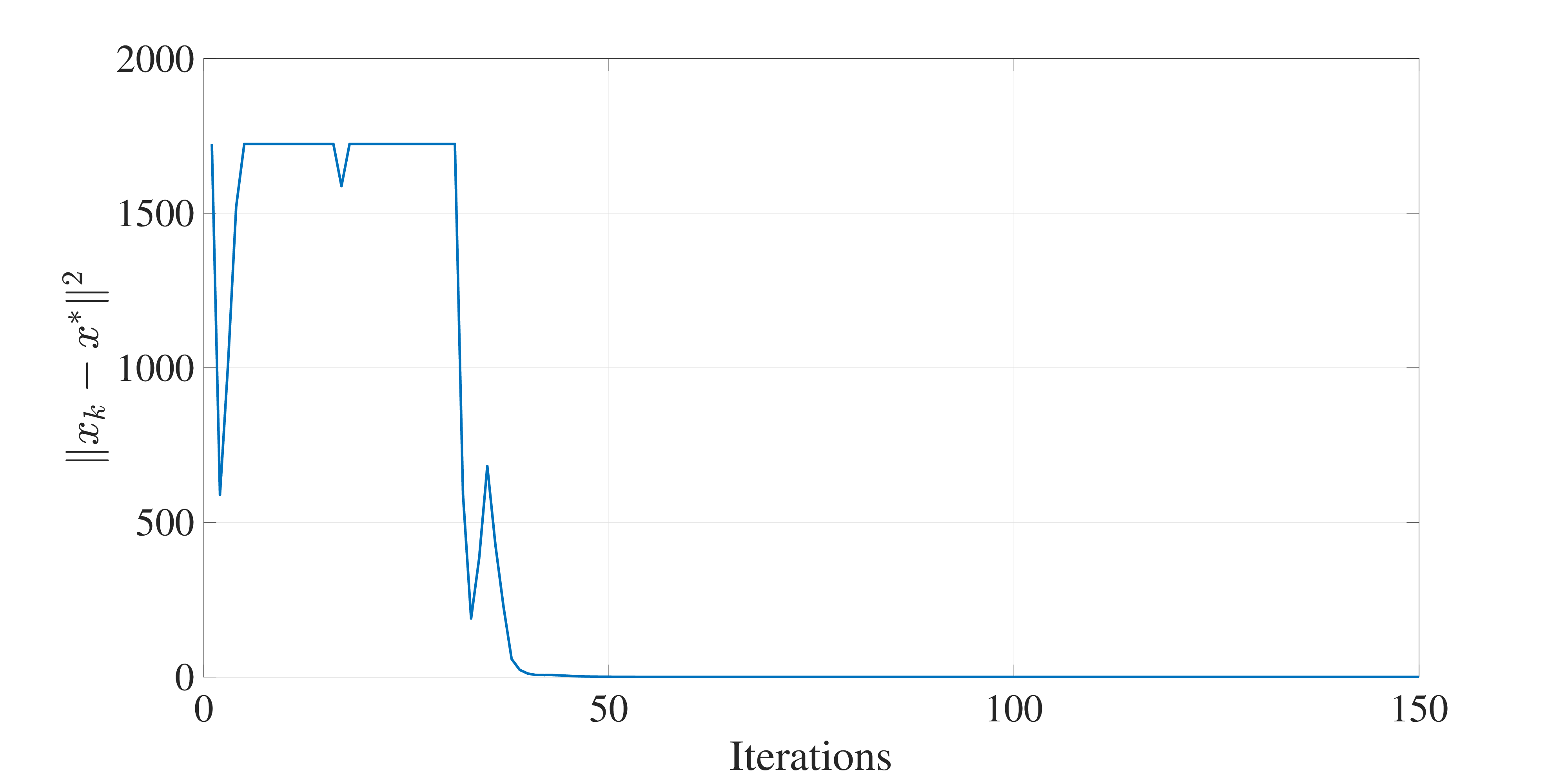}
		\caption{The squared errors by Algorithm \ref{Algorithm 1} with constant step-size.}
		\label{fig:resineseeking}
	\end{figure}
	
\begin{table}[H]
	\small
	\caption{Changes in classification.}
	\label{Changes in classification}
	\begin{minipage}[l]{0.45\linewidth}
		\centering
		\begin{tabular}{|c|c|c|c|}
			\hline
			k & i & j & \makecell{classifi-\\cation} \\
			\hline
			2 & 1 & 3 & $\mathcal{M}$ \\
			\hline
			2 & 5 & 10 & $\mathcal{M}$ \\
			\hline
			2 & 8 & 5 & $\mathcal{M}$ \\
			\hline
			3 & 1 & 3 & $\mathcal{L}$ \\
			\hline
			3 & 3 & 8 & $\mathcal{M}$ \\
			\hline
			3 & 4 & 2 & $\mathcal{M}$ \\
			\hline
		\end{tabular}
	\end{minipage}
	\begin{minipage}[r]{0.45\linewidth}
		\centering
		\begin{tabular}{|c|c|c|c|}
			\hline
			k & i & j & \makecell{classifi-\\cation} \\
			\hline
			5 & 4 & 2 & $\mathcal{L}$ \\
			\hline
			7 & 8 & 5 & $\mathcal{L}$ \\
			\hline
			13 & 3 & 8 & $\mathcal{L}$ \\
			\hline
			14 & 3 & 8 & $\mathcal{M}$ \\
			\hline
			16 & 3 & 8 & $\mathcal{L}$ \\
			\hline
			20 & 3 & 8 & $\mathcal{M}$ \\
			\hline
		\end{tabular}
	\end{minipage}
\end{table}

\begin{figure}[H]
	\centering
	\includegraphics[width=1\linewidth]{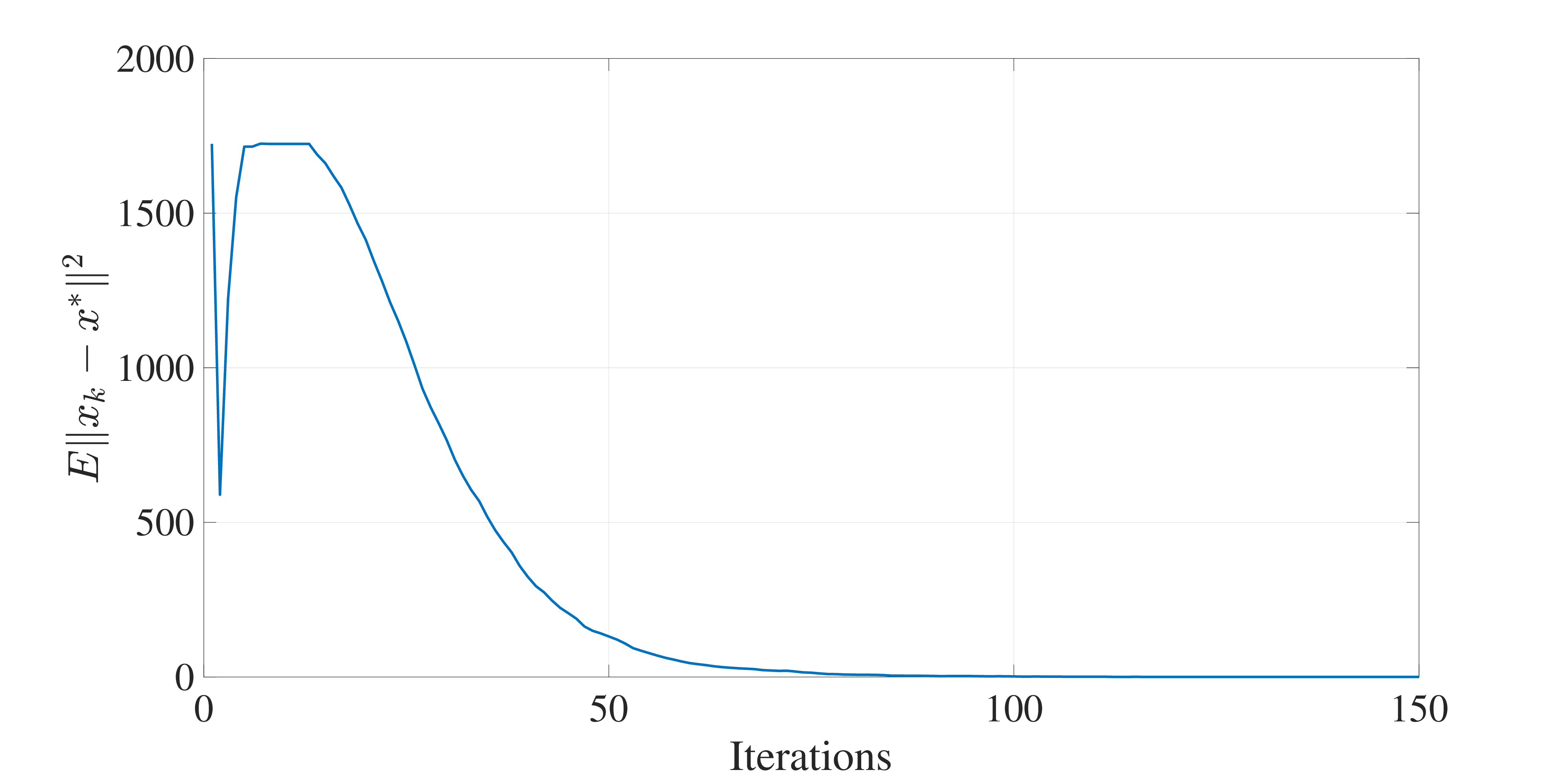}
	\caption{The mean squared errors by Algorithm \ref{Algorithm 1} with constant step-size.}
	\label{fig:resineseekingmean}
\end{figure}

\textcolor{blue}{While the case study is conducted under ideal communication conditions to clearly illustrate the performance in terms of mean sqaure errors and convergence speed, real-world aggregative games may involve communication delays, packet loss, or asynchronous updates. These factors can affect neighbor identification, as well as the convergence rate and accuracy of the NE seeking process.}
	
\section{conclusion}
	In this paper, we have proposed a resilient fast NE seeking algorithm for aggregative games with unbalanced directed network. We have introduced a trustworthiness probabilistic framework that describes the behavior of players and extended it to account for heterogeneity among players. By employing a constant step-size and a compressible unbalanced network information matrix, we have derived a linearly convergent NE seeking algorithm for the game without  malicious players. By integrating a multi-round communication mechanism and diminishing step-size, we have achieved an algorithm with rigorous convergence. Furthermore, by integrating the multi-round communication mechanism and a trustworthiness broadcast mechanism, we have embedded the trustworthiness probabilistic framework into the fast NE seeking algorithm for the game with malicious players, resulting in a resilient fast NE seeking algorithm, and analyzed its convergence. \textcolor{blue}{The proposed algorithm may be further improved by leveraging the momentum term presented in \cite{chen2024achieving} to improve the accuracy or by employing the techniques in \cite{nguyen2025distributed} to adapt to time-varying networks. Future work will also focus on extending the probabilistic trustworthiness framework to settings with unbounded trust observations and adaptive malicious players, as well as incorporating real-world constraints into the proposed method.} 
	
\section*{Acknowledgment}
{The authors would like to thank Prof. Angelia Nedi\'c from Arizona State University for her help and relevant discussions.}

\section*{References}
\bibliographystyle{IEEEtran}
\bibliography{reference}

@ARTICLE{02Fang2022Directed,
	author={Fang, Xiao and Wen, Guanghui and Zhou, Jialing and Lü, Jinhu and Chen, Guanrong},
	journal={IEEE Transactions on Circuits and Systems I: Regular Papers}, 
	title={Distributed {Nash} Equilibrium Seeking for Aggregative Games With Directed Communication Graphs}, 
	year={2022},
	volume={69},
	number={8},
	pages={3339-3352}}

@book{04bacsar1998dynamic,
	title={Dynamic noncooperative game theory},
	author={Ba{\c{s}}ar, Tamer and Olsder, Geert Jan},
	year={1998},
	publisher={SIAM}
}

@book{facchinei2003finite,
	title={Finite-dimensional variational inequalities and complementarity problems},
	author={Facchinei, Francisco and Pang, Jong-Shi},
	year={2003},
	publisher={Springer}
}

@article{xi2018linear,
	title={Linear convergence in optimization over directed graphs with row-stochastic matrices},
	author={Xi, Chenguang and Mai, Van Sy and Xin, Ran and Abed, Eyad H and Khan, Usman A},
	journal={IEEE Transactions on Automatic Control},
	volume={63},
	number={10},
	pages={3558--3565},
	year={2018},
	publisher={IEEE}
}

@book{bertsekas2003convex,
	title={Convex analysis and optimization},
	author={Bertsekas, Dimitri and Nedic, Angelia and Ozdaglar, Asuman},
	volume={1},
	year={2003},
	publisher={Athena Scientific}
}

@article{saad2012game,
	title={Game-theoretic methods for the smart grid: An overview of microgrid systems, demand-side management, and smart grid communications},
	author={Saad, Walid and Han, Zhu and Poor, H Vincent and Basar, Tamer},
	journal={IEEE Signal Processing Magazine},
	volume={29},
	number={5},
	pages={86--105},
	year={2012},
	publisher={IEEE}
}

@article{koshal2016distributed,
	title={Distributed algorithms for aggregative games on graphs},
	author={Koshal, Jayash and Nedi{\'c}, Angelia and Shanbhag, Uday V},
	journal={Operations Research},
	volume={64},
	number={3},
	pages={680--704},
	year={2016},
	publisher={INFORMS}
}

@techreport{simpson2006cost,
	title={Cost-benefit analysis of plug-in hybrid electric vehicle technology},
	author={Simpson, Andrew},
	year={2006},
	institution={National Renewable Energy Lab.(NREL), Golden, CO (United States)}
}

@article{yemini2021characterizing,
	title={Characterizing trust and resilience in distributed consensus for cyberphysical systems},
	author={Yemini, Michal and Nedi{\'c}, Angelia and Goldsmith, Andrea J and Gil, Stephanie},
	journal={IEEE Transactions on Robotics},
	volume={38},
	number={1},
	pages={71--91},
	year={2021},
	publisher={IEEE}
}

@inproceedings{bianchi2020nash,
	title={Nash equilibrium seeking under partial-decision information over directed communication networks},
	author={Bianchi, Mattia and Grammatico, Sergio},
	booktitle={2020 59th IEEE Conference on Decision and Control (CDC)},
	pages={3555--3560},
	year={2020},
	organization={IEEE}
}

@article{yemini2025resilient,
	title={Resilient distributed optimization for multi-agent cyberphysical systems},
	author={Yemini, Michal and Nedi{\'c}, Angelia and Goldsmith, Andrea J and Gil, Stephanie},
	journal={IEEE Transactions on Automatic Control},
	year={2025},
	publisher={IEEE}
}

@article{dayi2024fast,
	title={Fast Distributed Optimization over Directed Graphs under Malicious Attacks using Trust},
	author={Day{\i}, Arif Kerem and Akg{\"u}n, Orhan Eren and Gil, Stephanie and Yemini, Michal and Nedi{\'c}, Angelia},
	journal={arXiv preprint arXiv:2407.06541},
	year={2024}
}

@article{huang2021cyber,
	title={Cyber-physical systems with multiple denial-of-service attackers: A game-theoretic framework},
	author={Huang, Yabing and Zhao, Jun},
	journal={IEEE Transactions on Circuits and Systems I: Regular Papers},
	volume={68},
	number={10},
	pages={4349--4359},
	year={2021},
	publisher={IEEE}
}

@article{huang2022linearly,
	title={A linearly convergent distributed Nash equilibrium seeking algorithm for aggregative games},
	author={Huang, Shijie and Lei, Jinlong and Hong, Yiguang},
	journal={IEEE Transactions on Automatic Control},
	volume={68},
	number={3},
	pages={1753--1759},
	year={2022},
	publisher={IEEE}
}

@inproceedings{zhu2023distributed,
	title={Distributed Nash Equilibrium Seeking Algorithm for Aggregative Games with Time-Varying Directed Communication Networks},
	author={Zhu, Rui and Wang, Fuyong and Liu, Zhongxin and Chen, Zengqiang},
	booktitle={Chinese Conference on Swarm Intelligence and Cooperative Control},
	pages={1--11},
	year={2023},
	organization={Springer}
}

@article{zhu2022asynchronous,
	title={Asynchronous networked aggregative games},
	author={Zhu, Rongping and Zhang, Jiaqi and You, Keyou and Ba{\c{s}}ar, Tamer},
	journal={Automatica},
	volume={136},
	pages={110054},
	year={2022},
	publisher={Elsevier}
}

@article{gadjov2023algorithm,
	title={An algorithm for resilient Nash equilibrium seeking in the partial information setting},
	author={Gadjov, Dian and Pavel, Lacra},
	journal={IEEE Transactions on Control of Network Systems},
	volume={10},
	number={4},
	pages={1645--1655},
	year={2023},
	publisher={IEEE}
}

@article{gil2017guaranteeing,
	title={Guaranteeing spoof-resilient multi-robot networks},
	author={Gil, Stephanie and Kumar, Swarun and Mazumder, Mark and Katabi, Dina and Rus, Daniela},
	journal={Autonomous Robots},
	volume={41},
	pages={1383--1400},
	year={2017},
	publisher={Springer}
}

@article{carnevale2024tracking,
	title={Tracking-based distributed equilibrium seeking for aggregative games},
	author={Carnevale, Guido and Fabiani, Filippo and Fele, Filiberto and Margellos, Kostas and Notarstefano, Giuseppe},
	journal={IEEE Transactions on Automatic Control},
	year={2024},
	publisher={IEEE}
}

@article{bianchi2022fast,
	title={Fast generalized Nash equilibrium seeking under partial-decision information},
	author={Bianchi, Mattia and Belgioioso, Giuseppe and Grammatico, Sergio},
	journal={Automatica},
	volume={136},
	pages={110080},
	year={2022},
	publisher={Elsevier}
}

@book{okuguchi2012theory,
	title={The theory of oligopoly with multi-product firms},
	author={Okuguchi, Koji and Szidarovszky, Ferenc},
	year={2012},
	publisher={Springer Science \& Business Media}
}

@article{wen2020distributed,
	title={Distributed neuro-dynamic algorithm for price-based game in energy consumption system},
	author={Wen, Shifan and He, Xing and Huang, Tingwen},
	journal={Neural Processing Letters},
	volume={51},
	pages={559--575},
	year={2020},
	publisher={Springer}
}

@article{jakovetic2014fast,
	title={Fast distributed gradient methods},
	author={Jakoveti{\'c}, Du{\v{s}}an and Xavier, Joao and Moura, Jos{\'e} MF},
	journal={IEEE Transactions on Automatic Control},
	volume={59},
	number={5},
	pages={1131--1146},
	year={2014},
	publisher={IEEE}
}

@article{nedic2014distributed,
	title={Distributed optimization over time-varying directed graphs},
	author={Nedi{\'c}, Angelia and Olshevsky, Alex},
	journal={IEEE Transactions on Automatic Control},
	volume={60},
	number={3},
	pages={601--615},
	year={2014},
	publisher={IEEE}
}

@article{ye2021differentially,
	title={Differentially private distributed Nash equilibrium seeking for aggregative games},
	author={Ye, Maojiao and Hu, Guoqiang and Xie, Lihua and Xu, Shengyuan},
	journal={IEEE Transactions on Automatic Control},
	volume={67},
	number={5},
	pages={2451--2458},
	year={2021},
	publisher={IEEE}
}

@article{wang2024differentially,
	title={Differentially-private distributed algorithms for aggregative games with guaranteed convergence},
	author={Wang, Yongqiang and Nedi{\'c}, Angelia},
	journal={IEEE Transactions on Automatic Control},
	year={2024},
	publisher={IEEE}
}

@article{li2019efficient,
  title={Efficient strategy computation in zero-sum asymmetric information repeated games},
  author={Li, Lichun and Shamma, Jeff S},
  journal={IEEE Transactions on Automatic Control},
  volume={65},
  number={7},
  pages={2785--2800},
  year={2019},
  publisher={IEEE}
}

@article{tang2024deception,
  author={Tang, Michael and Javed, Umar and Chen, Xudong and Krstic, Miroslav and Poveda, Jorge I.},
  journal={IEEE Transactions on Automatic Control}, 
  title={Deception in Nash Equilibrium Seeking}, 
  year={2025},
  volume={},
  number={},
  pages={1-16},
  keywords={Games;Heuristic algorithms;Nash equilibrium;Cost function;Vectors;Stochastic processes;Multi-agent systems;Costs;Transient analysis;Training},
  doi={10.1109/TAC.2025.3582524}}

@article{feng2020attack,
  title={Attack-resilient distributed algorithms for exponential Nash equilibrium seeking},
  author={Feng, Zhi and Hu, Guoqiang},
  journal={arXiv preprint arXiv:2009.10666},
  year={2020}
}

@article{zhang2020game,
  title={Game-Theoretic Analysis of Cyber Deception: Evidence-Based Strategies and Dynamic Risk Mitigation},
  author={Zhang, Tao and Huang, Linan and Pawlick, Jeffrey and Zhu, Quanyan},
  journal={Modeling and Design of secure Internet of Things},
  pages={27--58},
  year={2020},
  publisher={Wiley Online Library}
}

@article{cai2022resilient,
  title={Resilient Nash equilibrium seeking in multiagent games under false data injection attacks},
  author={Cai, Xin and Xiao, Feng and Wei, Bo},
  journal={IEEE Transactions on Systems, Man, and Cybernetics: Systems},
  volume={53},
  number={1},
  pages={275--284},
  year={2022},
  publisher={IEEE}
}

@article{chen2024achieving,
  title={Achieving linear convergence in distributed aggregative optimization over directed graphs},
  author={Chen, Liyuan and Wen, Guanghui and Fang, Xiao and Zhou, Jialing and Cao, Jinde},
  journal={IEEE Transactions on Systems, Man, and Cybernetics: Systems},
  volume={54},
  number={7},
  pages={4529--4541},
  year={2024},
  publisher={IEEE}
}

@article{nguyen2025distributed,
  title={Distributed Nash equilibrium seeking over time-varying directed communication networks},
  author={Nguyen, Duong Thuy Anh and Nguyen, Duong Tung and Nedi{\'c}, Angelia},
  journal={IEEE Transactions on Control of Network Systems},
  year={2025},
  publisher={IEEE}
}

@article{akgun2025learning,
  title={Learning Trust over Directed Graphs in the Presence of Adversaries},
  author={Akg{\"u}n, Orhan Eren and Day{\i}, Arif Kerem and Nedi{\'c}, Angelia and Gil, Stephanie},
  journal={IEEE Transactions on Automatic Control},
  year={2025},
  publisher={IEEE}
}

@article{han2025byzantine,
  title={Byzantine-robust Distributed Stochastic Non-convex Optimization in Adversarial Environments over Unbalanced Networks},
  author={Han, Dongyu and Liu, Kun and Xia, Yuanqing and Xie, Lihua},
  journal={IEEE Transactions on Automatic Control},
  year={2025},
  publisher={IEEE}
}

\begin{IEEEbiography}[{\includegraphics[width=1in,height=1.25in,clip,keepaspectratio]{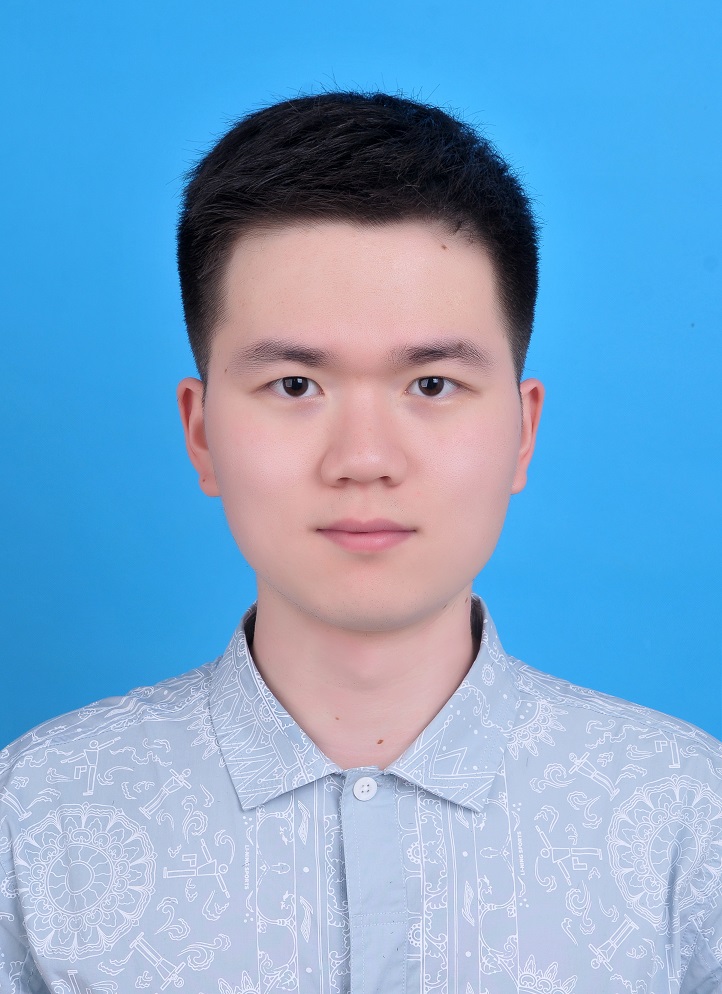}}]{Kai-Yuan Guo}received the B.S. degree from Huazhong University of Science and Technology, Wuhan, China, in 2022. He is currently working toward the Ph.D degree in control theory and control engineering with the Huazhong University of Science and Technology, Wuhan, China. His research interests include game theory and distributed optimization in the smart grid.
\end{IEEEbiography}

\begin{IEEEbiography}[{\includegraphics[width=1in,height=1.25in,clip,keepaspectratio]{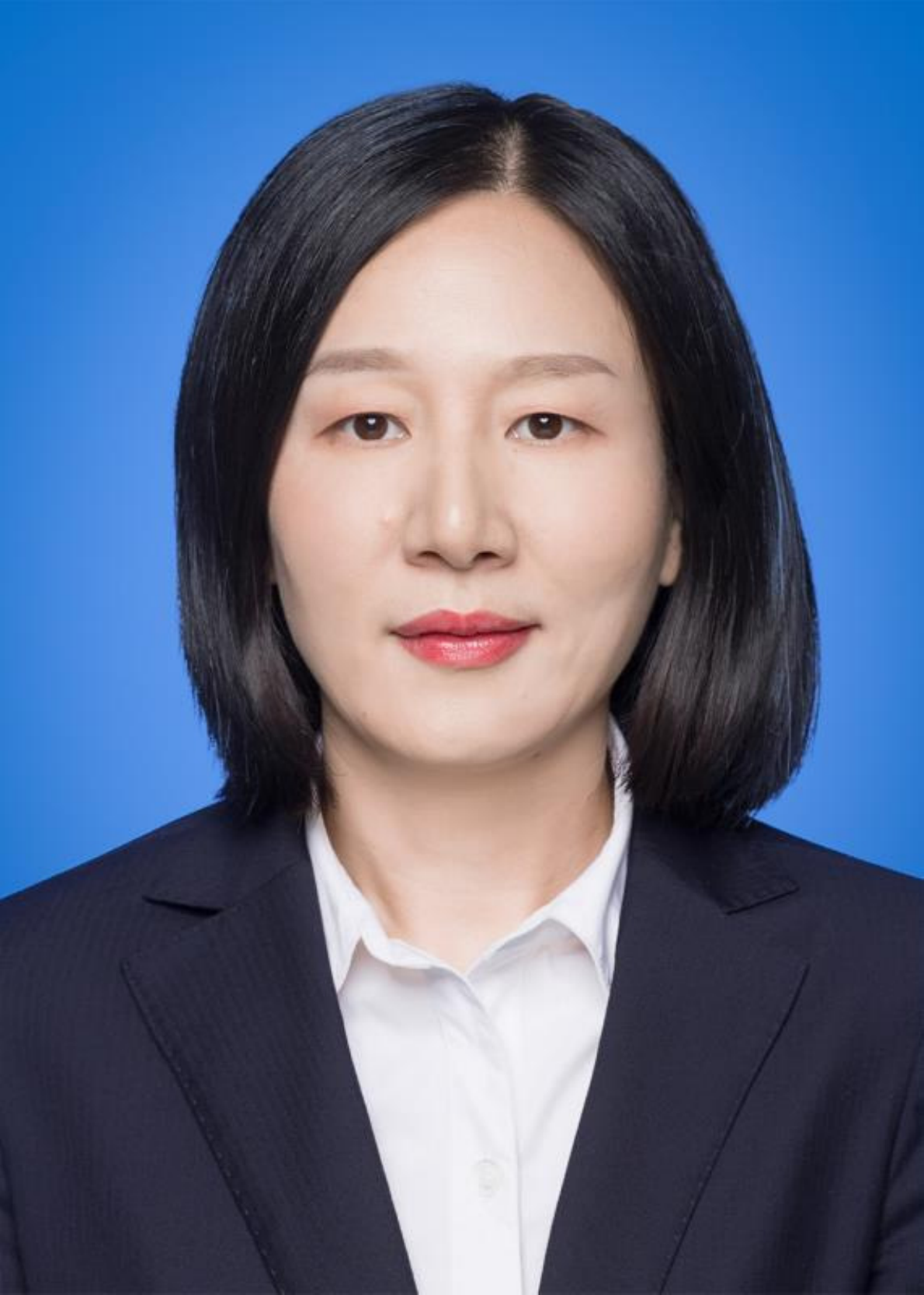}}]{Yan-Wu Wang}
	(M'10--SM'13) received the B.S. degree in automatic control, the M.S. degree and the Ph.D. degree in control theory and control engineering from Huazhong University of Science and Technology (HUST), Wuhan, China, in 1997, 2000, and 2003, respectively. She has been a Professor with the School of Artificial Intelligence and Automation, HUST, since 2009. Currently, she is also with the
	Key Laboratory of Image Processing and Intelligent Control, Ministry of Education, China. Her research interests include hybrid systems, cooperative control, and multi-agent systems with applications in the smart grid.
	
	Dr. Wang was a recipient of several awards, including the first prize of Hubei Province Natural Science in 2014, the first prize of the Ministry of Education of China in 2005, and the Excellent PhD Dissertation of Hubei Province in 2004, China. In 2008, she was awarded the title of ``New Century Excellent Talents" by the Chinese Ministry of Education.
\end{IEEEbiography}

\begin{IEEEbiography} [{\includegraphics[width=1in,height=1.25in,clip,keepaspectratio]{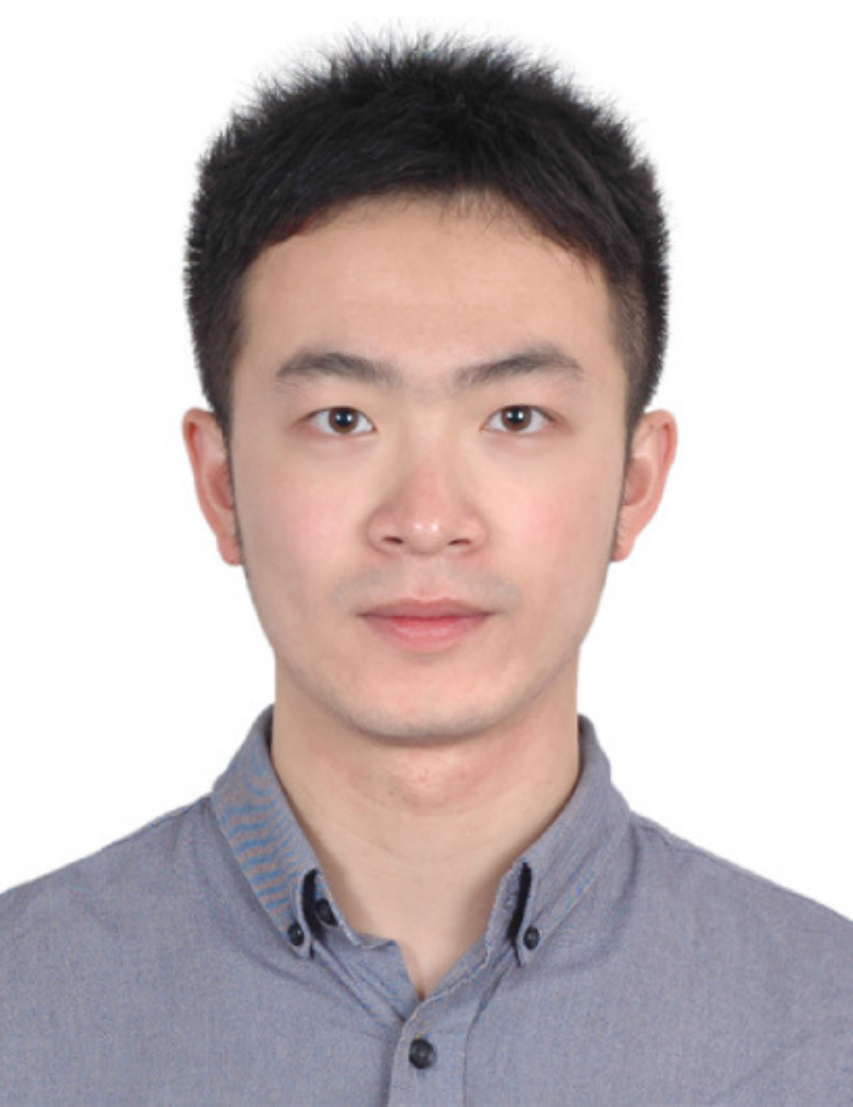}}] {Xiao-Kang Liu}(M'20) received the B.S. degree in automatic control and the Ph.D. degree in control science and engineering from Huazhong
	University of Science and Technology (HUST), Wuhan, China, in 2014 and 2019, respectively. From Jul. 2017 to Aug. 2018, he was a visiting scholar with the Department of Electrical, Computer, and Biomedical Engineering, University of Rhode Island, RI, USA. From Oct. 2019 to Feb. 2021, he was a postdoctoral research fellow with the School of Electrical \& Electronic Engineering, Nanyang Technological University (NTU), Singapore. He is currently working as an Associate Professor with the School of Artificial Intelligence and Automation, HUST. His research interests include hybrid control, distributed control and optimization, DC microgrids.
\end{IEEEbiography}

\begin{IEEEbiography}
	[{\includegraphics[width=1in,height=1.25in,clip,keepaspectratio]{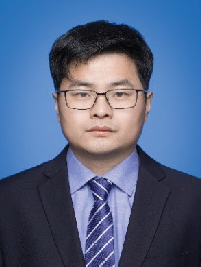}}]{Zhi-Wei Liu} (M'14--SM'22) received the B.S. degree in Information Management and Information System from Southwest Jiaotong University, Chengdu, China, in 2004, and the Ph.D. degree in Control Science and Engineering from the Huazhong University of Science and Technology, Wuhan, China, in 2011.
	
	He is currently a Professor with the School of Artificial Intelligence and Automation at the Huazhong University of Science and Technology, Wuhan, China. His current research interests include cooperative control and optimization of distributed network systems.
\end{IEEEbiography}

\end{document}